\theoremstyle{plain}
\newtheorem{lem}{Lemma}
\newtheorem{prop}[lem]{Proposition}
\newtheorem{thm}[lem]{Theorem}
\theoremstyle{definition}
\newtheorem{remark}[lem]{Remark}
\newtheorem{defn}[lem]{Definition}
\newtheorem{assum}[lem]{Assumption}
\newcommand{\R}{\mathbb{R}}
\renewcommand{\P}{\mathbb{P}}
\newcommand{\E}{\mathbb{E}}
\newcommand{\F}{\mathcal{F}}
\newcommand{\N}{\mathbb{N}}
\newcommand{\Bcal}{{\mathcal B}}
\newcommand{\Dcal}{{\mathcal D}}
\newcommand{\Ecal}{{\mathcal E}}
\newcommand{\Fcal}{{\mathcal F}}
\newcommand{\Ncal}{{\mathcal N}}
\newcommand{\Pcal}{{\mathcal P}}
\newcommand{\Wcal}{{\mathcal W}}
\numberwithin{lem}{section}
\newcommand{\oset}[3][0.6ex]{%
  \mathrel{\mathop{#3}\limits^{
    \vbox to#1{\kern-2\ex@
    \hbox{$\scriptstyle#2$}\vss}}}}
\renewenvironment{proof}[1][\proofname] {\par\pushQED{\qed}\normalfont\topsep6\p@\@plus6\p@\relax\trivlist\item[\hskip\labelsep\bfseries#1\@addpunct{.}]\ignorespaces}{\popQED\endtrivlist\@endpefalse}
\title{Stochastic factors can matter: \\ improving robust growth under ergodicity}
\author{B\'alint Binkert\footnote{Department of Mathematics, ETH Zurich, \href{mailto:balint.binkert@gmail.com}{balint.binkert@gmail.com}}, David Itkin\footnote{Department of Statistics, London School of Economics and Political Science, \href{mailto:d.itkin@lse.ac.uk}{d.itkin@lse.ac.uk}}, Paul Mangers Bastian\footnote{Department of Statistics, London School of Economics and Political Science, \href{mailto:P.J.Mangers-Bastian@lse.ac.uk}{p.j.mangers-bastian@lse.ac.uk}}, Josef Teichmann\footnote{Department of Mathematics, ETH Zurich, \href{mailto:josef.teichmann@math.ethz.ch}{josef.teichmann@math.ethz.ch}}}
\pgfplotsset{compat=1.18}
\begin{document}

\maketitle

\begin{abstract}
    Drifts of asset returns are notoriously difficult to model accurately and, yet, trading strategies obtained from portfolio optimization are very sensitive to them. To mitigate this well-known phenomenon we study robust growth-optimization in a high-dimensional incomplete market under drift uncertainty of the asset price process $X$, under an additional ergodicity assumption, which constrains but does not fully specify the drift in general. The class of admissible models allows $X$ to depend on a multivariate stochastic factor $Y$ and fixes (a) their joint volatility structure, (b) their long-term joint ergodic density and (c) the dynamics of the stochastic factor process $Y$. A principal motivation of this framework comes from pairs trading, where $X$ is the spread process and models with the above characteristics are commonplace. Our main results determine the robust optimal growth rate, construct a worst-case admissible model and characterize the robust growth-optimal strategy via a solution to a certain partial differential equation (PDE). We demonstrate that utilizing the stochastic factor leads to improvement in robust growth complementing the conclusions of the previous study \cite{itkin2025ergodic}, which additionally robustified the dynamics of the stochastic factor leading to $Y$-independent optimal strategies. Our analysis leads to new financial insights, quantifying the improvement in growth the investor can achieve by optimally incorporating stochastic factors into their trading decisions. We illustrate our theoretical results on several numerical examples including an application to pairs trading. 
\end{abstract}

\paragraph*{Keywords:} Robust finance, Growth maximization, Pairs trading, Statistical arbitrage, Stochastic factors, Calculus of variations, Ergodic process

\paragraph*{MSC 2020 Classification:} 91G10, 60G10, 60J46



\section{Introduction}
In this paper we study an asymptotic growth-optimization problem under model uncertainty and ergodicity. Our focus is on an investor who seeks stability in (discounted) asset prices and trades on this stability persisting, which is commonplace in pairs trading and certain statistical arbitrage strategies. In practice, investors estimate asset volatilities, distributions of asset returns and obtain noisy factors that provide partial information about price movements. However, direct estimation of the drifts of tradeable securities is usually inaccurate, due to the low signal-to-noise ratios present in financial data. Postulating a parametric model for asset returns and estimating only a few select parameters that pin down the drift process may seem plausible in some cases, but for the purpose of portfolio optimization directly leads to very strong structural assumptions on the form of the ensuing optimal strategy. For instance, in the case of $d$ risky assets $dX_{t,i} = \mu_i(X_t,Y_t)dt + \sum_{j=1}^d\sigma_{ij}(X_t,Y_t) dW_{t,j}$ depending on a stochastic factor $Y$ and a risk-free asset normalized to one, the growth-optimal holdings (under full information) are 
\begin{equation} \label{eqn:growth_optimal}
    \theta_t = (\sigma(X_t,Y_t) \sigma(X_t,Y_t)^\top)^{-1}\mu(X_t,Y_t),
\end{equation} which depend in a linear way on the chosen parametric drift specification.  

To this end, we study a robust growth-optimization problem with drift uncertainty,
\begin{equation} \label{eqn:lambda_P}
    \lambda_{\Pcal} = \sup_{\theta \in \Theta}\inf_{\P \in \Pcal} g(\theta;\P).
\end{equation}
in an incomplete market setup incorporating an $m$-dimensional stochastic factor $Y$, which is not traded but influences the dynamics of $X$. Here, $\Theta$ is the set of all admissible strategies modelling full information on $(X,Y)$, $\Pcal$ denotes the class of models we robustify over, which consists of all admissible probability measures $\P$ governing the dynamics of $(X,Y)$, and $g(\theta;\P)$ is the investor's asymptotic growth rate when using the strategy $\theta$ under the law $\P$. The class $\Pcal$ is specified using three inputs constraining them; a matrix-valued function $c(x,y)$ which specifies the joint volatility structure of $(X,Y)$, a positive function $p(x,y)$  which encodes the aforementioned price stability by being the long-run ergodic density of $(X,Y)$ and a vector valued function $b_Y(x,y)$ specifying the drift of the stochastic factor $Y$ (the precise definitions of these quantities are in Section~\ref{sec:problem_formulation}). As such, only the drift of $X$ is parametrically unspecified in this framework, but of course constrained by the inputs.

This setup builds on the previous papers \cite{kardaras2021ergodic,itkin2025ergodic} where similar problems were studied, but for different classes of admissible measures. Indeed, \cite{kardaras2021ergodic} studied this problem in the complete market setup, without a stochastic factor process $Y$, while \cite{itkin2025ergodic} studied the incomplete market case but also robustified over the drift of $Y$. The latter study found that the optimal strategy when robustifying over the drifts of both $X$ and $Y$ is independent of $Y$. This indicates that the class of measures which encodes uncertainty for the drifts of both $X$ and $Y$ is so large, that it admits an adversarial worst-case measure under which the stochastic factor $Y$ becomes superfluous. The main new feature in this work is allowing for a third input $b_Y$, which encodes additional information about the stochastic factor process $Y$ and which puts a further constraint on admissible measures. The results of this work demonstrate that in this setting the robust growth-optimal strategy depends on both $X$ and $Y$ showing that stochastic factors can indeed improve robust growth. However, if the investor is mistaken and the true drift of $Y$ differs from what is assumed, the investor opens themselves up to underperformance relative to the optimal strategy from \cite{itkin2025ergodic}; see Section~\ref{sec:discussion} for a detailed discussion.

In practice, investors may be more confident in their estimates for the dynamics of stochastic factors than for asset returns. Indeed, certain factors may have higher signal-to-noise ratios than asset returns and have additional data available to perform statistical estimation. An example of this type includes stochastic volatility, for which derivatives data and volatility indices, such as the VIX, provide additional data for estimation, in addition to price data. In other cases factors may be exogenously fixed by the investor leading to known dynamics by construction. Additionally, even in cases when the investor's estimates for the drift of $Y$ are relatively low-confidence, it may be of interest to study the different conclusions and optimal strategies specified under the different input frameworks. Indeed, the functional form of the optimal strategy \eqref{eqn:growth_optimal} is the same regardless of any assumed dynamics for $Y$, but the holdings distribution depends on the stochastic factor's law. 
This leads to a more subtle and complex relationship between the drift of $Y$ and the growth-optimal strategy than between the drift of $X$ and the growth-optimal strategy. The robust framework of this paper allows us to better understand this relationship. Another important message of our analysis: stochastic factors matter if one has good knowledge on their dynamics.

Our main results in Section~\ref{sec:main_results} solve, in a general high-dimensional market, for the robust growth rate $\lambda_{\Pcal}$ and characterize the robust growth-optimal strategy $\theta^*$, which is specified by a feedback form function $\phi^*(x,y)$. Our approach uses the calculus of variations to tackle the optimization problem and leads us to Euler--Lagrange partial differential equations (PDEs) which we prove $\phi^*$ satisfies. The PDE depends only on derivatives in $x$, so that $y$ can be treated as a parameter; that is, every state of the factor process has its own associated PDE that specifies the optimal strategy one should use when $Y_t = y$. 
We then apply our general framework to several examples including a high-dimensional Gaussian specification and an extended look at a pairs trading application. In the context of pairs trading, we robustify the widely used Central Tendency Ornstein--Uhlenbeck (CTOU) model (see e.g.\ \cite{leung2018optimal,liu2017intraday}) and explore extensions that incorporate fat-tailed return distributions and stochastic volatility.

The paper is organized as follows. Section~\ref{sec:setup} rigorously introduces the setup and the robust optimization problem. Section~\ref{sec:heuristic} then discusses the heuristic approach and the main ideas for solving the problem. Our approach extends the techniques used in \cite{kardaras2021ergodic,itkin2025ergodic}, which connects the Euler--Lagrange equation coming from the optimization problem to the Fokker--Planck equation describing the law of $(X,Y)$, to incorporate trading strategies that depend on both $X$ and $Y$. Section~\ref{sec:rigorous} then formulates the rigorous mathematical assumptions under which our results are proven. Our Assumption~\ref{ass:conditions} relaxes the assumptions required in \cite{itkin2025ergodic}, allowing for unbounded factor processes natural for many examples and reducing the number of integrability conditions that need to be satisfied. All of our main results, characterizing the robust optimal strategy, robust growth rate and the worst-case measure are stated in Section~\ref{sec:main_results} with proofs postponed to Appendix~\ref{sec:proofs} for better readability. A discussion comparing our results to \cite{itkin2025ergodic} and highlighting key financial insights is carried out in Section~\ref{sec:financial_insights}. Section~\ref{sec:examples} then applies our framework to several examples. In Section~\ref{sec:OU} we study a high-dimensional Gaussian environment, where all of our assumptions are carefully checked, while in Section~\ref{sec:pairs_trading} we take an extended look at a pairs trading application. Our theoretical results are complemented by numerical experiments, which demonstrate the conclusions of our study in stylized, but representative, market environments. Section~\ref{sec:conclusion} concludes and discusses directions for future work.

\section{Setup} \label{sec:setup}
\subsection{Problem formulation} \label{sec:problem_formulation}
We work with a financial market that contains a risk-free numeraire asset, which is normalized to one, and $d \geq 1$ risky assets. Notice that the numeraire of our investment universe is chosen in such a way that ergodicity assumptions can reasonably hold true. We assume that the risky asset price process $X = (X_1,\dots,X_d)$ takes values in an open connected set $E \subset \R^d$. The process $X$ will depend on an $m$-dimensional stochastic factor process $Y$ taking values in a connected open set $D \subset \R^m$ for some $m \geq 1$. We set $F = E \times D$ and in this paper will generically denote elements of $F$  by $z = (x,y)$ for $x \in E$ and $y \in D$. Similarly, gradients of a function of $z$ taken in only the $x$-variable will be denoted by $\nabla_x$ and in only the $y$-variable by $\nabla_y$ and the full gradient denoted, as usual, by $\nabla$. For a vector $v$ of size $d+m$, we will write $v_X$ and $v_Y$ for the vector consisting of the first $d$ and final $m$ components of $v$ respectively.

We take a triple $(c,p,b_Y)$ as inputs to the problem for functions $c:F \to \mathbb{S}^{d+m}_{++}$, $p:F\to (0,\infty)$ and $b_Y:F \to \R^m$. Here, for any $n \in \N$, $\mathbb{S}^{n}_{++}$ is the cone of symmetric positive definite matrices of size $n \times n$. We will canonically write $c$ in block form as \[c(z) = \begin{bmatrix}
    c_X(z) & c_{XY}(z) \\ c_{YX}(z) & c_Y(z)
\end{bmatrix},\] where $c_X(z) \in \mathbb{S}^{d}_{++}$, $c_Y(z) \in \mathbb{S}^m_{++}$ and $c_{XY}(z) = c_{YX}^\top (z)$ is a matrix of size $d \times m$. We make the following assumptions on the regularity of the inputs. 
\begin{assum}
\label{ass:inputs}
There exists $\gamma \in (0,1]$ such that 
\begin{enumerate}[noitemsep] 
\item $c \in C^{2,\gamma}(F;\mathbb{S}^{d+m}_{++})$,
\item $p \in C^{2,\gamma}(F;(0,\infty))$ is such that $\int_F p(z)dz = 1$,
\item $b_Y \in C^{1,\gamma}(F;\R^m)$.
\end{enumerate}
\end{assum} 


Regarding the probabilistic structure, we will work on the canonical path space $\Omega = C([0,\infty);F)$ with Borel $\sigma$-algebra induced by the topology of local uniform convergence. The coordinate process is denoted by $Z = (X,Y)$ and we let $(\Fcal_t)_{t \geq 0}$ by the right-continuous enlargement of the filtration generated by $Z$ modelling full information on $(X,Y)$. On this space we will consider a class of probability measures under which $Z$ has quadratic variation prescribed by $c$, ergodic behaviour prescribed by $p$ and the drift of the $Y$-component of $Z$ is given in terms of $b_Y$.
\begin{defn}[Admissible class of measures] \label{defn:class} Given inputs $(c,p,b_Y)$ satisfying Assumption~\ref{ass:inputs}, we define a class of probability measures $\Pcal$ on $(\Omega,\F)$ consisting of all measures $\P$ under which
\begin{enumerate}
    \item \label{item:dynamics} $Z$ is a continuous semimartingale with dynamics 
    \begin{equation} \label{eqn:Z_dynamics} 
    dZ_t  = dA_t^\P + c^{1/2}(Z_t)dW_t,
    \end{equation}
    where $W$ is a standard $(d+m)$-dimensional Brownian motion, $c^{1/2}(z)$ is a matrix square root of $c(z)$ and $A^\P$ is a finite variation process of the form
    \begin{equation} \label{eqn:dA}
        dA_t^\P = \begin{pmatrix} dA^\P_{X,t} \\ c_Y(Z_t)b_Y(Z_t)dt \end{pmatrix},
    \end{equation}
    where $A^\P_X$ is some continuous adapted $d$-dimensional process of finite variation,
    \item $Z$ satisfies the ergodic property,
    \begin{equation} \label{eqn:ergodic}
    \lim_{T \to \infty} \frac{1}{T}\int_0^T h(Z_t)dt = \int_F h(z)p(z)dz; \qquad \P\text{-a.s.,}
    \end{equation}
    for every locally bounded $h:F \to \R$ with $\int_F h(z)p(z)dz < \infty$. \label{item:ergodic}
\end{enumerate}
\end{defn}

As in \cite{kardaras2021ergodic} and \cite[Section~5]{itkin2025ergodic}, the diffusion matrix of the coordinate process is specified by the matrix $c$ and the density $p$ governs the long-term behaviour of the coordinate process. However, differently from \cite{kardaras2021ergodic} the market is incomplete and, differently from \cite{itkin2025ergodic}, we assume that the local dynamics of the stochastic factor process are entirely known. Indeed, the drift of $Y$ is specified by the input $b_Y$, which was not present in previous studies.\footnote{The parametrization $c_Y(z)b_Y(z)$ for the drift of $Y$ in \eqref{eqn:dA} is convenient for later computations, but does not amount to any additional structural condition on the drift of $Y$ since $c_Y$ is everywhere invertible.} Accordingly, the class $\Pcal$ is a subset of the class considered in \cite{itkin2025ergodic}, which did not restrict the drift coefficients of $Y$.\footnote{The paper \cite{itkin2025ergodic} introduced a family of classes $\Pi_K$ for so-called \emph{$K$-modifications}. These are not needed in this paper and, hence, for simplicity we drop the subscript $K$ when comparing to the classes from \cite[Section~5.2]{itkin2025ergodic}.}
As such, the class $\Pcal$ provides uncertainty over only the drift of $X$, while the classes of measures in \cite{itkin2025ergodic} additionally encoded uncertainty over the drift of $Y$.


With the class of measures fixed, we now turn our attention to the optimal investment criterion. The set of admissible strategies is denoted by $\Theta$ and consists of all predictable $d$-dimensional processes $(\theta_t)_{t \geq 0}$ modelling full information on $(X,Y)$, which are additionally $X$-integrable with respect to every measure $\P \in \Pcal$. 
When the investor uses a strategy $\theta \in \Theta$ their wealth process is given by
\begin{equation} \label{eqn:wealth} 
V_T^\theta = \Ecal\bigg(\int_0^T \theta_t^\top dX_t\bigg), \qquad T \geq 0,
\end{equation}
where $\Ecal$ denotes stochastic exponentiation and we assume without loss of generality that the initial wealth is normalized to $V_0^\theta = 1$.

We consider the asymptotic growth rate as the optimality criterion. For $\theta \in \Theta$ and $\P \in \Pcal$ this quantity is defined as 
\begin{equation}\label{eqn:growth_rate}
    g(\theta;\P) = \sup\left\{\gamma \in \R: \liminf_{T \to \infty} \frac{1}{T}\log V_T^\theta \geq \gamma; \quad \P\text{-a.s.}\right\}
\end{equation}
The corresponding robust growth rate is then given by \eqref{eqn:lambda_P}.
Our ambitious goal in the sequel is to characterize the growth rate $\lambda_\Pcal$ in terms of the inputs, find the optimal strategy $\theta^*$ achieving it and compare the robust optimal growth rate and strategy to the one previously obtained in \cite{itkin2025ergodic}. It is remarkable how explicit the results are.

\begin{remark} \label{rem:growth_rate}
The pathwise growth rate definition \eqref{eqn:growth_rate} was previously studied in  \cite{kardrarasrobust2012} and \cite[Section~2.3.7]{karatzas2021portfolio}, but differs from the in-probability definition
\begin{equation} \label{eqn:growth_rate_prob}
g_{\mathrm{prob}}(\theta;\P) = \sup\Big\{\gamma \in \R: \lim_{T \to \infty} \P\Big(\frac{1}{T}\log V_T^\theta \geq \gamma\Big) = 1\Big\}
\end{equation}
used in \cite{kardaras2021ergodic,itkin2025ergodic}. Our approach below allows us to characterize the robust growth rate when using the definition \eqref{eqn:growth_rate} and since $g(\theta;\P) \leq g_{\mathrm{prob}}(\theta;\P)$ we prefer to use it here as it is a more conservative choice. However, our results carry over to the in-probability notion of growth rate as well. Moreover, the results of \cite{kardaras2021ergodic,itkin2025ergodic} can be extended to the growth-rate definition \eqref{eqn:growth_rate} when restricting the admissible class of measures in those papers to those with finite asymptotic growth; see Definition~\ref{defn:P0} below for this condition and the proof of Theorem~\ref{thm:main} in Appendix~\ref{sec:proof_main} for how the finite asymptotic growth condition is used.
\end{remark}

\subsection{Compatibility condition}
Before proceeding, we note that the inputs $(c,p,b_Y)$ cannot be entirely independently specified. Indeed, they need to satisfy a compatibility condition so that the class $\Pcal$ is nonempty. We now formally derive this condition. To begin with, we define the quantities
\begin{align} 
    \ell_X(z) & = \frac{1}{2}\Big(\!(c^{-1}\mathrm{div} \, c)_X(z) + \nabla_x \log p(z) + (c_X)^{-1}(z)c_{XY}(z)\big((c^{-1}\mathrm{div} \, c)_Y(z) + \nabla_y \log p(z)\big)\!\Big), \label{eqn:ell_X}\\
    \ell_Y(z) & = \frac{1}{2}\Big(\!(c^{-1}\mathrm{div} \, c)_Y(z) + \nabla_y \log p(z) + (c_Y)^{-1}(z)c_{YX}(z)\big((c^{-1}\mathrm{div} \, c)_X(z) + \nabla_x \log p(z)\big)\!\Big)\! - b_Y(z), \label{eqn:ell_Y} 
\end{align}
which play an important role here and in the sequel. Here, $\mathrm{div} \, c(z)$ is a vector obtained by computing the row-wise divergence of $c$; that is  $\mathrm{div} \, c_i(z) = \sum_{j=1}^{d+m} \partial_j c_{ij}(z)$. Next, assume that a measure $\P \in \Pcal$ is given and let $\phi \in C_c^\infty(D)$ be arbitrary. Then using It\^o's formula we have that
\[\phi(Y_T) = \phi(Y_0) +  M_T + \int_0^T \big(\nabla \phi(Y_t)^\top c_Y(Z_t) b_Y(Z_t)  + \frac{1}{2}\sum_{i,j=1}^m \partial_{ij}\phi(Y_t)(c_Y)_{ij}(Z_t)\big)dt , \]
where 
    $M_T = \int_0^T\sum_{i=1}^m\sum_{j=1}^{d+m} \partial_i \phi(Y_t)c^{1/2}_{d+i,j}(Z_t)dW_{j,t}$ is a local martingale.
We now divide both sides by $T$ and send $T \to \infty$ to deduce that
\[0 = \int_E\int_D \big(\nabla \phi(y)^\top c_Y(x,y)b_Y(x,y) + \frac{1}{2}\mathrm{Tr}(\nabla^2 \phi(y)c_Y(x,y))\big)p(x,y)dydx. \]
Here we used the fact that $\phi$ is bounded, $M_T/T \to 0$ almost surely (see e.g.\ \cite[Lemma~1.3.2]{fernholz2002stochastic}) and the ergodic property \eqref{eqn:ergodic}. Integrating by parts the first term once and the second term twice yields
\[0 = \int_D \phi(y)\int_E \mathrm{div}_y(c_Y(x,y)\ell_Y(x,y)p(x,y))dx dy,\]
where we also switched the order of integration. Since $\phi$ was arbitrary, we have derived, by density of $C_c^\infty(D)$ in $L^1_{\mathrm{loc}}(D)$, the following compatibility condition connecting the inputs $(c,p,b_Y)$,
\begin{equation} \label{eqn:compatibility_condition}
    \int_E \mathrm{div}_y(c_Y(x,y)\ell_Y(x,y)p(x,y))dx = 0 \qquad \text{for  a.e. } y \in D.
\end{equation}
If one fixes the inputs $c$ and $p$ then one explicit way to ensure that $\eqref{eqn:compatibility_condition}$ holds it to  set
\begin{equation} \label{eqn:explicit_compatability}
    b_Y(z) = \frac{1}{2}\Big((c^{-1}\mathrm{div} \, c)_Y(z) + \nabla_y \log p(z) +(c_Y)^{-1}(z)c_{YX}(z)\big((c^{-1}\mathrm{div}\, c)_X(z) + \nabla_x \log p(z)\big)\Big),
\end{equation}
in which case $\ell_Y(z) = 0$ for all $z \in F$. In the examples of Section~\ref{sec:examples} we will specify inputs $(c,p,b_Y)$ that satisfy the compatibility condition \eqref{eqn:compatibility_condition} without imposing $\ell_Y = 0$.

\section{Solving the robust problem: a heuristic approach} \label{sec:heuristic}
The problem \eqref{eqn:lambda_P} is not easily amenable to standard tools from stochastic control. The main difficulty stems from the infinite horizon together with the ergodic constraint \eqref{eqn:ergodic}. Mathematically, the ergodic condition can be thought of as a constraint on the limit as $t \to \infty$ of the marginal distribution of $Z_t$, but it does not offer a more direct restriction of $b_{t,X}^{\P}$. In particular, as it is an infinite-horizon constraint and, so is the criterion \eqref{eqn:growth_rate}, approaches using possibly existing dominated measures for the class $\Pcal$ may be technically challenging.

Instead, we follow a similar approach to \cite{kardaras2021ergodic,itkin2025ergodic} by first restricting our analysis to a suitable class $\Theta_0 \subset \Theta$ of trading strategies that achieve the same growth rate under each admissible measure. Afterwards, we will establish that the robust  optimal strategy over this smaller class is actually globally robust growth-optimal. Here, we take the class
\[\Theta_0  =\{\theta \in \Theta: \theta_t = \nabla_x \phi(X_t,Y_t) \text{ for some } \phi \in C^2_c(F)\}. \]
This is a natural extension of the class
$\{\theta \in \Theta: \theta_t = \nabla \phi(X_t) \text{ for some } \phi \in C^2_c(E)\}$  of functionally generated portfolios considered in \cite{kardaras2021ergodic,itkin2025ergodic} to allow dependence on $Y$, for which more information is available in this setting due to the input $b_Y$. In essence, the portfolios making up $\Theta_0$ can be thought of as stochastic factor dependent functionally generated portfolios and, in this setting, they are needed to obtain robust optimality. From \eqref{eqn:wealth} and It\^o's formula applied to $\phi(Z_t)$ it follows that the logarithmic wealth when using a strategy $\theta^\phi_t := \nabla_x \phi(Z_t) \in \Theta_0$ is
\begin{align}
    \log V_T^{\theta^\phi} & =  \int_0^T \nabla_x \phi(Z_t)^\top dX_t - \frac{1}{2}\int_0^T \nabla_x \phi(Z_t)^\top d[X]_t \nabla_x \phi(Z_t) \\
    & = \phi(Z_T) - \phi(Z_0) - \int_0^T \nabla_y \phi(Z_t)^\top dY_t - \frac{1}{2}\sum_{i,j=1}^{d+m}\int_0^T  \partial_{ij} \phi(Z_t) d[Z_i,Z_j]_t \\
    & \qquad \qquad \qquad \qquad \quad - \frac{1}{2}\int_0^T \nabla_x \phi(Z_t)^\top d[X]_t \nabla_x \phi(Z_t).
\end{align}
Using \ref{defn:class}\ref{item:dynamics} we obtain for every $\P \in \Pcal$ the relationship
\begin{align} \label{eqn:heuristic_wealth1} \log & V_T^{\theta^\phi}  =  \phi(Z_T) - \phi(Z_0) - M_T  \\
& \label{eqn:heuristic_wealth2}  - \int_0^T \big(\nabla_y \phi(Z_t)^\top c_Y(Z_t) b_Y(Z_t) + \frac{1}{2}\mathrm{Tr}(\nabla^2 \phi (Z_t)c(Z_t)) + \frac{1}{2}\nabla_x \phi(Z_t)^\top c_X(Z_t)\nabla_x \phi(Z_t)\big)dt, 
\end{align}
where $M_T = \int_0^T \sum_{i=1}^m\sum_{j=1}^{d+m} \partial_i \phi(Z_t)c_{d+i,j}^{1/2}(Z_t)dW_{j,t}$ is a local martingale. Dividing both sides by $T$ and sending $T \to \infty$, we see by boundedness of $\phi$, and \cite[Lemma~1.3.2]{fernholz2002stochastic} for the local martingale part, that the terms on the right hand side of \eqref{eqn:heuristic_wealth1} vanish. By the ergodic property of Definition~\ref{defn:class}\ref{item:ergodic} the terms in \eqref{eqn:heuristic_wealth2} converge leading to
\[g(\theta^\phi;\P) = -\int_F\big(\nabla_y \phi(z)^\top c_Y(z) b_Y(z) + \frac{1}{2}\mathrm{Tr}(\nabla^2 \phi (z)c(z)) + \frac{1}{2}\nabla_x \phi(z)^\top c_X(z)\nabla_x \phi(z)\big)p(z)dz. \]
Importantly, the right hand side depends on the measure $\P$ \emph{only via $(c,p,b_Y)$}. To make further progress we integrate by parts the second derivative terms. Separately collecting all of the terms involving $\nabla_x \phi$ and $\nabla_y \phi$ allows us to rewrite the asymptotic growth rate as 
\begin{equation}
\begin{aligned}
    g(\theta^\phi;\P) & = \int_F \nabla_x \phi(z)^\top c_X(z) \ell_X(z)p(z)dz - \frac{1}{2}\int_F \nabla_x \phi(z)^\top c_X(z)\nabla_x \phi(z)p(z)dz  \\
    & \qquad + \int_F \nabla_y \phi(z)^\top c_Y(z)\ell_Y(z)p(z)dz, \label{eqn:growth_rate_IBP} 
\end{aligned}
\end{equation}
where we recall that $\ell_X,\ell_Y$ are given by \eqref{eqn:ell_X} and \eqref{eqn:ell_Y}, respectively.

We now seek to put the expression \eqref{eqn:growth_rate_IBP} into a more regular form involving only $\nabla_x \phi$, but not $\nabla_y \phi$ or $\phi$ itself. To accomplish this we further integrate by parts the final term in \eqref{eqn:growth_rate_IBP} to obtain
\begin{equation} \label{eqn:IBP_y}
    \int_F \nabla_y \phi(z)^\top c_Y(z)\ell_Y(z)p(z)dz = -\int_F \phi(z)\mathrm{div}_y(c_Y(z)\ell_Y(z) p(z))dz.
\end{equation}
 Next, we seek to construct a vector field $\mathbf{u}:F \to \R^d$ satisfying
\begin{equation} \label{eqn:div_PDE}
    \mathrm{div}_x \mathbf{u}(\cdot,y) = \mathrm{div}_y(c_Y(\cdot,y)\ell_Y(\cdot,y) p(\cdot,y)), \qquad \text{in } E \text{ for a.e. } y \in D
\end{equation}
together with certain integrability bounds precisely stated in Assumption~\ref{ass:conditions}\ref{item:divergence} of the following section. Once such a $\mathbf{u}$ is found we substitute into \eqref{eqn:IBP_y} and integrate by parts again to obtain
\[\int_F \nabla_y \phi(z)^\top c_Y(z)\ell_Y(z)p(z)dz = \int_F \nabla_x \phi(z)^\top \mathbf{u}(z)dz.\]
Substituting into \eqref{eqn:growth_rate_IBP} gives
\begin{align}
    g(\theta^\phi;\P)&  = \int_F (\xi(z)^\top c_X(z)\nabla_x \phi(z)- \frac{1}{2}\nabla_x \phi(z)^\top c_X(z)\nabla_x \phi(z))p(z)dz \\
    & = \frac{1}{2}\int_F \xi(z)^\top c_X(z)\xi(z)p(z)dz - \frac{1}{2}\int_F (\nabla_x \phi(z) - \xi(z))^\top c_X(z)(\nabla_x \phi(z) - \xi(z))p(z)dz, 
\end{align}
where 
\begin{equation} \label{eqn:xi}
\xi(z) = \ell_X(z) + (c_X)^{-1}(z)\mathbf{u}(z)p^{-1}(z).
\end{equation}
Maximizing the  robust asymptotic growth rate over $\Theta_0$ now amounts to minimizing the functional
\[\phi \mapsto \frac{1}{2}\int_F (\nabla_x \phi(z) - \xi(z))^\top c_X(z)(\nabla_x \phi(z) - \xi(z))p(z)dz \]
over a suitable function space. Under certain assumptions on the inputs (see Assumption~\ref{ass:conditions}) a sufficiently regular solution $\phi^*$ can be found and it satisfies the associated Euler--Lagrange equation 
\begin{align}
    \label{eqn:Euler-Lagrange_xi} 
 & \mathrm{div}_x(c_X(\cdot,y)(\nabla_x \phi^*(\cdot,y) - \xi(\cdot,y))p(\cdot,y))  =0 &   \text{in $E$ for a.e.\ $y \in D$}. \end{align}
This now leads us to a lower bound for $\lambda_\Pcal$ since
\begin{equation} \label{eqn:lower_bound}
\begin{aligned}
\lambda_\Pcal \geq \sup_{\theta^\phi \in \Theta_0}\inf_{\P \in \Pcal} g(\theta^\phi;\P) & = \frac{1}{2}\int_F \xi(z)^\top c_X(z)\xi(z)p(z)dz \\
& \qquad - \frac{1}{2}\int_F (\nabla_x \phi^*(z) - \xi(z))^\top c_X(z)(\nabla_x \phi^*(z) - \xi(z))p(z)dz \\
    & = \frac{1}{2}\int_F\nabla_x \phi^*(z)^\top c_X(z)\nabla_x \phi^*(z) p(z)dz, 
\end{aligned}
\end{equation}
where to obtain the final equality we expanded the quadratic form and used the Euler--Lagrange equation \eqref{eqn:Euler-Lagrange_xi} to formally rewrite the cross term 
\begin{align}
    \int_F \nabla_x \phi^*(z)^\top c_X(z) &\xi(z)p(z)dz  = -\int_F \phi^*(z)\mathrm{div}_x(c_X(z)\xi(z) p(z))dz \\
    & = -\int_F \phi^*(z)\mathrm{div}_x(c_X(z)\nabla_x \phi^*(z) p(z))dz  = \int_F \nabla_x \phi^*(z)^\top c_X(z)\nabla_x \phi^*(z)p(z)dz.
\end{align}
To close the gap and obtain the same upper bound we will construct a worst-case measure $\P^* \in \Pcal$ under which $\theta^* := \theta^{\phi^*}$ is growth-optimal. The key observation, first noted in \cite{kardaras2021ergodic} for their problem but continues to hold in this more general setting, is that the stationary Fokker--Planck equation corresponding to the stochastic differential equation (SDE) as in \eqref{eqn:Z_dynamics} with $ d A_{X,t}^{\P^*} = b_{X,t}^{\P^*} dt = c_X(Z_t)\nabla_x \phi^*(Z_t) dt$ is precisely the Euler--Lagrange equation \eqref{eqn:Euler-Lagrange_xi}. This suggests, formally, that $p$ is the invariant density for $Z$ and, hence, that $\P^* \in \Pcal$. Moreover, under $\P^*$, it is easily verified from the general theory of growth-optimal portfolios (see \cite[Theorem~2.31]{karatzas2021portfolio}) that the growth-optimal strategy is given by $\theta^*$ and its asymptotic growth rate under $\P^*$ is equal to the right hand side of \eqref{eqn:lower_bound}. It follows that
\begin{equation} \label{eqn:upper_bound}
\lambda_\Pcal \leq \sup_{\theta \in \Theta} g(\theta;\P^*) = g( \theta^*;\P^*) = \frac{1}{2}\int_F \nabla_x \phi^*(z)c_X(z)\nabla_x\phi^*(z)p(z)dz.
\end{equation}
The bounds \eqref{eqn:lower_bound} and \eqref{eqn:upper_bound} imply that $\lambda_\Pcal = \frac{1}{2}\int_F \nabla_x \phi^*(z)^\top c_X(z)\nabla_x \phi^*(z)p(z)dz$ and suggest that the robust growth-optimal portfolio is given by $\theta^*=\nabla_x \phi^*(Z)$.

\begin{remark}
Notice that our robust optimal growth problem under ergodicity, as well as the problems in \cite{kardaras2021ergodic} and \cite{itkin2025ergodic}, allow for a finite horizon formulation, when the market is assumed to start in the invariant law and the optimization is averaging over this initial measure. Of course the setup has to be formulated slightly differently to guarantee preservation of the respective invariant measure, but the results remain valid.
\end{remark}

\section{Rigorous problem formulation} \label{sec:rigorous}
\subsection{Assumptions}
The discussion in the previous section was heuristic and requires additional assumptions on the inputs to make the argument rigorous. We now state these assumptions. 

\begin{assum} \label{ass:conditions}
    Let inputs $(c,p,b_Y)$ satisfying Assumption~\ref{ass:inputs} be given and let $\ell_X,\ell_Y$ be as defined in \eqref{eqn:ell_X} and \eqref{eqn:ell_Y}. Assume additionally that the following hold:
    \begin{enumerate}
        \item $\int_F \ell_X(z)^\top c_X(z)\ell_X(z)p(z)dz + \int_F \ell_Y(z)^\top c_Y(z)\ell_Y(z)p(z)dz < \infty$, \label{item:finite_growth}
        \item There exists a measurable  ${\bf u}:F\to \R^d$ which satisfies $\int_F \mathbf{u}(z)^\top (c_X)^{-1}(z)\mathbf{u}(z)p^{-1}(z)dz < \infty$ and is a weak solution to \eqref{eqn:div_PDE}. That is,
        \[-\int_E \mathbf{u}(x,y)\nabla \psi(x)dx = \int_E \mathrm{div}_y(c_Y(x,y)\ell_Y(x,y)p(x,y))\psi(x)dx\]
     for all $\psi \in C_c^\infty(E)$ and for a.e.\ $y \in D$.
 \label{item:divergence} 
        \item \label{item:test_functions}
        There exist functions $\chi_n \in C_c^\infty(F)$ satisfying $0 \leq \chi_n \leq 1$, $\lim_{n \to \infty} \chi_n = 1$ and 
        \[\lim_{n \to \infty} \int_F \nabla_x\chi_n(z)^\top c_X(z)\nabla_x\chi_n(z)p(z)dz  = \lim_{n \to \infty} \int_F \nabla_y \chi_n(z)^\top c_Y(z)\nabla_y \chi_n(z)p(z)dz = 0. \] 
    \end{enumerate}
\end{assum}
\begin{remark} \label{rem:u}
    We remark that the compatibility condition \eqref{eqn:compatibility_condition} is not \emph{explicitly} assumed. However, we cannot expect Assumption~\ref{ass:conditions}\ref{item:divergence} to hold without it. This is most easily seen in the case that $d=1$, $E = \R$ and $c_X$ is bounded. In this case the unique (up to additive function of $y$) solution to \eqref{eqn:div_PDE} is \begin{equation} \label{eqn:u_d=1}
        \mathbf{u}(x,y) = \int_{-\infty}^x \mathrm{div}_y(c_Y(x',y)\ell_Y(x',y)p(x',y))dx'.
    \end{equation}Since $p$ is a density on $\R$ we must have $\lim_{|x|\to \infty} p(x) = 0$ so that, in particular, $p^{-1}(x) \to \infty$ as $|x| \to \infty$. Consequently, since $c_X$ is bounded, a necessary condition for $\int_F \mathbf{u}^2(z)(c_X)^{-1}(z)p^{-1}(z)dz$ to be finite is for $\lim_{|x| \to \infty} \mathbf{u}(x,y) = 0$ for a.e.\ $y \in D$. Since the lower bound of integration for the integral defining $\mathbf{u}$ is $-\infty$, we have $\lim_{x \to -\infty} \mathbf{u}(x,y) = 0$, but the remaining condition amounts to
\[0 = \lim_{x \to \infty} \mathbf{u}(x,y) = \int_{-\infty}^\infty \mathrm{div}_y(c_Y(x',y)\ell_Y(x',y)p(x',y))dx', \quad \text{ for a.e.\ } y \in D,\]
which is precisely \eqref{eqn:compatibility_condition}.
\end{remark}
Assumption~\ref{ass:conditions}\ref{item:finite_growth} is needed to ensure finiteness of the robust growth rate $\lambda_\Pcal$. The second assumption is stated in a fairly abstract form but can be reduced to certain explicitly checkable integrability bounds. Indeed, if $d = 1$ then \eqref{eqn:div_PDE} becomes an ODE with unique (up to additive constant) solution given as in \eqref{eqn:u_d=1} with lower bound of integration given by $\inf\{x \in E\}$. For $d \geq 2$, we can define the Newtonian potential 
\[\Phi(x) = \begin{cases}
    \frac{1}{2\pi}\log|x|, & d = 2, \\
    \frac{1}{d(d-2)\alpha(d)}\frac{1}{|x|^{d-2}}, & d \geq 3
\end{cases},\]
where $\alpha(d)$ is the Lebesgue measure of a unit ball in $\R^d$, set \[v(x,y) = \int_E \Phi(x-x')\mathrm{div}_y(c_Y(x',y)\ell_Y(x',y)p(x',y))dx'\] and note that (at least formally) $\mathbf{u} = \nabla_x v$ is a weak solution to \eqref{eqn:div_PDE}. Indeed, by the theory of Poisson equations we have $\mathrm{div}_y(c_Y\ell_Yp) = \Delta_x v = \mathrm{div}_x \mathbf{u}$ in the weak sense, where we also used that $\Delta_x = \mathrm{div}_x \circ \nabla_x$ is the Laplacian. As such, irrespective of the dimension, only the integrability condition $\int_F \mathbf{u}(z)^\top (c_X)^{-1}(z)\mathbf{u}(z)p^{-1}(z)dz < \infty$ needs to be checked for the explicitly constructed $\mathbf{u}$ to ensure that Assumption~\ref{ass:conditions}\ref{item:divergence} holds.  Since the theory of divergence equations of the type \eqref{eqn:div_PDE} is rich and solutions other than the one constructed above may be desirable, we pose the assumption in the stated form. We explicitly construct a ${\bf u}$ solving \eqref{eqn:div_PDE} and the required integrability condition for our multivariate Ornstein--Uhlenbeck example of Section~\ref{sec:OU}. Assumption~\ref{ass:conditions}\ref{item:test_functions} is standard for these problems and is equivalent to the existence of a recurrent symmetric Markov process with covariance matrix $c$ and invariant density $p$ (see \cite[Theorem~1.6.3]{fukushimadirichlet994}). Analogously to \cite{kardaras2021ergodic,itkin2025ergodic} this assumption is crucial to ensure the class $\Pcal$ is nonempty.
\subsection{The class \texorpdfstring{$\Pcal_0$}{P₀}}

To state our main results in the next section we first need to introduce a sub class of $\Pcal$. 
\begin{defn}[Finite growth class] \label{defn:P0}
    We define the class $\Pcal_0$ to be all measures $\P \in \Pcal$ which additionally satisfy
    \begin{enumerate}
    \setcounter{enumi}{2}
        \item  $\sup_{\theta \in \Theta}g(\theta;\P) < \infty.$ \label{item:Pcal0}
    \end{enumerate}
\end{defn}
The class $\Pcal_0$ restricts to measures under which infinite growth is impossible to achieve. By the general theory of growth-optimization (see Chapter~2 of \cite{karatzas2021portfolio} and, in particular, their Theorem~2.31), this amounts to a requirement that the finite variation part of $X$ be in the range of the quadratic variation process. Namely, any measure $\P \in \Pcal_0$ admits a drift of the form $dA^\P_{X,t} = c_X(Z_t)b^\P_{X,t}dt$ for some progressively measurable process $b^\P_X$. Moreover, it is easy to verify that the growth-optimal portfolio under $\P$ is given precisely by $ \theta^*_\P = b^\P_X$. A direct computation shows that its logarithmic wealth under $\P$ is given by
\[\log V^{\theta^*_\P}_T = \frac{1}{2}[L^\P]_T + L^\P_T,\qquad \text{where} \qquad L^\P_T = \int_0^T (b^\P_{X,t})^\top c_X^{1/2}(Z_t)dW_t\] is a local martingale. On the set $\{[L^\P]_\infty < \infty\}$ it follows from \cite[Lemma~1.3.2]{fernholz2002stochastic} that $L^\P_T/T \to 0$, $\P$-a.s.\ as $T \to \infty$. In this case $g( \theta^*_\P;\P) = 0$ so that \ref{item:Pcal0} is satisfied. Conversely, on the set $\{[L^\P]_\infty = \infty \}$, we can write
\[\frac{1}{T}\log V_T^{ \theta^*_\P} = \frac{[L^\P]_T}{T}\left(\frac{1}{2} + \frac{L^\P_T}{[L^\P]_T}\right).\] The Dambis, Dubins-Schwarz Theorem (\cite[Theorem~V.1.6]{revuz1999continuous}) together with the strong law of large numbers for Brownian motion ensures that $\lim_{T \to \infty} L_T^\P/[L^\P]_T = 0$, $\P$-a.s. Hence, the finite growth condition \ref{item:Pcal0} amounts to the requirement
\begin{equation} \label{eqn:bP_condition}
    \P\left(\liminf_{T \to \infty} \frac{1}{T}\int_0^T (b^\P_{X,t})^\top c_X(Z_t)b^\P_{X,t}dt < \infty\right) > 0,
\end{equation}
where we used the fact that $[L^\P]_T = \int_0^T (b^\P_{X,t})^\top c_X(Z_t)b^\P_{X,t}dt$.

\section{Main results} \label{sec:main_results}

We are now ready to state our main results. The proofs are all postponed to Appendix~\ref{sec:proofs} for better readability. We start with a lemma guaranteeing the existence of, what turns out to be, the feedback form function characterizing the optimal strategy.
\begin{lem}[Characterization of the optimizer] \label{lem:variational}
Set 
    \[\Dcal = \{\phi: F \to \R: \phi \text{ is measurable and } \phi(\cdot,y) \in C^2(E) \text{ for a.e.\ } y \in D\}.\]
    Let Assumption~\ref{ass:inputs} and Assumption~\ref{ass:conditions}\ref{item:finite_growth}-\ref{item:divergence} be satisfied. Then there exists $\phi^* \in \Dcal$ satisfying
\begin{equation} \label{eqn:variational}
\phi^*  \in \arg\min_{\phi \in \Dcal} \int_F(\nabla_x \phi(z) - \xi(z))^\top c_X(z)(\nabla_x \phi(z) - \xi(z))p(z)dz,
\end{equation}
where we recall that $\xi$ is given by \eqref{eqn:xi}.
Moreover, $\phi^*$ is unique up to an additive function of $y$, $\nabla_x \phi^* \in L^q_{\mathrm{loc}}(F;\R^d)$ for every $q \in [2,\infty)$ and $\phi^*$ satisfies the Euler--Lagrange PDE 
 \begin{align}
 \label{eqn:Euler-Lagrange}
    & \mathrm{div}_x\big(c_X(\cdot,y)(\nabla_x \phi^*(\cdot,y) - \ell_X(\cdot,y)) p(\cdot,y)\big)  = \mathrm{div}_y(c_Y(\cdot,y)\ell_Y(\cdot,y)p(\cdot,y))
\end{align}
in $E$ for a.e.\ $y \in D$.
\end{lem}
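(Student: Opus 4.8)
The plan is to recognize \eqref{eqn:variational} as the minimization of a convex quadratic functional over a weighted Sobolev-type space and to obtain existence, uniqueness and the Euler--Lagrange equation from standard Hilbert space arguments applied fibrewise in $y$. First I would introduce, for a.e.\ fixed $y \in D$, the weighted inner product $\langle f,g\rangle_y = \int_E f(x)^\top c_X(x,y)g(x,y)\, p(x,y)\,dx$ on vector fields $E \to \R^d$, and the corresponding space $H_y$ obtained by completing gradients $\nabla_x \phi$ of smooth compactly supported $\phi$ (modulo constants in $x$). By Assumption~\ref{ass:conditions}\ref{item:finite_growth} together with \eqref{eqn:xi} and Assumption~\ref{ass:conditions}\ref{item:divergence} (which controls $\mathbf{u}^\top (c_X)^{-1}\mathbf{u}\,p^{-1}$), the vector field $\xi(\cdot,y)$ lies in $H_y$ for a.e.\ $y$, and $\int_D \|\xi(\cdot,y)\|_y^2\,dy < \infty$. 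The functional in \eqref{eqn:variational} is then $\int_D \|\nabla_x\phi(\cdot,y) - \xi(\cdot,y)\|_y^2\,dy$, and minimizing it fibrewise amounts to computing the orthogonal projection of $\xi(\cdot,y)$ onto the closed subspace of gradient fields in $H_y$; this projection exists and is unique, giving $\nabla_x\phi^*(\cdot,y)$ for a.e.\ $y$, with the additive-function-of-$y$ ambiguity reflecting exactly the quotient by $x$-constants.

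Next I would extract the Euler--Lagrange equation \eqref{eqn:Euler-Lagrange} from the first-order optimality (orthogonality) condition: for every $\psi \in C_c^\infty(E)$,
\[\int_E (\nabla_x\phi^*(x,y) - \xi(x,y))^\top c_X(x,y)\nabla_x\psi(x)\,p(x,y)\,dx = 0,\]
for a.e.\ $y$. Unwinding the definition \eqref{eqn:xi} of $\xi$, the term $\int_E \ell_X^\top c_X \nabla_x\psi\, p\,dx$ contributes the $\ell_X$ part of \eqref{eqn:Euler-Lagrange}, while the term $\int_E (\mathbf{u} p^{-1})^\top \nabla_x\psi\, p\,dx = \int_E \mathbf{u}^\top\nabla_x\psi\,dx$ equals $-\int_E \mathrm{div}_y(c_Y\ell_Y p)\,\psi\,dx$ by the weak-solution property in Assumption~\ref{ass:conditions}\ref{item:divergence}. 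Assembling these identities and writing them in distributional form over $E$ yields precisely \eqref{eqn:Euler-Lagrange}. The interior regularity claim $\nabla_x\phi^* \in L^q_{\mathrm{loc}}(F;\R^d)$ for all $q \in [2,\infty)$ would then follow from local elliptic estimates: \eqref{eqn:Euler-Lagrange} is, locally in $x$ with $y$ frozen, a uniformly elliptic divergence-form equation with $C^{2,\gamma}$ coefficients $c_X p$ and right-hand side that is a bounded measure / $W^{-1,q}$ datum coming from $\xi p$, and one bootstraps using $W^{1,q}$-estimates for such equations (Meyers-type higher integrability, then Calderón--Zygmund once the datum is seen to lie in better spaces), together with measurability in $y$ of the solution, which is inherited from the fibrewise projection and can be made rigorous via a measurable selection / approximation argument.

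The main obstacle I anticipate is twofold and both parts are ``soft'' analysis rather than computation. First, one must verify that the fibrewise minimizers $\nabla_x\phi^*(\cdot,y)$ can be chosen to depend measurably on $y$ and that the resulting $\phi^*$ (defined up to an $x$-constant, which one fixes, say, by a normalization using the cutoffs $\chi_n$ from Assumption~\ref{ass:conditions}\ref{item:test_functions} or by integrating against a fixed weight) genuinely lies in $\Dcal$ with $\phi^*(\cdot,y)\in C^2(E)$; this requires promoting the $H_y$-element to a classical-enough function via the elliptic regularity above and checking the integrability in $y$ is not destroyed. Second, care is needed because $E$ and $D$ are merely open connected sets and $p$ may degenerate at $\partial E$, so the function spaces $H_y$ are weighted and one cannot invoke a Poincaré inequality globally; the cleanest route is to avoid needing one — the projection argument only needs $H_y$ to be a Hilbert space of gradient fields and $\xi(\cdot,y)$ to lie in it, both of which are guaranteed by Assumption~\ref{ass:conditions} without any spectral-gap-type hypothesis. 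Once these measurability and weighted-space points are handled, the existence, uniqueness-up-to-$y$, Euler--Lagrange equation and local integrability all fall out in sequence.
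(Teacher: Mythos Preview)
Your proposal is correct and follows essentially the same route as the paper: a fibrewise (in $y$) Hilbert-space minimization of the quadratic functional, first-order optimality yielding the weak form of \eqref{eqn:Euler-Lagrange} after unwinding $\xi$ via Assumption~\ref{ass:conditions}\ref{item:divergence}, followed by local elliptic regularity for the $L^q_{\mathrm{loc}}$ and $C^2$ claims, with joint measurability in $(x,y)$ handled by a separate argument. Two small slips worth noting: $\xi(\cdot,y)$ need not lie in the gradient space $H_y$ itself but only in the ambient weighted $L^2$ space (the projection argument still goes through), and the statement only assumes \ref{item:finite_growth}--\ref{item:divergence}, so your normalization should not invoke the cutoffs $\chi_n$ from \ref{item:test_functions}.
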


We now state the main result of this paper.
\begin{thm}[Main result] \label{thm:main}  The robust asymptotic growth rate satisfies
\begin{equation} \label{eqn:lambda_P_thm}
    \lambda_\Pcal  = \frac{1}{2}\int_F \nabla_x \phi^*(z)^\top c_X(z)\nabla_x \phi^*(z)p(z)dz, 
\end{equation}
where $\phi^*$ is as in Lemma~\ref{lem:variational}. 
Moreover, the strategy $\theta^*_t = \nabla_x \phi^*(Z_t)$ is robust growth-optimal in the sense that $\lambda_{\Pcal} = g(\theta^*;\P)$ for every $\P \in \Pcal_0$.    
\end{thm}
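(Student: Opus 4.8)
The plan is to turn the heuristic argument of Section~\ref{sec:heuristic} into a rigorous two-sided bound, using the variational characterization of $\phi^*$ from Lemma~\ref{lem:variational} together with a truncation argument based on the cutoff functions $\chi_n$ from Assumption~\ref{ass:conditions}\ref{item:test_functions}. First I would fix an arbitrary $\P \in \Pcal_0$ and recall that, by the definition of $\Pcal_0$ and the general theory of growth-optimality quoted after Definition~\ref{defn:P0}, the drift of $X$ under $\P$ has the form $dA^\P_{X,t} = c_X(Z_t)b^\P_{X,t}\,dt$ with $\int_0^T (b^\P_{X,t})^\top c_X(Z_t) b^\P_{X,t}\,dt$ sub-linear along a set of positive probability. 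The goal of the lower bound is to show $\inf_{\P \in \Pcal_0} g(\theta^*;\P) \geq \tfrac12\int_F \nabla_x\phi^*{}^\top c_X \nabla_x\phi^* p\,dz$, but since $\phi^*$ is only in $\Dcal$ with $\nabla_x\phi^* \in L^q_{\mathrm{loc}}$ rather than compactly supported, $\theta^* = \nabla_x\phi^*(Z)$ is not literally in $\Theta_0$; so I would work with the truncated feedback functions $\phi^*_n := \chi_n \phi^*$ (or a suitable smoothed variant), apply the It\^o/ergodic computation of Section~\ref{sec:heuristic} to each $\phi^*_n \in C^2_c(F)$ to get $g(\theta^{\phi^*_n};\P)$ as the explicit integral functional, and then pass $n \to \infty$ using Assumption~\ref{ass:conditions}\ref{item:test_functions} and \ref{item:finite_growth} to control the error terms, recovering the claimed growth rate in the limit; finally a stability/optimality argument (the wealth processes $V^{\theta^{\phi^*_n}}$ converging appropriately, or a direct estimate on $\tfrac1T\log V^{\theta^*}_T$) upgrades this to a statement about $\theta^*$ itself.

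For the upper bound I would construct the worst-case measure $\P^*$: take the SDE $dZ_t = \bigl(\begin{smallmatrix} c_X(Z_t)\nabla_x\phi^*(Z_t)\,dt \\ c_Y(Z_t)b_Y(Z_t)\,dt\end{smallmatrix}\bigr) + c^{1/2}(Z_t)\,dW_t$ and show it has a weak solution whose law $\P^*$ lies in $\Pcal$. The key point, as flagged in the heuristic section, is that the stationary Fokker--Planck equation for this SDE is exactly the Euler--Lagrange equation \eqref{eqn:Euler-Lagrange} satisfied by $\phi^*$, so $p$ is the invariant density; one then needs the ergodic property \eqref{eqn:ergodic} to hold $\P^*$-a.s., which I would obtain by invoking a suitable ergodic theorem for the diffusion (uniqueness of the invariant measure plus recurrence, the latter tied to Assumption~\ref{ass:conditions}\ref{item:test_functions}). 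Granting $\P^* \in \Pcal$, one checks $\P^* \in \Pcal_0$ (its $X$-drift is visibly in the range of $c_X$, with $b^{\P^*}_X = \nabla_x\phi^*$, and \eqref{eqn:bP_condition} holds because under the invariant law the time-average of $\nabla_x\phi^*{}^\top c_X \nabla_x\phi^*$ converges to the finite constant in \eqref{eqn:lambda_P_thm}). Then by \cite[Theorem~2.31]{karatzas2021portfolio} the growth-optimal strategy under $\P^*$ is $\theta^*_{\P^*} = b^{\P^*}_X = \nabla_x\phi^* = \theta^*$, with $g(\theta^*;\P^*)$ equal to the right-hand side of \eqref{eqn:lambda_P_thm}; hence $\lambda_\Pcal \leq \sup_{\theta}g(\theta;\P^*) = g(\theta^*;\P^*) = \tfrac12\int_F \nabla_x\phi^*{}^\top c_X \nabla_x\phi^* p\,dz$.

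Combining the two bounds gives the formula \eqref{eqn:lambda_P_thm} and simultaneously shows $g(\theta^*;\P^*)$ attains it, while the lower-bound argument shows $g(\theta^*;\P) \geq \lambda_\Pcal$ for every $\P \in \Pcal_0$; since trivially $g(\theta^*;\P) \leq \sup_\theta g(\theta;\P) $ and the infimum over $\Pcal \supseteq \Pcal_0$ cannot exceed $g(\theta^*;\P^*) = \lambda_\Pcal$, we get $\lambda_\Pcal = g(\theta^*;\P)$ for all $\P \in \Pcal_0$, which is the robust optimality claim. The main obstacle I anticipate is the lower bound: making the truncation $\phi^*_n = \chi_n\phi^*$ work requires controlling cross terms of the form $\int_F \phi^* \nabla_x\chi_n{}^\top(\cdots)p\,dz$ and $\int_F \chi_n\phi^* (\cdots)$, which mixes the merely-local integrability of $\nabla_x\phi^*$ with the decay built into $\chi_n$ via Assumption~\ref{ass:conditions}\ref{item:test_functions}; a Cauchy--Schwarz splitting together with \ref{item:finite_growth} should close it, but the bookkeeping is delicate, and one also has to be careful that passing from $g(\theta^{\phi^*_n};\P)$ to $g(\theta^*;\P)$ is legitimate (e.g.\ via a localization/uniform-integrability argument on the logarithmic wealth, rather than a naive interchange of limits). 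The construction of $\P^*$ and verification of its ergodicity is the second delicate point, but it closely parallels \cite{kardaras2021ergodic,itkin2025ergodic} and should go through with the regularity afforded by Assumption~\ref{ass:inputs} and the $L^q_{\mathrm{loc}}$ bound on $\nabla_x\phi^*$.
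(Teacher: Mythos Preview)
Your overall architecture is correct and matches the paper: a two-sided bound, with the upper bound coming from a worst-case measure $\P^*$ whose stationary Fokker--Planck equation is exactly \eqref{eqn:Euler-Lagrange}, and the lower bound obtained by approximating $\phi^*$ by compactly supported smooth functions and passing to the limit. The construction of $\P^*$ and the verification that $\P^* \in \Pcal_0$ proceed essentially as you describe (and indeed the cutoffs $\chi_n$ from Assumption~\ref{ass:conditions}\ref{item:test_functions} are used precisely there, to establish recurrence of the generalized Dirichlet form via the criterion of \cite{gim2018recurrence}).

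The gap is in your approximation scheme. Setting $\phi^*_n = \chi_n \phi^*$ does not produce an element of $C^2_c(F)$: by Lemma~\ref{lem:variational}, $\phi^*$ is only jointly \emph{measurable} and $C^2$ in $x$ for a.e.\ fixed $y$, so the product inherits no $y$-regularity and It\^o's formula cannot be applied to $\phi^*_n(Z_t)$. Even if you smooth first, the cutoff introduces the cross term $\phi^*\nabla_x\chi_n$, and to show this vanishes in the weighted norm you would need control on $|\phi^*|$ itself (e.g.\ $\int_F |\phi^*|^2 (\nabla_x\chi_n)^\top c_X \nabla_x\chi_n\, p\,dz \to 0$), which is not available: only $\nabla_x\phi^*$ lies in the weighted $L^2$ space, and $\phi^*$ is determined only up to an additive function of $y$. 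Your proposed Cauchy--Schwarz together with Assumption~\ref{ass:conditions}\ref{item:finite_growth} gives bounds on $\ell_X,\ell_Y$ and hence on $\nabla_x\phi^*$, but not on $\phi^*$.

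The paper sidesteps this by a different approximation (Lemma~\ref{lem:approx}): extend $\phi^*$ by zero outside $F$, mollify on $\R^{d+m}$ to get jointly smooth $\phi^n$, then on each $F_k$ approximate $\phi^n|_{F_k}$ in $W^{1,2}(F_k)$ by $C_c^\infty(F_k)$ functions, and run a diagonal argument. This directly approximates $\nabla_x\phi^*$ in the weighted norm without ever multiplying by a cutoff, so no control on $\phi^*$ is needed. For the final robust optimality claim the paper writes $\log V^{\theta^*}_T - \log V^{\theta_n}_T$ explicitly and controls the stochastic-integral remainder via Cauchy--Schwarz against $\tfrac1T\int_0^T (b^\P_{X,t})^\top c_X b^\P_{X,t}\,dt$, which is where the finite-growth condition \eqref{eqn:bP_condition} for $\P \in \Pcal_0$ enters; this is the ``stability argument'' you anticipate, made precise.
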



An important part of the proof, as well as a key component of our numerical examples in Section~\ref{sec:examples}, involves establishing and characterizing the worst-case measure.
\begin{prop}[Worst-case measure] \label{prop:worst_case} Let $z \in F$ be arbitrary.
    The SDE 
        \begin{equation} \label{eqn:worst_case_SDE} 
        dZ_t = \begin{pmatrix}
        c_X(Z_t)\nabla_x\phi^*(Z_t) \\ c_Y(Z_t)b_Y(Z_t)               
        \end{pmatrix}dt + c^{1/2}(Z_t)dW_t, \quad Z_0 = z,
        \end{equation}
where $\phi^*$ is given by Lemma~\ref{lem:variational}, admits a weak solution $\P^*_z \in \Pcal_0$. 
\end{prop}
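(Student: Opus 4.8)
The plan is to construct the weak solution to \eqref{eqn:worst_case_SDE} by verifying that the candidate invariant density is precisely $p$, exploiting the fact---highlighted in Section~\ref{sec:heuristic}---that the stationary Fokker--Planck equation for \eqref{eqn:worst_case_SDE} coincides with the Euler--Lagrange PDE \eqref{eqn:Euler-Lagrange}. First I would establish existence of a weak solution locally. The drift vector in \eqref{eqn:worst_case_SDE} has $X$-component $c_X(z)\nabla_x\phi^*(z)$, which by Lemma~\ref{lem:variational} lies in $L^q_{\mathrm{loc}}(F;\R^d)$ for every $q\in[2,\infty)$, and $Y$-component $c_Y(z)b_Y(z)$, which is $C^{1,\gamma}$ hence locally bounded; together with $c\in C^{2,\gamma}(F;\mathbb{S}^{d+m}_{++})$ being uniformly elliptic on compacts, the existence of a weak solution up to the exit time of any bounded domain follows from standard results on SDEs with singular (merely $L^q_{\mathrm{loc}}$, $q>d+m$) drift---e.g.\ Krylov--Röckner type theory, or alternatively a Girsanov argument starting from the symmetric recurrent diffusion with generator $\tfrac12\mathrm{div}(c\nabla\,\cdot\,)$ supplied by Assumption~\ref{ass:conditions}\ref{item:test_functions} via \cite[Theorem~1.6.3]{fukushimadirichlet994}, since the drift satisfies a Novikov-type condition on compacts owing to the $L^q_{\mathrm{loc}}$ bound and the density $p$. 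In fact the Girsanov route is the cleanest: it simultaneously produces the candidate measure and a direct handle on the Radon--Nikodym density.

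Next I would show non-explosion and that $p$ is invariant. By construction of $\phi^*$ as a weak solution of \eqref{eqn:Euler-Lagrange} together with $b_Y$ satisfying the compatibility condition \eqref{eqn:compatibility_condition} (which, as noted in Remark~\ref{rem:u}, is forced by Assumption~\ref{ass:conditions}\ref{item:divergence}), the measure $p(z)\,dz$ is a stationary distribution for the generator $\Lcal f = \tfrac12\mathrm{Tr}(c\nabla^2 f) + \big(c_X\nabla_x\phi^*\big)^\top\nabla_x f + \big(c_Yb_Y\big)^\top\nabla_y f$; i.e.\ $\int_F \Lcal f\, p\,dz = 0$ for $f\in C_c^\infty(F)$, which is exactly the pair of identities obtained by testing \eqref{eqn:Euler-Lagrange} against $\phi$-variations in $x$ and the compatibility relation in $y$. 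From the existence of a finite invariant measure (here a probability measure since $\int_F p = 1$) and uniform local ellipticity, one gets non-explosion and, by the standard ergodic theory for diffusions with a unique invariant density (irreducibility follows from local ellipticity plus connectedness of $F$), the ergodic property \eqref{eqn:ergodic} holds $\P^*_z$-a.s.\ for every starting point $z$. The $Y$-drift is literally $c_Yb_Y$ and $c^{1/2}$ is the prescribed volatility, so \eqref{eqn:Z_dynamics}--\eqref{eqn:dA} hold; hence $\P^*_z\in\Pcal$.

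Finally I would verify $\P^*_z\in\Pcal_0$, i.e.\ the finite-growth condition \eqref{item:Pcal0}. Here $b^{\P^*}_{X,t} = \nabla_x\phi^*(Z_t)$, so by \eqref{eqn:bP_condition} it suffices to check $\P^*_z\big(\liminf_{T\to\infty}\tfrac1T\int_0^T \nabla_x\phi^*(Z_t)^\top c_X(Z_t)\nabla_x\phi^*(Z_t)\,dt < \infty\big) > 0$. By the ergodic property \eqref{eqn:ergodic}, applied to the locally bounded integrand $h = (\nabla_x\phi^*)^\top c_X\nabla_x\phi^*$, this time-average converges $\P^*_z$-a.s.\ to $\int_F \nabla_x\phi^*(z)^\top c_X(z)\nabla_x\phi^*(z)p(z)\,dz$, which is finite: indeed expanding $\xi = \ell_X + c_X^{-1}\mathbf{u}p^{-1}$ and using that $\phi^*$ minimizes the functional in \eqref{eqn:variational} (so $\int_F(\nabla_x\phi^* - \xi)^\top c_X(\nabla_x\phi^* - \xi)p \le \int_F \xi^\top c_X\xi\, p \le 2\int_F\ell_X^\top c_X\ell_X\,p + 2\int_F \mathbf{u}^\top c_X^{-1}\mathbf{u}\,p^{-1} < \infty$ by Assumption~\ref{ass:conditions}\ref{item:finite_growth}--\ref{item:divergence}), the triangle inequality in the $c_X p$-weighted $L^2$ norm bounds $\int_F\nabla_x\phi^{*\top}c_X\nabla_x\phi^*\,p$. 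This gives \eqref{item:Pcal0} and completes the proof.

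\medskip

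\emph{Main obstacle.} The delicate point is the well-posedness and non-explosion of \eqref{eqn:worst_case_SDE} given only $\nabla_x\phi^*\in L^q_{\mathrm{loc}}$ rather than a continuous or globally integrable drift: one cannot appeal to classical Itô theory and must instead combine a localization argument (solving up to exit from bounded sets) with the invariant-measure bound to rule out explosion, being careful that the ergodic theorem one invokes is available for the constructed solution. Handling the merely-local regularity of the drift while still pinning down $p$ as the \emph{global} invariant density---and thus landing in $\Pcal$ rather than merely constructing a local solution---is where the real work lies; the Girsanov construction off the symmetric diffusion of Assumption~\ref{ass:conditions}\ref{item:test_functions} is the tool I would lean on to make this rigorous.
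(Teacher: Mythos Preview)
Your high-level strategy matches the paper's: verify that $p$ is stationary for the generator of \eqref{eqn:worst_case_SDE} via the Euler--Lagrange equation \eqref{eqn:Euler-Lagrange}, deduce ergodicity, and then check the finite-growth condition \eqref{eqn:bP_condition} by applying the ergodic theorem to $(\nabla_x\phi^*)^\top c_X\nabla_x\phi^*$ and bounding the resulting integral via the variational characterization of $\phi^*$ and Assumption~\ref{ass:conditions}\ref{item:finite_growth}--\ref{item:divergence}. The final paragraph of your proposal is essentially identical to the paper's argument for $\P^*_z\in\Pcal_0$.

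Where you diverge from the paper is in the construction of the weak solution and the proof of non-explosion. The paper does \emph{not} use Krylov--R\"ockner theory or a Girsanov construction off the symmetric diffusion. Instead it builds the process via the theory of generalized Dirichlet forms: it writes the drift of \eqref{eqn:worst_case_SDE} as the symmetric drift $c\ell^0$ (with $\ell^0=\tfrac12 c^{-1}\mathrm{div}\,c+\tfrac12\nabla\log p$) plus a $\mu$-divergence-free perturbation $\beta = \big(c_X(\nabla_x\phi^*-\ell_X),\,-c_Y\ell_Y\big)^\top$, checks $\beta\in L^q_{\mathrm{loc}}$ for $q>d+m$ via Lemma~\ref{lem:variational}, and verifies $\int_F (L^0u+\beta^\top\nabla u)p\,dz=0$ for $u\in C_c^\infty(F)$ using \eqref{eqn:Euler-Lagrange}. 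Results of Stannat and Lee then produce a strongly continuous contraction semigroup on $L^1(F,\mu)$ and an associated diffusion $\mathbb{M}$. Recurrence---and hence non-explosion and ergodicity---is then established through a criterion of Gim--Trutnau, which requires precisely the test functions $\chi_n$ of Assumption~\ref{ass:conditions}\ref{item:test_functions}: one shows $\Ecal^0(\chi_n,\chi_n)\to 0$ and $\int_F|\beta^\top\nabla\chi_n|p\,dz\to 0$, the latter via Cauchy--Schwarz and the integrability bounds in Assumption~\ref{ass:conditions}\ref{item:finite_growth}--\ref{item:divergence}.

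This is where your proposal has a genuine gap. You write that ``from the existence of a finite invariant measure \ldots\ and uniform local ellipticity, one gets non-explosion.'' This is not a theorem: there are locally elliptic diffusions whose generator annihilates a probability density in the distributional sense (i.e.\ $\int \Lcal f\,p=0$ for all $f\in C_c^\infty$) and which nevertheless explode; having a stationary solution to Fokker--Planck does not by itself prevent mass escaping to the boundary. The paper's route through generalized Dirichlet forms exists precisely to handle this issue for drifts with only $L^q_{\mathrm{loc}}$ regularity, and the role of Assumption~\ref{ass:conditions}\ref{item:test_functions} is \emph{exactly} to supply the recurrence criterion that closes this gap. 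Your Girsanov suggestion runs into the usual problem that the density process need only be a local martingale, and verifying the true martingale property over the infinite horizon is tantamount to proving non-explosion---so it does not sidestep the difficulty. If you want to pursue your line, you would need either a Khasminskii-type Lyapunov function argument (not obviously available at this regularity) or to import the same Dirichlet-form machinery the paper uses.
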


The measure $\P^*_z$ is called a worst-case measure because $\theta^*$ is growth-optimal under $\P^*_z$ so that the maximal achievable asymptotic growth rate under $\P^*_z$ is the robust growth rate $\lambda_\Pcal$. 
In the sequel, the initial condition $z \in F$ will not play an impactful role and, as such, we will frequently omit it from the notation, referring by $\P^*$ to the law of the process $Z$ with dynamics \eqref{eqn:worst_case_SDE} for any arbitrary initial value $z \in F$.

\section{Financial insights} \label{sec:financial_insights}
In this section we discuss the financial insights of the results in Section~\ref{sec:main_results}. In particular, we compare the robust optimal strategy $\theta^*$, worst-case measure $\P^*$ and robust optimal growth rate $\lambda_{\Pcal}$ obtained here to their counterparts $\widehat \theta$, $\widehat \P$ and $\lambda_\Pi$ from \cite{itkin2025ergodic}. We additionally discuss the dependence of $\theta^*$ on $Y$ and potential computational savings for solving the PDE \eqref{eqn:Euler-Lagrange}. Section~\ref{sec:examples}, which follows, supports and expands on the general financial insights in specific examples of interest.

\subsection{Summary of results in \texorpdfstring{\cite{itkin2025ergodic}}{Itkin et.\ al.\ (2025)}} \label{sec:recap_other}
To facilitate the financial discussion we briefly summarize the key findings of \cite{itkin2025ergodic} and, in the process, establish notation necessary for the subsequent sections. We focus here on the key results and main conclusions obtained from \cite[Section~5.2]{itkin2025ergodic} and refer the reader to the full article for details regarding the precise technical conditions under which the results there were proven.

The setup in \cite[Section~5.2]{itkin2025ergodic} took only the pair $(c,p)$, encompassing covariation matrix and invariant density functions respectively, as inputs and did therefore not restrict the drift of $Y$. As such, the focus there is on the class of probability measures $\Pi$ on $(\Omega,\F)$ consisting of all measures $\P$ under which\footnote{A third requirement that the laws of $(Z_t)_{t \geq 0}$ are tight under $\P$ was also required in \cite{itkin2025ergodic}, but can be removed by restricting the class of measures to have finite asymptotic growth as in the definition of $\Pcal_0$ (see also the proof of Theorem~\ref{thm:main} in Appendix~\ref{sec:proof_main}).}
\begin{enumerate}[noitemsep]
    \item \label{item:QV} $Z$ is a continuous semimartingale with quadratic variation process $ 
    [Z]_T  = \int_0^Tc(Z_t)dt$ and
    \item $Z$ satisfies the ergodic property \eqref{eqn:ergodic}. 
\end{enumerate}
The class of measures which additionally prohibits infinite asymptotic growth, as in item \ref{item:Pcal0} of Definition~\ref{defn:P0}, will be denoted by $\Pi_0$. 
The corresponding problem is then to characterize
\begin{equation}
    \lambda_{\Pi} = \sup_{\theta \in \Theta}\inf_{\P \in \Pi} g(\theta;\P)
\end{equation}
and to find the robust growth-optimal strategy $\widehat \theta$ (the growth rate $g_{\mathrm{prob}}(\cdot,\cdot)$ of \eqref{eqn:growth_rate_prob} was used in \cite{itkin2025ergodic}, but the results can be extended to the growth rate $g(\cdot,\cdot)$ of \eqref{eqn:growth_rate} as discussed in Remark~\ref{rem:growth_rate}). The key conclusion of \cite{itkin2025ergodic} was that the growth-optimal strategy $\widehat \theta$ is functionally generated by a function of $x$ only and, consequently, does not depend on the stochastic factor $Y$. Indeed, $\widehat \theta_t = \nabla \widehat \phi(X_t)$, where $\widehat \phi \in C^2(E)$ is the unique (up to additive constant) solution to the Euler--Lagrange equation
\begin{equation} \label{eqn:Euler_Lagrange_x}
\mathrm{div}\Big(a(x)\nabla \widehat \phi(x) - \frac{1}{2}\mathrm{div}\big(a(x)\big)\Big) = 0,
\end{equation}
where
\begin{equation} \label{eqn:a} 
a_{ij}(x) = \int_D (c_X)_{ij}(x,y)p(x,y)dy, \qquad i,j=1,\dots,d, \quad x \in E
\end{equation}
is the marginal volatility matrix of $X$ obtained by integrating out the variable $y$ with respect to the ergodic density $p$.
The corresponding robust growth rate is then given by
\begin{equation} \label{eqn:lambda_Pi}
\lambda_{\Pi} = \frac{1}{2}\int_{D\times E} \nabla\widehat \phi(x)^\top c_X(x,y)\nabla \widehat \phi(x)p(x,y)dydx = \frac{1}{2}\int_E \nabla \widehat \phi(x)^\top a(x) \nabla \widehat \phi(x)dx
\end{equation}
and $\widehat \theta$ is robust growth-optimal in the sense that $g(\widehat \theta;\P) = \lambda_{\Pi}$ for all $\P \in \Pi_0$. 

The corresponding worst-case measure $\widehat \P$, under which $\widehat \theta$ is growth-optimal, has to satisfy $b^{\widehat \P}_{X,t} = c_X(Z_t)\nabla \widehat \phi(X_t)$, but, unlike the setup of this paper, this does not yet specify the dynamics of $Z$. Indeed, the drift of $Y$, which is not constrained in \cite{itkin2025ergodic}, must be carefully chosen so as to yield an admissible worst-case measure. The difficulty is to find a drift process $b^{\widehat \P}_Y$ under which the process $Z$ is nonexplosive (i.e.\ stays in the domain $F$) and has $p$ as its long-term invariant density. In \cite{itkin2025ergodic} it was shown that setting $b^{\widehat \P}_{Y,t} = \nabla_y \widehat v(Z_t)$, where $\widehat v$ satisfies $\int_F \nabla_y \widehat v(z)^\top c_Y(z)\nabla_y \widehat v(z)p(z)dz < \infty$ and the PDE \cite[Equation (5.20)]{itkin2025ergodic} ensures that $\widehat \P \in \Pcal_0$. The PDE for $\widehat v$, in the notation of this paper, is given by
\begin{equation} \label{eqn:v_PDE}
    \mathrm{div}_y(c_Y(z)(\ell_Y^0(z)-\nabla_y \widehat v(z))p(z)) = -\mathrm{div}_x (c_X(z)(\ell_X(z)-\nabla \widehat \phi(x))p(z)), \qquad z \in F 
\end{equation}
where 
\begin{align}
    \ell_Y^0(z) 
     = \frac{1}{2}\Big((c^{-1}\mathrm{div} \, c)_Y(z) + \nabla_y \log p(z) + (c_Y)^{-1}(z)c_{YX}(z)\big((c^{-1}\mathrm{div} \, c)_X(z) + \nabla_x \log p(z)\big)\Big) \label{eqn:ly0}
\end{align}
is the part of $\ell_Y$ that is not affected by the input $b_Y$. As such, the dynamics of $Z$ under the worst-case measure $\widehat \P$ are given by
\begin{equation} \label{eqn:hatP_dynamics}
d\begin{pmatrix}
    X_t \\ Y_t
\end{pmatrix} = \begin{pmatrix}
    c_X(Z_t)\nabla \widehat \phi(X_t) \\
    c_Y(Z_t)\nabla_y \widehat v(Z_t)
\end{pmatrix}dt + c^{1/2}(Z_t)dW_t.    
\end{equation}
The PDE \eqref{eqn:v_PDE} formally corresponds to the stationary Fokker--Planck equation associated with \eqref{eqn:hatP_dynamics} and in \cite[Lemma~5.11]{itkin2025ergodic} it was shown, under appropriate technical assumptions, that a solution $\widehat v$ to \eqref{eqn:hatP_dynamics} exists and satisfies the required integrability bounds. In Section~\ref{sec:OU} below we will explicitly compute the worst-case measure $\widehat \P$ by solving the PDE \eqref{eqn:v_PDE} in the case where the diffusion matrix $c$ is constant and $p$ is a Gaussian density. 

\subsection{Discussion} \label{sec:discussion}
Clearly $\Pcal \subset \Pi$ from which it immediately follows that $\lambda_\Pcal \geq \lambda_\Pi$. In fact, there is typically a strict inequality and the gap in robust growth between the setups can be quantified. Since $\widehat \theta$ achieves the same growth rate for every measure in $\Pi_0$ and $\P^* \in \Pcal_0 \subset \Pi_0$ we can deduce that
\begin{align}
    \lambda_\Pi & = g(\widehat \theta;\P^*) = \lim_{T \to \infty} \frac{1}{T}\log V^{\widehat \theta}_T \\
    & = \lim_{T \to \infty}\frac{1}{T}\int_0^T (\nabla \widehat \phi(X_t)c_X(Z_t)\nabla_x\phi^*(Z_t) - \frac{1}{2}\nabla \widehat \phi(X_t)^\top c_X(Z_t)\nabla \widehat \phi(X_t))dt  + \lim_{T \to \infty}\frac{M_T}{T} \\
    & = \int_F \nabla \widehat \phi(x)^\top c_X(z)\nabla_x \phi^*(z)p(z)dz - \frac{1}{2}\int_F \nabla \widehat \phi(x)^\top c_X(z)\nabla \widehat \phi(x)p(z)dz,
\end{align}
where $dM_t = \sum_{i=1}^d\sum_{j=1}^{d+m}\partial_i \widehat \phi(X_t)(c_X)_{ij}(Z_t)dW_{j,t}$, the limits are understood $\P^*$-a.s.\ and the final equality followed by the ergodic property \eqref{eqn:ergodic} and  the fact that $M_T/T \to 0$ as $T \to \infty$ by \cite[Lemma~1.3.2]{fernholz2002stochastic}. 
As such, from \eqref{eqn:lambda_P_thm} we obtain
\begin{equation} \label{eqn:growth_gap}
    \lambda_{\Pcal} - \lambda_{\Pi} = \frac{1}{2}\int_F (\nabla_x \phi^*(z) - \nabla \widehat \phi(x))^\top c_X(z)(\nabla_x \phi^*(z) - \nabla \widehat \phi(x))p(z)dz.
\end{equation}
From this expression we see that assuming the drift of $Y$ is known leads to an improvement in robust growth rate in comparison to the setup that additionally robustifies over the drift of $Y$. The only exception is when $\nabla_x \phi^* = \nabla \widehat \phi$, in which case $\theta^* = \widehat \theta$ and the conditional law of $X$ given $Y$ coincides under both worst-case measures $\P^*$ and $\widehat \P$.
In Section~\ref{sec:OU}, where we study a tractable high-dimensional example, we show that this edge case can happen, but is atypical and corresponds to a very specific adversarial choice for the input $b_Y$. In the remainder of this section we assume that $\theta^* \ne \widehat \theta$.

From \eqref{eqn:growth_gap} we deduce that the investor can strictly improve their robust growth rate if they know the dynamics of the factor process $Y$. A natural question is the following: what happens to the investor's growth rate if they believe to know the correct dynamics of $Y$, but are mistaken. In this case they would use the strategy $\theta^*$, but the true measure $\P$ driving the dynamics of $Z$ is a member of $\Pi\setminus \Pcal$. As $\P$ is unknown, a natural quantity to consider is the maximum potential loss in growth from using $\theta^*$ compared to the $\Pi$-growth-optimal strategy $\widehat \theta$. Although we are unable to compute this quantity exactly we are able to get a lower bound on it since $\widehat \P \in \Pi \setminus \Pcal$. From this observation we obtain
\begin{align} 
\lambda_{\Pi} - \inf_{\P \in \Pi}g(\theta^*;\P)  \geq \lambda_{\Pi} - g(\theta^*;\widehat \P) & = \frac{1}{2}\int_F (\nabla \widehat \phi(x)-\nabla_x \phi^*(z))^\top c_X(z)(\nabla \widehat \phi(x)-\nabla_x \phi^*(z))p(z)dz \\
& = \lambda_{\Pcal} - \lambda_{\Pi} \label{eqn:growth_rate_differential},
\end{align}
where in the penultimate equality we used \eqref{eqn:lambda_Pi} in place of $\lambda_\Pi$, directly computed $g(\theta^*;\widehat \P)$, akin to how we computed $g(\widehat \theta;\P^*)$ above, and collected like terms. The final equality follows from \eqref{eqn:growth_gap}. This computation shows that the loss in growth the investor can suffer, relative to the benchmark $Y$-independent strategy $\widehat \theta$, if they infer the incorrect dynamics of $Y$ can be \emph{at least} as large as their gains from utilizing the stochastic factor (when they correctly posit the dynamics of $Y$). The upshot is that incorporating the stochastic factor in one's trading strategy can lead to an improvement in robust growth rate but, if the dynamics of the factor are incorrectly specified, can also lead to underperformance that is at least as large. As such, whether the investor should utilize the strategy $\theta^*$ or $\widehat \theta$ depends strongly on their confidence in the dynamics of the stochastic factor process they are estimating. This relationship is borne out in Figure~\ref{fig:OU_boxplot} of Section~\ref{sec:pairs_CTOU}, where we study an application to pairs trading.

We conclude this section with a discussion of how the stochastic factor process $Y$ affects the strategy $\theta^*$. Somewhat surprisingly, the feedback form function $\phi^*$ specifying $\theta^*$ is obtained by solving a collection of PDEs \eqref{eqn:Euler-Lagrange} indexed by the variable $y$. In other words, each state $y \in D$ has its own autonomous PDE that should be solved to determine the investor's optimal holdings when the process $Y$ takes the value $y$. Whence, although the optimal strategy depends on the high-dimensional $(d+m)$-dimensional process $(X,Y)$, the dimension of equation \eqref{eqn:Euler-Lagrange} that needs to be solved is only $d$-dimensional. In the context of modern machine learning applications involving high-dimensional features, one may expect $m \gg d$ in many applications making this dimension reduction significant. 

Moreover, the dissection of the PDE \eqref{eqn:Euler-Lagrange} into $y$-slices has additional benefits, from a computational point of view, as the optimal strategy $\nabla_x \phi^*(\cdot,y)$ can be solved in an online manner the first time that the factor process $Y$ takes the value $y$.\footnote{Perhaps after discretizing and binning the $y$-states.} In particular, one does not need to compute the strategy at values of the large dimensional state space $D$ which are never observed in practice. Additionally, the right hand side of \eqref{eqn:Euler-Lagrange} can be computed offline and, as such, does not increase the online run time to solve the PDE \eqref{eqn:Euler-Lagrange} and obtain the investor's robust growth-optimal holdings.


\section{Examples} \label{sec:examples}
Below we consider several examples illustrating the theoretical results we derived above. In this section we freely use the notation established in the previous parts of the paper and, in particular the notation established in Section~\ref{sec:discussion} for the quantities studied in \cite{itkin2025ergodic}. In Section~\ref{sec:gradient} we describe how an explicit, in terms of the inputs $(c,p,b_Y)$ and ${\bf u}$ of Assumption~\ref{ass:inputs}\ref{item:divergence}, solution to \eqref{eqn:Euler-Lagrange} can be obtained when $\xi$ is a gradient. Section~\ref{sec:OU} contains an in-depth treatment when our setup is compatible with an Ornstein--Uhlenbeck (OU) specification. Assumption~\ref{ass:conditions} is carefully verified and optimal strategies, growth rates and worst-case measures in both the setup of this paper and the previous work \cite{itkin2025ergodic} are explicitly computed in arbitrary dimension. Section~\ref{sec:pairs_trading_framework} then sets $d=m=1$ and explores how our results can be applied to a pairs trading application. Section~\ref{sec:pairs_CTOU} complements the theoretical results of Section~\ref{sec:OU} with numerical simulations in the pairs trading context illustrating the behaviour of the strategies and quantifying their growth rates. The remaining sections, still through the lens of the pairs trading application, illustrate our results in more general specifications. Section~\ref{sec:fat_tails} focuses on an extension when the invariant measure is a bivariate $t$-distribution, capturing the tendency for financial returns to have fat tails, while Section~\ref{sec:stoch_vol} explores what happens when $Y$ is taken to be the stochastic volatility of $X$.\footnote{All of the code used to produce the numerical results of this paper are freely available on \href{https://github.com/balintg1994/stochastic-factors-can-matter-improving-robust-growth-under-ergodicity}{GitHub}. }

\subsection{Gradient case \texorpdfstring{$\xi = \nabla_x h$}{}} \label{sec:gradient}
If there exists a function $h: F \to \R$ such that $\xi =\nabla_x h$, then we directly see from \eqref{eqn:Euler-Lagrange_xi} that $\phi^* = h$ solves the Euler--Lagrange equation. In this case the optimal strategy is given by
\begin{equation} \label{eqn:gradient_case}
    \theta^*_t = \nabla_x h(Z_t) =\xi(Z_t) = \ell_X(Z_t) + (c_X)^{-1}(Z_t)\mathbf{u}(Z_t)p^{-1}(Z_t)
\end{equation}
When $d=1$, the gradient condition always holds regardless of the dimension of $D$ since any integrable univariate function is the derivative of its integral.
\subsection{Ornstein--Uhlenbeck dynamics} \label{sec:OU}

Here we set $E = \R^d$, $D = \R^m$ and we assume that the volatility matrix is constant, the invariant density is a centered Gaussian density and the drift of $y$ is affine. That is we take
\begin{equation} \label{eqn:Gaussian_inputs}
c(z) = c, \qquad p(z) = (2\pi)^{-\frac{d+m}{2}} (\det \Sigma)^{-1/2}\exp(-\frac{1}{2}z^\top \Sigma^{-1}z), \qquad b_Y(z) = \frac{1}{2}(\alpha + \beta z),
\end{equation}
where $c, \Sigma \in \mathbb{S}^{d+m}_{++}$ are the instantaneous and stationary covariance matrices respectively,

and $\alpha \in \R^m$, $\beta \in \R^{m\times (d+m)}$ are arbitrary at the moment, but will be chosen to ensure that the compatibility condition \eqref{eqn:compatibility_condition} is satisfied. In this section, given an invertible matrix $M$ we write $M^{-1}_X$ for $(M^{-1})_X$, and similarly for $M^{-1}_{XY},M^{-1}_{YX},M^{-1}_Y$, as these quantities appear frequently in the ensuing calculations and this convention improves the aesthetics of the lengthy formulas. We continue to write $(M_X)^{-1}$ and $(M_Y)^{-1}$ for the inverse of a particular block. 

Next we compute that
\[\ell_X(x,y) = -\frac{1}{2}(Ax + By), \qquad \ell_Y(x,y) = -\frac{1}{2}(Cx + Dy - \alpha),\]
where
\begin{align}
    A & = \Sigma^{-1}_X + (c_X)^{-1}c_{XY}\Sigma^{-1}_{YX}, \qquad & B  = \Sigma^{-1}_{XY} + (c_X)^{-1}c_{XY}\Sigma^{-1}_Y, \\
    C & = \Sigma^{-1}_{YX}+(c_Y)^{-1}c_{YX}\Sigma^{-1}_X+\beta_X, \qquad & D = \Sigma^{-1}_Y + (c_Y)^{-1}c_{YX}\Sigma^{-1}_{XY} + \beta_Y, 
\end{align}
and we split up $\beta = \begin{bmatrix} \beta_X & \beta_Y \end{bmatrix}$ so that $\beta_X \in \R^{m\times d}$ and $\beta_Y \in \R^{m \times m}$. By interchanging derivative and integral we see that the compatibility condition \eqref{eqn:compatibility_condition} here becomes 
\begin{align}
    0 = \int_{\R^d}\mathrm{div}_y(c_Y\ell_Y(x,y)p(x,y))dx & = -\frac{1}{2} \mathrm{div}_y\left(c_Y\int_{\R^d}(Cx + Dy -\alpha)p(x,y)dx\right) \\
    & = -\frac{1}{2}\mathrm{div}_y\Big(c_Y\big((C\Sigma_{XY}(\Sigma_Y)^{-1}+D)y - \alpha\big)p_Y(y)\Big),
\end{align}
where in the final equality we computed the conditional expectation of $CX|Y$ for $(X,Y) \sim N(0,\Sigma)$ and denoted by $p_Y$ the marginal density of $Y$.
It's clear that this condition holds for every $y \in \R^m$ if and only if $\alpha = 0$ and $C\Sigma_{XY}(\Sigma_Y)^{-1} +D = 0$. The latter condition amounts to requiring
\begin{equation} \label{eqn:beta_Y}
\beta_Y = -(\Sigma^{-1}_Y + (c_Y)^{-1}c_{YX}\Sigma^{-1}_{XY} + C\Sigma_{XY}(\Sigma_Y)^{-1}), 
\end{equation}
while the input $\beta_X$ remains free.

With the inputs now fixed we check Assumption~\ref{ass:conditions}. Item \ref{item:finite_growth} clearly holds as $\ell_X$ and $\ell_Y$ are both linear and item \ref{item:test_functions} is easily seen to hold due to the Gaussian tails of $p$ by choosing, for example, $\chi_n = 1*\eta_{1/n}$ for a standard mollifier $(\eta_\epsilon)_{\epsilon > 0}$. It remains to establish Assumption~\ref{ass:conditions}\ref{item:divergence}. Here, the PDE \eqref{eqn:div_PDE} for ${\bf u}$ is 
 \begin{equation} \label{eqn:u_Gaussian}
 \mathrm{div}_x {\bf u}(x,y) = -\frac{1}{2}\mathrm{div}_y\big(c_YC(x-\Sigma_{XY}(\Sigma_Y)^{-1}y)p(x,y)\big).
 \end{equation}
An explicit solution is given by
\[ {\bf u}(x,y) = - \frac{1}{2}(\Sigma_X^{-1})^{-1}C^\top c_Y(\Sigma_{YX}^{-1}x + \Sigma_Y^{-1}y)p(x,y).\]
Indeed, writing $M = (\Sigma_X^{-1})^{-1}C^\top c_Y$ we have that
\begin{align}
\mathrm{div}_x {\bf u}(x,y) & = \frac{1}{2}\mathrm{div}_x(M\nabla_y p(x,y)) = \frac{1}{2} \sum_{i=1}^d\sum_{j=1}^m M_{ij}\partial_{x_i}\partial_{y_j}p(x,y)=  \frac{1}{2} \sum_{j=1}^m\sum_{i=1}^d M^\top_{ji}\partial_{y_i}\partial_{x_i}p(x,y) \\
& = \frac{1}{2}\mathrm{div}_y(M^\top \nabla_x p(x,y)) = -\frac{1}{2}\mathrm{div}_y(c_YC(\Sigma_X^{-1})^{-1}(\Sigma^{-1}_X x + \Sigma^{-1}_{XY}y)p(x,y)) \label{eqn:u_verification_Gaussian} \\
& = -\frac{1}{2}\mathrm{div}_y\big(c_YC(x-\Sigma_{XY}(\Sigma_Y)^{-1}y)p(x,y)\big).
\end{align}
In the last step we used that $(\Sigma_X^{-1})^{-1}\Sigma^{-1}_{XY} = -\Sigma_{XY}(\Sigma_Y)^{-1}$, which follows from the fact that $\Sigma^{-1}$ is the inverse of $\Sigma$. Indeed, from this inverse relationship we have that $\Sigma^{-1}_X\Sigma_{XY} + \Sigma^{-1}_{XY}\Sigma_Y = 0$, so bringing the first term to the other side of the equality and multiplying both sides on the left by $(\Sigma_X^{-1})^{-1}$ and on the right by $(\Sigma_Y)^{-1}$ establishes the identity. Moreover, we have
\begin{align}
    \int_{\R^{d+m}} {\bf u}^\top(z) & (c_X)^{-1}(z)  {\bf u}(z) p^{-1}(z)dz \\
    & = \frac{1}{4}\int_{\R^{d+m}} (x-\Sigma_{XY}\Sigma_Y^{-1}y)^\top M^\top (c_X)^{-1}M(x-\Sigma_{XY}\Sigma_Y^{-1}y)p(x,y)dxdy < \infty,
\end{align}
so that Assumption~\ref{ass:conditions}\ref{item:divergence} holds. 

With the assumptions verified, Theorem~\ref{thm:main} yields a robust optimal strategy characterized by the solution of the PDE \eqref{eqn:Euler-Lagrange}. Here we observe that both $\ell_X$ and $(c_X)^{-1} {\bf u} p^{-1}$ are linear in $x$, so that $\xi(x,y) = \nabla_x Q(x,y)$ for some quadratic function $Q$. It follows from the discussion in Section~\ref{sec:gradient} that the optimal strategy is given by
\[\theta^*_t = \nabla_x \phi^*(Z_t) =  -\frac{1}{2}\underbrace{(A+(c_X)^{-1}(\Sigma_X^{-1})^{-1}C^\top c_Y\Sigma^{-1}_{YX})}_{M_X}X_t -\frac{1}{2} \underbrace{(B + (c_X)^{-1}(\Sigma_X^{-1})^{-1}C^\top c_Y\Sigma_Y^{-1})}_{M_Y}Y_t\] 
and the robust-optimal growth rate is given by
\begin{align}
    \lambda_{\Pcal} & = \frac{1}{2}\int_{\R^{d+m}}\nabla_x \phi^*(z)^\top c_X\nabla_x \phi^*(z)p(z)dz \\
    & = \frac{1}{8}\mathrm{Tr}(M_X^\top c_X M_X\Sigma_X)+ \frac{1}{4}\mathrm{Tr}(M_X^\top c_XM_Y\Sigma_{YX}) + \frac{1}{8}\mathrm{Tr}(M_Y^\top c_XM_Y\Sigma_Y). \label{eqn:lambda_P_Gaussian}
\end{align} 
Additionally, from Proposition~\ref{prop:worst_case} we see that the $Z = (X,Y)$ is an OU process under the worst-case measure $\P^*$,
\[d\begin{pmatrix}
    X_t \\ Y_t 
\end{pmatrix} = \begin{pmatrix}
    -\frac{1}{2}c_XM_XX_t  -\frac{1}{2}c_XM_YY_t \\
    \frac{1}{2}c_Y\beta_X X_t - \frac{1}{2}c_Y(\Sigma^{-1}_Y + (c_Y)^{-1}c_{YX}\Sigma^{-1}_{XY} + C\Sigma_{XY}(\Sigma_Y)^{-1})Y_t
\end{pmatrix}dt + c^{1/2}dW_t,\]
where we recalled the compatibility condition \eqref{eqn:beta_Y} for $\beta_Y$.

We now compare this setting and optimal strategy to the setup in \cite{itkin2025ergodic}.  The matrix $a(x)$ of \eqref{eqn:a} here is given by $c_Xp_X(x)$, where $p_X$ is the marginal density of $X$, so it is easy to see by inspection that $\widehat \phi(x) = \frac{1}{2}\log p_X(x)$ solves \eqref{eqn:Euler_Lagrange_x}. As such, the  robust growth-optimal strategy for the class $\Pi$ is given by 
\[\widehat \theta_t = \nabla \widehat \phi(X_t) = \frac{1}{2}\nabla \log p_X(X_t) =-\frac{1}{2}(\Sigma_X)^{-1}X_t.\]
To find the dynamics of the worst-case measure $\widehat \P$ we solve the PDE \eqref{eqn:v_PDE} for $\nabla_y \widehat v$, which has explicit solution given by
\begin{align} 
    \nabla_y \widehat v(x,y) & = \ell_Y^0(x,y) -  \frac{1}{2}(c_Y)^{-1}(\Sigma^{-1}_Y)^{-1}B^\top c_X(\Sigma^{-1}_X x +\Sigma^{-1}_{XY}y) \\
    & = - \frac{1}{2}(C^0x+D^0y) -  \frac{1}{2}(c_Y)^{-1}(\Sigma^{-1}_Y)^{-1}B^\top c_X(\Sigma^{-1}_X x +\Sigma^{-1}_{XY}y), \label{eqn:hatv_Gaussian}
\end{align}
where $C^0 = C-\beta_X$ and $D^0 = D-\beta_Y$ are the parts of $C$ and $D$ which are not affected by $\beta$.
We note that the right hand side for $\nabla_y \widehat v$ above is linear in $y$ and, as such, is indeed the gradient of some quadratic function $\widehat v$, which we do not write out explicitly. One can easily verify that \eqref{eqn:hatv_Gaussian} satisfies \eqref{eqn:v_PDE} and the verification follows in a very similar fashion to the computations in \eqref{eqn:u_verification_Gaussian}, where we verified that the explicit solution for ${\bf u}$ solves \eqref{eqn:u_Gaussian}, so we omit the verification here. 


It now follows from \eqref{eqn:hatP_dynamics} that under the worst-case measure $\widehat \P$, $Z$ follows a different OU process characterized by the dynamics
\[d\begin{pmatrix}
    X_t \\ Y_t
\end{pmatrix} = \begin{pmatrix}
    -\frac{1}{2}c_X(\Sigma_X)^{-1}X_t \\
    -\frac{1}{2}c_Y(C^0X_t +D^0Y_t) -\frac{1}{2}(\Sigma^{-1}_Y)^{-1}B^\top c_X(\Sigma^{-1}_X X_t +\Sigma^{-1}_{XY}Y_t)
\end{pmatrix}dt + c^{1/2}dW_t. \]
The growth rate of the robust-optimal strategy $\widehat \theta$ is the same under both $\P^*$ and $\widehat \P$, as both measures belong to $\Pi_0$. This growth rate is given by
\[\lambda_{\Pi} = g(\widehat \theta;\P^*) = g(\widehat \theta;\widehat \P) = \frac{1}{2}\int_{\R^d}\nabla \widehat \phi(x)^\top c_X \nabla \widehat \phi(x)p_X(x)dx = \frac{1}{8}\mathrm{Tr}((\Sigma_X)^{-1}c_X).\]
Conversely, despite $\theta^*$ having the growth-rate invariance property over the class $\Pcal$, this property does not extend to the larger class $\Pi$  as discussed in Section~\ref{sec:discussion}.  We are able to obtain by direct calculation that
\begin{align}
    g(\theta^*,\widehat \P) & = \int_{\R^{d+m}}\nabla_x\phi^*(z)^\top c_X\nabla\widehat \phi(x)p(z)dz - \frac{1}{2}\int_{\R^{d+m}}\nabla_x \phi^*(z)^\top c_X\nabla_x\phi^*(z)p(z)dz \\
    & = \frac{1}{4}\mathrm{Tr}(M_X^\top c_X) + \frac{1}{4}\mathrm{Tr}(M_Y^\top c_X(\Sigma_X)^{-1}\Sigma_{XY}) -  \lambda_{\Pcal},
\end{align}
where $\lambda_{\Pcal}$ was computed in \eqref{eqn:lambda_P_Gaussian}. We conclude this section by noting that $M_Y = 0$ if and only if 
\[\beta_X = -\Sigma^{-1}_{YX}- (c_Y)^{-1}c_{YX}\Sigma^{-1}_X - (c_Y)^{-1}(\Sigma^{-1}_Y)^{-1}B^\top c_X\Sigma^{-1}_X.\]
In this case it can be directly checked that $\theta^* = \widehat \theta$ and $\P^* = \widehat \P$ corresponding to the special
edge case mentioned in Section~\ref{sec:discussion} where knowledge of the stochastic factor does not increase the investor's robust growth rate. For all other choices of $\beta_X \in \R^{m \times d}$ the strategies $\theta^*$ and $\widehat \theta$ differ and the strict inequality $\lambda_\Pcal > \lambda_\Pi$ holds. 
\begin{remark}
Notice -- as a side result -- that our robust optimal growth rate problems under ergodicity maintain a linear (quadratic) character for drifts and strategies with respect to $\mathcal{P}$ and $\Pi$, respectively: their solutions equal the solutions of the corresponding linear quadratic problems formulated in a purely Gaussian universe.
\end{remark}

\subsection{Pairs trading}
\label{sec:pairs_trading}

\subsubsection{Pairs trading framework} \label{sec:pairs_trading_framework}
We briefly describe pairs trading strategies and how they can be embedded into our framework. Typically, such strategies trade on the \emph{spread} of two co-integrated securities. That is, if $S^1$ and $S^2$ are price processes of two cointegrated risky assets, a pairs trading strategy monitors the spread 
\begin{equation} \label{eqn:spread}
    X_t = aS^1_t - bS^2_t,
\end{equation} where $a$ and $b$ are constants chosen to ensure stationarity of the spread process. The ratio $a/b$ is often called the \emph{hedge ratio}, which we assume here to be constant. 

In its simplest instantiation, a pairs trading strategy bets on the spread process mean reverting to zero. That is, if the spread $X$ is positive the investor will short $S^1$ and long $S^2$, while if the spread is negative they will long $S^1$ and short $S^2$. More sophisticated pairs trading strategies take other factors into account when deciding the trading rule, such as allowing the mean reversion level of $X$ to vary stochastically. To compute the wealth dynamics of a pairs trading strategy we start by letting $Q_t$ be the number of units of $S^1$ that the investor holds at time $t$. The pairs trading approach then prescribes holding $-\frac{b}{a}Q_t$ units in $S^2$, with the trading activity financed by the risk-free asset. The wealth process then evolves according to
\[dV_t = Q_tdS^1_t - \frac{b}{a}Q_tdS^2_t = \frac{1}{a}Q_tdX_t.\]
As such, if we set $\theta_t = \frac{Q_t}{aV_t}$ then $V = V^\theta$, where $V^\theta$ is defined in \eqref{eqn:wealth}. The upshot is that one can solve the robust growth-optimization problem for the holdings $\theta$ by treating the spread $X$ as the one-dimensional asset process and then translate back to the holdings $Q$, invested in the original assets $S^1$ and $S^2$, via the transformation $Q_t = a\theta_tV_t^\theta$.
\subsubsection{Central Tendency Ornstein--Uhlenbeck model} \label{sec:pairs_CTOU}
In the pairs trading literature, models have been proposed to capture mean reversion properties of the spread process.\footnote{Some authors model the spread measured in logarithmic terms $a\log S^1_t - b\log S^2_t$, but since we wish to directly relate the spread to a tradeable security we work with the asset spread process \eqref{eqn:spread} instead.} A popular choice (see \cite{liu2017intraday,leung2018optimal}) is the so-called \emph{Central Tendency Ornstein--Uhlenbeck} (CTOU) model,
\begin{equation} \label{eqn:CTOU}
\begin{aligned}
    dX_t & = - \kappa_X(X_t-Y_t)dt&  + \sqrt{c_X} dW^X_t, \\
    dY_t & = -\kappa_YY_tdt&  + \sqrt{c_Y}dW^Y_t.
\end{aligned}
\end{equation}
Here $X$ mean reverts to the stochastic level $Y$ with mean reversion speed $\kappa_X > 0$ and $Y$ follows autonomous OU dynamics with mean reversion speed $\kappa_Y>0$. The volatility levels $\sqrt{c_X}$ an $\sqrt{c_Y}$ are assumed to be constant and the Brownian motions $W^X$ and $W^Y$ are uncorrelated. The bivariate process $Z = (X,Y)$ has $\Ncal(0,\Sigma)$ as its unique stationary measure, where
\begin{equation} \label{eqn:Sigma_pairs}
\Sigma = \begin{bmatrix} \frac{c_X}{2\kappa_X} + \frac{c_Y\kappa_X}{2\kappa_Y(\kappa_X+\kappa_Y)}& \frac{c_Y\kappa_X}{2\kappa_Y(\kappa_X+\kappa_Y)} \\ \frac{c_Y\kappa_X}{2\kappa_Y(\kappa_X+\kappa_Y)}
& \frac{c_Y}{2\kappa_Y}
\end{bmatrix}.
\end{equation}


Although a very useful model to inform portfolio selection, the CTOU model imposes strong modelling assumptions. In particular, the linear drift dynamics for $X$ directly imply that the growth-optimal portfolio is a linear feedback form function,
\[\theta^{\mathrm{CTOU}}_t = -\frac{\kappa_X}{c_X}(X_t-Y_t).\]

  We will apply the robust growth-optimal framework developed in this paper and in \cite{itkin2025ergodic} to study how robust the conclusions of the CTOU model are and what effects different assumptions about model uncertainty have on the strategy and the associated growth rate. To this end, we take a constant diagonal volatility matrix $c$ and a centered bivariate Gaussian invariant density $p$ with stationary covariance matrix $\Sigma$ given by \eqref{eqn:Sigma_pairs} as inputs, which puts us in the setting of Section~\ref{sec:OU} with $m=d=1$. These choices pin down the class of measures $\Pi$. To specify $\Pcal$ we additionally take $b_Y(y) = - \frac{\kappa_Y}{c_Y}y$ consistent with the CTOU model, which corresponds to $\beta_X = 0$ in the notation of Section~\ref{sec:OU}. 
  
  Since $Y$ is an autonomous one-dimensional process its  linear drift specification may be easier to statistically justify from data than the corresponding one for $X$. Moreover, if the investor uses exponentially weighted updates to their forecast $Y$, then the investor is effectively guaranteeing  (in the continuous-time limit) that the drift dynamics of $Y$ are as assumed. In this context, the class $\Pcal$ can be viewed as the mathematical idealization of expressing the investor's high confidence in the evolution of the autonomous process $Y$ relative to her lower confidence estimating the drift of the coupled process $X$. The class $\Pi$, in contrast, expresses equal uncertainty in the drift specifications of both $X$ and $Y$ and seeks to robustify over both those inputs.

Remarkably, the formulas of Example~\ref{sec:OU} show that $\theta^* = \theta^{\mathrm{CTOU}}$ and that the worst-case measure $\P^*$, for the class $\Pcal$, is a CTOU process. As such, the CTOU model can be viewed as a conservative modelling choice when considering measures that are consistent with a constant volatility matrix, centered Gaussian invariant density with covariance matrix $\Sigma$ of \eqref{eqn:Sigma_pairs} and autonomous OU dynamics for $Y$. In contrast, the strategy $\widehat \theta$ in this case is given by
\[\widehat \theta_t = - \frac{\kappa_X(\kappa_X +\kappa_Y)}{c_X(\kappa_X+\kappa_Y) + c_Y\kappa_Y}X_t\]
and the dynamics of $(X,Y)$ under $\widehat \P$, which is the worst-case measure for $\Pi$, are given by
\begin{align}
dX_t & = - \frac{c_X\kappa_X(\kappa_X +\kappa_Y)}{c_X(\kappa_X+\kappa_Y) + c_Y\kappa_Y}X_tdt & + \sqrt{c_X}dW^X_t, \\
dY_t & = \left(\frac{c_Y\kappa_X^2(\kappa_X+\kappa_Y)}{c_Y\kappa_X^2 + c_X\kappa_Y(\kappa_X+\kappa_Y)}X_t - \frac{(\kappa_X+\kappa_Y)(c_Y\kappa_X^2 + c_X\kappa_Y^2)}{c_Y\kappa_X^2 + c_X\kappa_Y(\kappa_X + \kappa_Y)}Y_t\right)dt & + \sqrt{c_Y}dW^Y_t.
\end{align}
The dynamics under $\widehat \P$ are consistent with the volatility matrix $c$ and invariant density $p$, but admits a different drift for $Y$ than the CTOU process does. The main feature of $\widehat \P$ is that it prescribes adversarial dynamics for $X$, making $X$ an autonomous diffusion. This is in stark contrast to the CTOU model \eqref{eqn:CTOU} where $Y$ was an autonomous diffusion and $X$ mean reverted to the stochastic target $Y$. It is evident that the strategy $\theta^*$ is suboptimal under $\widehat \P$ as it uses the superfluous factor $Y$, which does not appear in the dynamics of $X$, to make investment decisions. 

\begin{figure}
\centering
\includegraphics[width=0.75\linewidth]{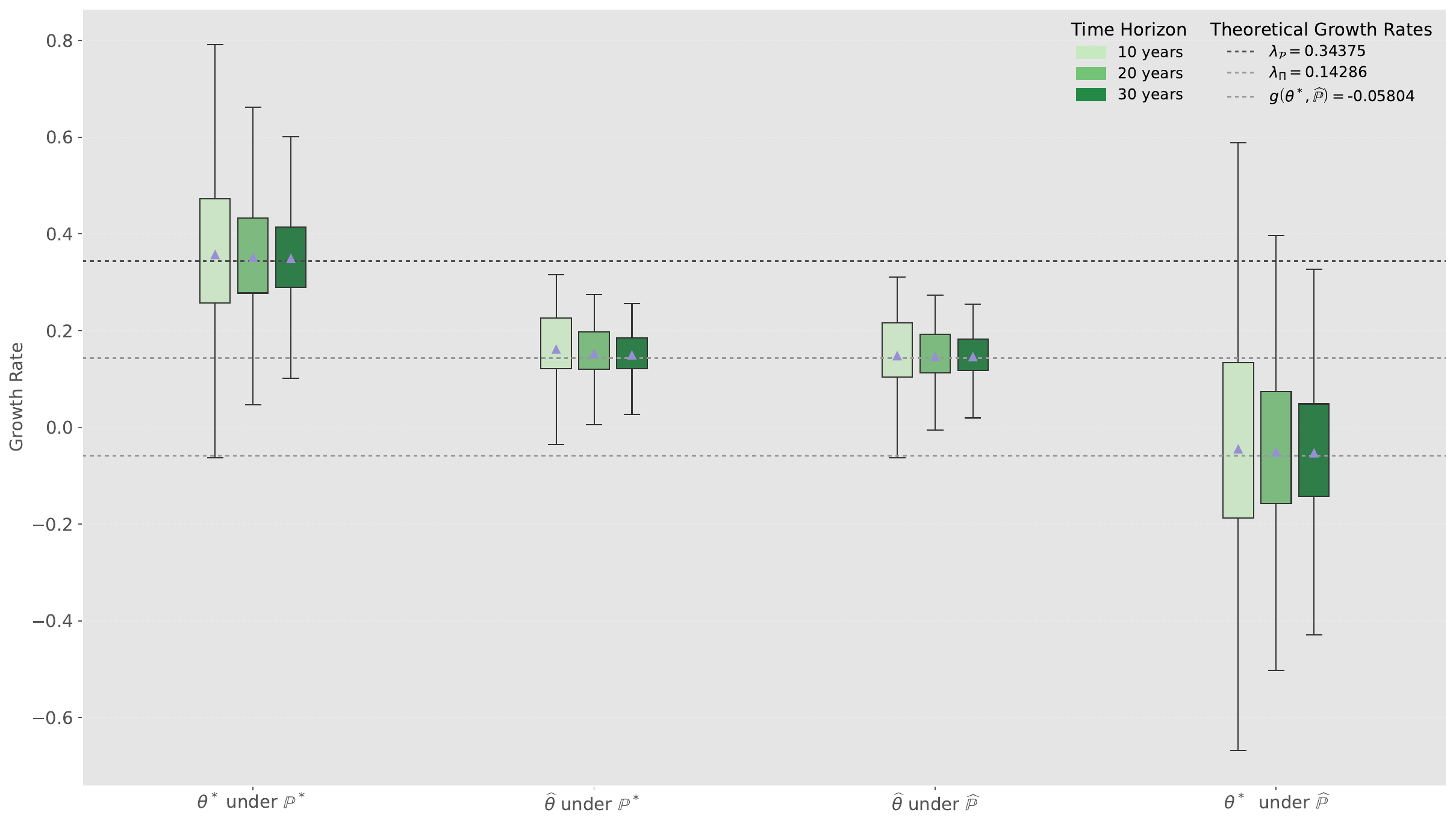}
    \caption{ Boxplots of growth rates for $\theta^*$ and $\widehat \theta$ under $\P^*$ and $\widehat \P$ obtained from 10,000 simulations with time horizon $T \in \{10,20,30\}$, increasing from light to dark green, with outliers omitted. The triangle in each box represents the mean and the dashed lines are the theoretical growth rates $\lambda_\Pcal, \lambda_{\Pi}$ and $g(\theta^*;\widehat \P)$ appearing in descending order.} \label{fig:OU_boxplot} \end{figure}

We now numerically illustrate the performance of the strategies. Our experiments use the following model parameters, in annual units, which are representative of a profitable pairs trading opportunity, 
\begin{equation} \label{eqn:OU_params} 
c_X = 0.04, \quad  c_Y = 0.0225, \quad \kappa_X = 1, \quad \kappa_Y = 1/2.
\end{equation}
Figure~\ref{fig:OU_boxplot} shows boxplots of the growth-rates $\frac{1}{T}\log V^\theta_T$  simulated up to the horizons $T \in \{10,20,30\}$ for $\theta \in \{\theta^*,\widehat \theta\}$ and under both worst-case measures $\P^*$ and $\widehat \P$.
As predicted by \eqref{eqn:growth_gap} and \eqref{eqn:growth_rate_differential}, the results show that the strategy $\theta^*$ outperforms $\widehat \theta$ under $\P^*$ and underperforms it under $\widehat \P$ by the same growth rate differential. The outperformance is substantial with $\lambda_{\Pcal} - \lambda_\Pi \approx 0.2$ showcasing the potential long-term benefits of utilizing the stochastic factor.  On the flip side, $g(\theta^*;\widehat \P) =-0.06$ leading not only to underperformance, but to investment losses when the underlying dynamics correspond to the worst-case measure $\widehat \P$ for the class $\Pi$. In contrast, the strategy $\widehat \theta$ is ambivalent to the underlying measure leading to a solid annual growth rate of approximately 0.14. Although not theoretically studied, the simulations also show that the growth-rate when using $\widehat \theta$ converges faster than its counterpart $\theta^*$ under both measures. 


Next, we compare the holdings prescribed by the two strategies. Figure~\ref{fig:OU_slices} plots the feedback form function $\widehat \phi'$ specifying the strategy $\widehat \theta$ together with the slices $\partial_x  \phi^*(\cdot,y)$, for certain fixed values of $y$, which specify the holdings $\theta^*$ when $Y$ takes the value y. We see that both strategies take short positions in $X$ when it is positive and long positions when it is negative, as expected for a pairs trading strategy. However, $\widehat \phi'$ is symmetric around zero, while $\partial_x\phi^*(\cdot,y)$ is symmetric around $y$ since the latter anticipates that the spread process $X$ is mean reverting to the level $y$. Additionally, although all of the slices $\partial_x \phi^*(\cdot,y)$ are parallel to each other, they are not parallel to $\widehat \phi'$. Indeed, the former has a steeper slope, which means that an investor using $\theta^*$ trades more aggressively than one using $\widehat \theta$ since an equal sized movement in $X$ leads to a larger change in the holdings $\theta^*$ than for its counterpart $\widehat \theta$. The less aggressive behaviour of $\widehat \theta$ is consistent with its role as the robust growth-optimal strategy under the larger class of measures $\Pi$. 
\begin{figure}
    \centering
    \includegraphics[width=0.75\linewidth]{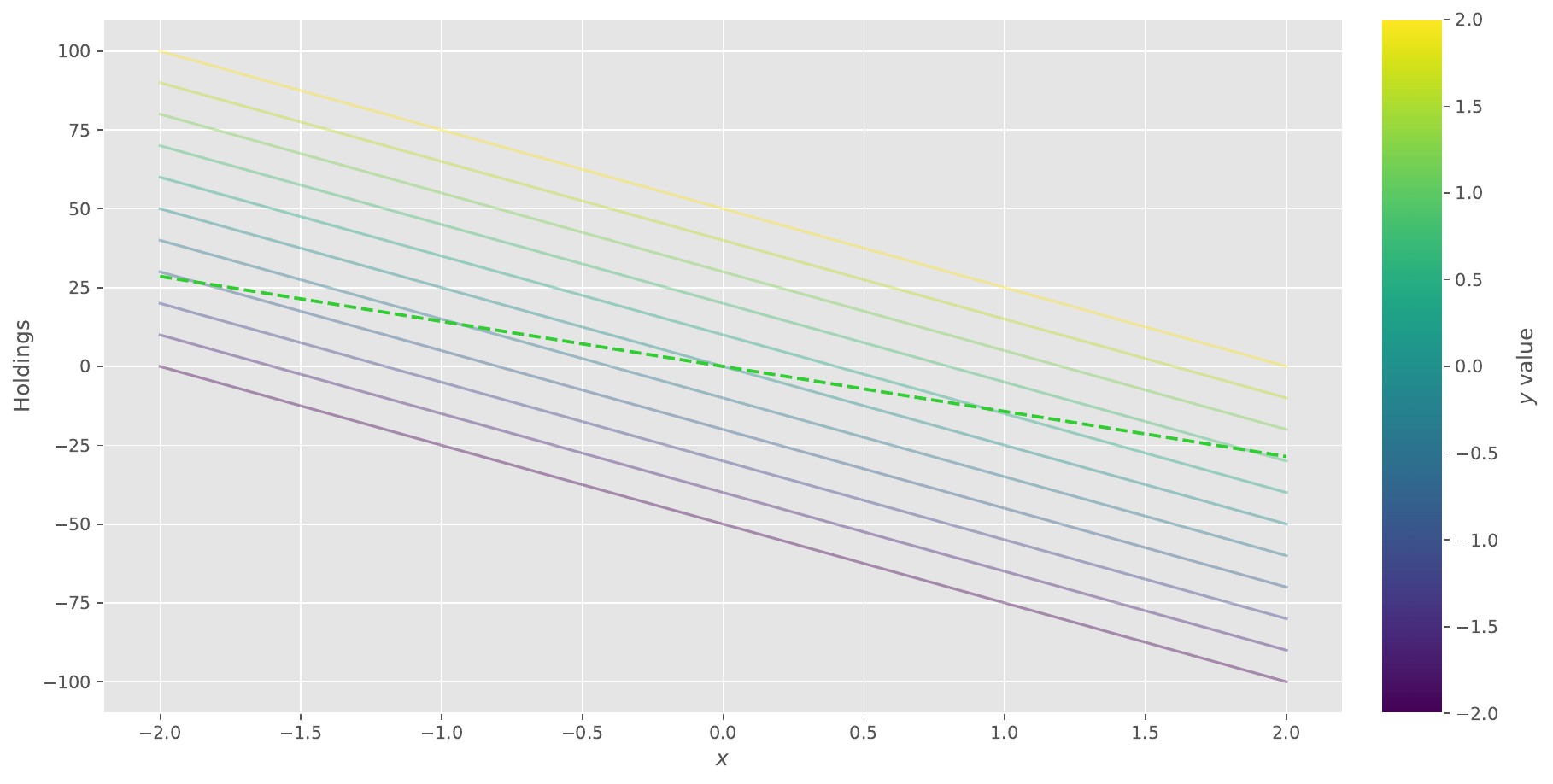}
    \caption{Slices of $\partial_x \phi^*(\cdot,y)$ for $y$ between $-2$ and $2$ for 11 equally spaced points (solid lines with y increasing from dark to light) plotted along side $\widehat \phi'$ (dashed line).}
    \label{fig:OU_slices}
\end{figure}

\subsubsection{Fat-tails} \label{sec:fat_tails}
In this section we explore an extension to invariant distributions with fat tails. Indeed, it is well-documented that asset returns distributions are fat-tailed (see e.g.\ \cite{cont2001empirical}) and, as such, it may be of interest to relax the Gaussianity assumption of the previous section. To this end we continue to work with a constant diagonal volatility matrix $c = \mathrm{diag}(c_X,c_Y)$, but now assume that the invariant distribution of $(X,Y)$ is a bivariate $t$-distribution,
\[p(z) = \frac{\Gamma\big(\frac{\nu + 2}{2}\big)}{\Gamma(\frac{\nu}{2})\nu\pi\sqrt{\det \Sigma}}\left(1 + \frac{1}{\nu}z^\top \Sigma^{-1}z\right)^{-\frac{\nu + 2}{2}}.\]
Here, as before, $\Sigma$ is a symmetric positive-definite matrix and $\nu > 1$ is the degrees-of-freedom parameter. For the sake of consistency with the previous subsection, we seek a choice for the input $b_Y$ that ensures $Y$ has autonomous dynamics; that is, we require it to be a function of $y$ only. It is straightforward to establish that the unique such choice satisfying the compatibility condition \eqref{eqn:compatibility_condition} and ensuring that $\ell_Y$ satisfies Assumption~\ref{ass:conditions}\ref{item:finite_growth} is
\[b_Y(y)  = \frac{1}{2}\frac{\int_{-\infty}^\infty\partial_y p(x,y)dx}{\int_{-\infty}^\infty p(x,y)dx} = \frac{1}{2}\frac{\partial}{\partial y}\log\left(\int_{-\infty}^\infty p(x,y)dx\right) = \frac{1}{2}(\log p_Y)'(y) = -\frac{\nu + 1}{2(\nu\Sigma_Y + y^2)} y,\]
where $p_Y$ is the marginal distribution of $Y$, which is known to be a univariate location-scale $t$-distribution with $\nu$ degrees of freedom, location parameter zero and scale parameter $\sqrt{\Sigma_{Y}}$. 

With the inputs fixed we now directly solve \eqref{eqn:div_PDE}  by integration to obtain
\begin{align}
{\bf u}(x,y) & = \partial_y\left(\int_{-\infty}^x c_Y\ell_Y(x',y)p(x',y)dx'\right) = \frac{c_Y}{2}\partial_y\left(\int_{-\infty}^x \partial_y p(x',y)dx' - 2b_Y(y)\int_{-\infty}^xp(x',y)dx'\right) \\
& = \frac{c_Y}{2}\partial_y\left(\int_{-\infty}^x \Big(\partial_y(p_Y(y)p_{X|Y}(x'|y)) - p_Y'(y)p_{X|Y}(x'|y)\Big)dx'\right) \\ & = \frac{c_Y}{2}\partial_y\left(p_Y(y)\int_{-\infty}^x \partial_y p_{X|Y}(x',y)dx'\right) = \frac{c_Y}{2}\partial_y(p_Y(y)\partial_y F_{X|Y}(x|y)),
\end{align}
where $F_{X|Y}(\cdot|y)$ is the CDF of $X|Y=y$. It is similarly known that $X|Y=y$ is an a location-scale $t$-distribution with $\nu+1$ degrees of freedom and with location parameter $\mu(y)$ and scale parameter $\tau(y)$ given by 
\[\mu(y) = \frac{\Sigma_{XY}}{\Sigma_Y}y, \qquad \tau^2(y) = \frac{( y^2+\Sigma_Y\nu)\det \Sigma}{(\nu + 1)\Sigma_Y^2}. \]
By centering and normalizing we obtain that $F_{X|Y}(x|y) = F_{\nu + 1}(\frac{x-\mu(y)}{\tau(y)})$, where $F_{\nu+1}(\cdot)$ is the CDF of a univariate $t$-distribution with $\nu+1$ degrees of freedom. As such, ${\bf u}$ can be explicitly computed and the expression involves the density of the $t$-distribution and its derivatives, but due to its length we omit it here.
The optimal strategy $\theta^*$ is then given from \eqref{eqn:gradient_case} as
\[\theta^*_t = \frac{1}{2}\partial_x \log p(X_t,Y_t) + \frac{c_Y}{2c_X}\frac{\partial_y \left(p_Y(Y_t)\partial_yF_{\nu+1}(\frac{X_t-\mu(Y_t)}{\tau(Y_t)})\right)}{p(X_t,Y_t)},\]
which can similarly be computed explicitly in closed form.\footnote{The analytic formula can be found in our code on GitHub.}

For the $Y$-unconstrained problem we first note that $a(x) = c_Xp_X(x)$, so from \eqref{eqn:Euler_Lagrange_x} we see that 
\[\widehat \theta_t = - \frac{1}{2}(\log p_X)'(X_t) = -\frac{(\nu + 1)X_t}{2(\nu \Sigma_X + X_t^2)}. \]
To better understand the differences between the two strategies we plot the slices $\partial_x \phi^*(\cdot,y)$ and  $\widehat \phi'$ in Figure~\ref{fig:tdist_slices}, akin to Figure~\ref{fig:OU_slices} which showcased the analogue for the Gaussian example. For ease of comparison we use the same $c$ and $\Sigma$ matrices here as in the Gaussian example, which are specified by \eqref{eqn:OU_params} and \eqref{eqn:Sigma_pairs}. The main difference then lies in the degrees-of-freedom parameter, which we take to be $\nu = 3$ so as to invoke a fat-tailed invariant measure.

\begin{figure}
    \centering
    \includegraphics[width=0.75\linewidth]{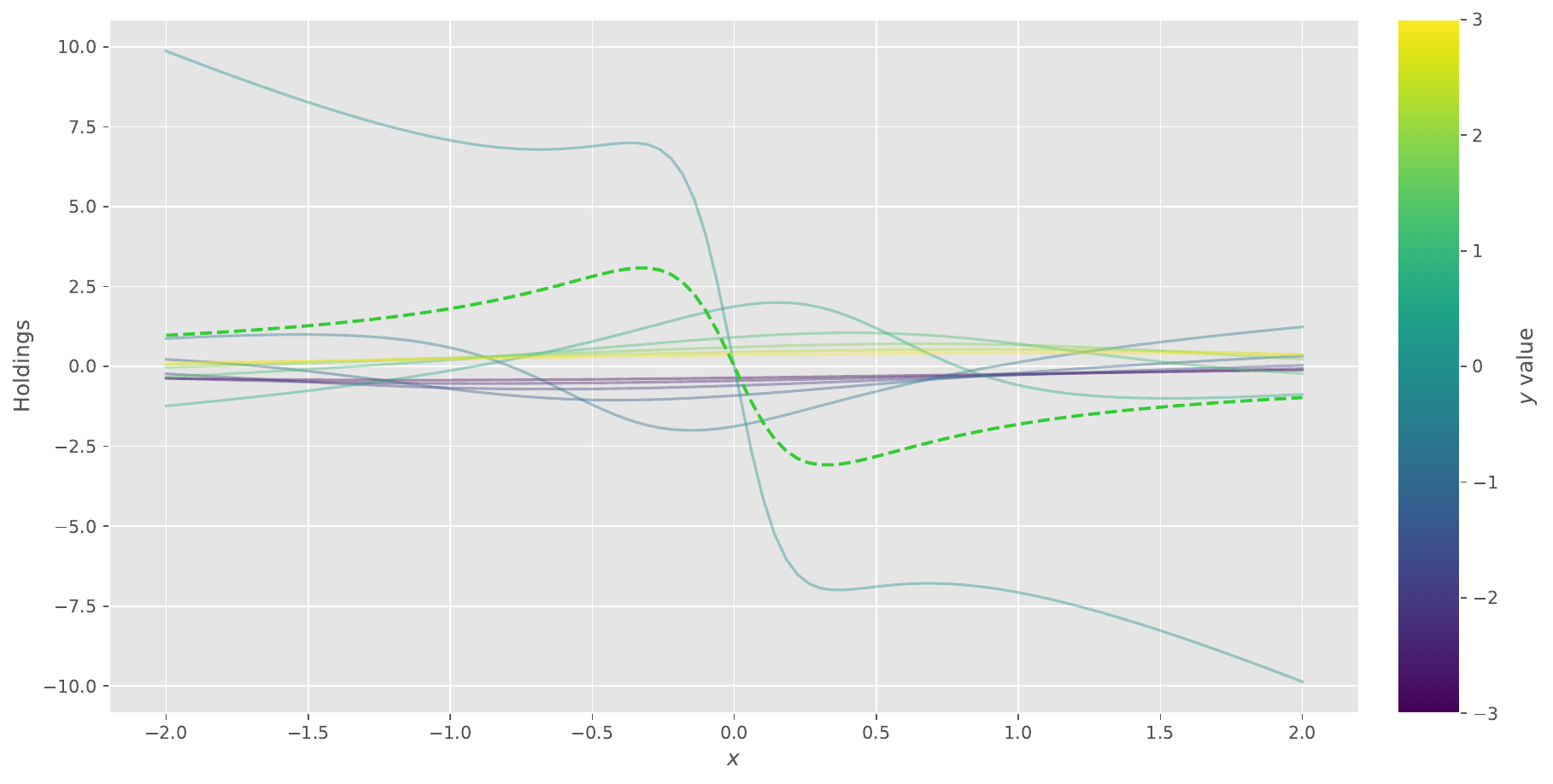}
  \caption{Slices of $\partial_x \phi^*(\cdot,y)$ for $y$ between $-3$ and $3$ for 11 equally spaced points (solid lines with y increasing from dark to light) plotted along side $\widehat \phi'$ (dashed line).}    \label{fig:tdist_slices}
\end{figure}

As in the Gaussian case, both $\widehat \phi'$ and $\partial_x \phi^*(\cdot,0)$ are symmetric around zero and the strategy $\theta^*$ takes larger positions (in absolute value) when $y = 0$ than $\widehat \theta$ does for the same spread process value $x$, reflecting its more aggressive trading tendencies. However, the net position taken for both strategies away from the mean level $x = 0$ is an order of magnitude smaller than prescribed in the Gaussian example. Due to the fat tails of the invariant distribution, the investor limits their position sizes to avoid the more likely risk of large losses. As in the Gaussian case, when $y$ differs from its stationary mean level of zero, the strategy $\theta^*$ is no longer symmetric around zero, but rather skews towards the value of $y$. Differently, however, the magnitude of the position decreases substantially when $y$ takes on a larger value reflecting the investor's caution due to the fact that large values of $Y$ may persist for longer periods of time because of the presence of fat tails.     

Another new feature present in this example is the nonlinear and nonmonotonic behaviour of the strategies. Although both strategies takes long positions when $X$ falls below its stochastic mean reversion level and short positions otherwise, the size of the position does not always increase with the spread level. This is most clearly illustrated by both $\widehat \phi'$ and the $\phi^*(\cdot,0)$ slice. As $x$ initially becomes negative (the positive side is symmetric) the investor increases their position in a steep, approximately linear, fashion. This continues until a critical point is reached, after which the investor maintains their long position, but reduces it relative to the peak level. In the case of $\widehat \theta$, the reduction continuous indefinitely, while the strategy $\theta^*$ starts increasing the holdings again after a further threshold is reached. 

This more complicated behaviour can be attributed to the effect the fat tails of the invariant measure have on the dynamics of $(X,Y)$. When $|X|$ is small, the investor is happy to acquire an increasingly large pairs trading position to benefit from mean reversion tendencies. However, as $X$ starts to grow in magnitude, the process with relatively high probability may locally maintain its value, or even continue to increase further, so the investor hedges their bet by reducing their position. In the case of full robustness over $X$ and $Y$ the investor continuous to reduce their position indefinitely as $|X|$ grows. Conversely, in the case when the dynamics of $Y$ are known, at a further threshold, the investor infers from the additional information available to them that the fat-tailed event they are witnessing is beyond a typical occurrence for the process $(X,Y)$ and starts to increase their holdings once more.

\subsubsection{Stochastic volatility} \label{sec:stoch_vol}
In this section we explore how our results can incorporate stochastic volatility. To this end, rather than using $Y$ to model the stochastic mean reversion level, we will use it to model the stochastic volatility of $X$. Concretely, we assume that the volatility of $X_t$ is given by $\sqrt{Y_t}$ and we model $Y$ as an autonomous Cox--Ingersoll--Ross (CIR) process,
\[dY_t = \kappa(\nu - Y_t)dt + \sigma\sqrt{Y_t} dW^Y_t.\]
Here $\nu > 0$ is the long-run mean reversion level for the volatility process, $\kappa > 0$ dictates the speed of mean reversion, $\sigma > 0$ is the vol-of-vol parameter and the Brownian motion $W^X$ driving $X$ is correlated with $W^Y$  with correlation coefficient $\rho \in (-1,1)$. In financial modelling, the CIR process appears in the Heston model \cite{heston1993closed} and, although mostly used for options pricing purposes, there are well-established calibration procedures to estimate the model parameters. When the Feller condition $2\kappa \nu > \sigma^2$ holds, it is well-known that $Y$ takes values in $D = (0,\infty)$ and its stationary distribution is $\Gamma(\alpha,\beta)$, where
\[\alpha = \frac{2\kappa\nu}{\sigma^2}, \qquad \beta = \frac{2\kappa}{\sigma^2}.\]
To finalize the inputs, it just remains to encode the long-run behaviour of $X$. Since $\sqrt{Y}$ represents its volatility we will assume that conditional on $Y=y$, the long-run distribution of $X$ is $\Ncal(0,y)$. That fixes our inputs to be
\[c(y) = \begin{bmatrix}
    y & \rho \sigma y \\ \rho \sigma y & \sigma^2 y
\end{bmatrix}, \qquad p^{\alpha,\beta}(x,y) =             \frac{1}{\sqrt{2 \pi y}} \exp \left( -\frac{x^2}{2y} \right)
     \frac{\beta^\alpha}{\Gamma(\alpha)}y^{\alpha-1}e^{-\beta y}, \qquad b_Y(y) = \frac{\kappa(\nu - y)}{\sigma^2 y},\]
     where we emphasize the parameters $\alpha,\beta$ in the invariant density notation.
This leads to 
\begin{align}
    \ell_X(x,y)  =  \frac{\rho\sigma x^2}{4y^2} - \frac{x}{2y} + \frac{4\kappa \nu \rho - \rho \sigma^2}{4\sigma y} - \frac{\kappa \rho}{\sigma}, \qquad \ell_Y(x,y) = \frac{x^2}{4y^2} - \frac{1}{4y} - \frac{\rho x}{2\sigma y}
\end{align}
and it is straightforward to check that the compatibility condition \eqref{eqn:compatibility_condition} holds with these inputs. 

We now derive the optimal strategies. For the robust problem over $\Pcal$ we begin by solving \eqref{eqn:div_PDE}, which has explicit solution
\[{\bf u}(x,y) = \left(\frac{-\sigma ^3 x^3+2 \rho  \sigma ^2 x^2 y+4 \kappa  \sigma  xy^2+(3 \sigma ^3-4 \kappa  \nu  \sigma)xy-8 \kappa  \rho  y^3+ (8 \kappa  \nu  \rho -2 \rho  \sigma ^2)y^2}{8 \sigma  y^2}\right)p(x,y)\]
obtained by integration and direct calculation. From \eqref{eqn:gradient_case} we then have that 
\begin{equation} \label{eqn:theta*_stochvol}
    \theta^*_t = \frac{-\sigma ^3 X_t^3+4 \rho  \sigma ^2 X_t^2 Y_t+(4 \kappa  \sigma -4 \sigma )X_tY_t^2+(3 \sigma ^3-4 \kappa  \nu 
   \sigma)X_tY_t -16 \kappa  \rho  Y_t^3+ (16 \kappa  \nu  \rho -4 \rho  \sigma ^2)Y_t^2}{8\sigma Y_t^3}.
\end{equation}
   For the $\Pi$ problem, we note that $a(x) = \int_0^\infty yp^{\alpha,\beta}(x,y)dy$ and $\partial_x p^{\alpha,\beta}(x,y) = -\int_0^\infty \frac{x}{y}p^{\alpha,\beta}(x,y)dy$. As such, from \eqref{eqn:Euler_Lagrange_x} we see that
   \[\widehat \theta_t = \frac{1}{2}(\log a)'(X_t) = -\frac{X_t}{2}\frac{\int_0^\infty p^{\alpha,\beta}(X_t,y)dy}{\int_0^\infty yp^{\alpha,\beta}(X_t,y)dy} = - \frac{\alpha}{2\beta} X_t\frac{p_X^{\alpha,\beta}(X_t)}{p_X^{\alpha+1,\beta}(X_t)},\]
   where $p_X^{\alpha,\beta}(x) = \int_0^\infty p^{\alpha,\beta}(x,y)dy$ is the marginal density of $X$. 

   The left panel of Figure~\ref{fig:stochvol_slices} shows the slices $\partial_x \phi^*(\cdot,y)$ for a range of values of $y$ together with $\widehat \phi'$, while the right panel plots $\widehat \phi'$ on its own. The model parameters here are chosen to be
   \[\kappa = 5, \qquad \nu = 0.04, \qquad \sigma = 0.6, \qquad \rho = 0.\] 
From the left panel we see that $\theta^*$ typically takes much larger positions than $\widehat \theta$. The strategy $\theta^*$ benefits from knowing the stochastic volatility level and the local dynamics of $Y$, allowing the investor to confidently take larger positions. As $y$ decreases, the absolute position sizes $|\theta^*|$ increase as the lower volatility level leads to a larger signal-to-noise ratio. By inspecting \eqref{eqn:theta*_stochvol} we see that, to leading order, the increase happens at a rate that is inversely proportional to $y^3$ indicating that $\theta^*$ achieves its growth-rate outperformance by trading very aggressively in low volatility environments.

\begin{figure}
    \centering
\includegraphics[width=1\linewidth]{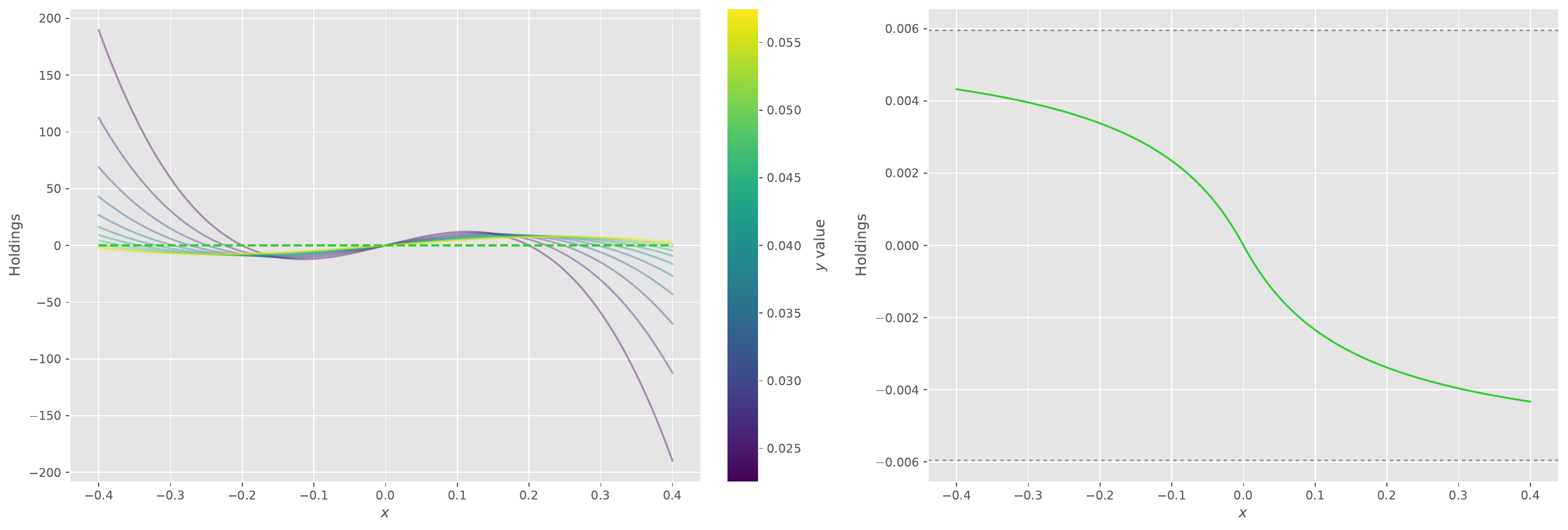}
\caption{The left panel plots slices of $\partial_x \phi^*(\cdot,y)$ for $y$ between $0.0225$ and $0.0575$ for 11 equally spaced points  (solid lines with y increasing from dark to light) plotted along side $\widehat \phi'$ (dashed line). The right panel depicts only $\widehat \phi'$ (solid line) together with the theoretical limiting holdings of the strategy $\lim_{x \to \pm \infty} \widehat \phi'(x) =  \mp \frac{\alpha^2}{\sqrt{2}\beta^{3/2}}$ (dashed lines).}    \label{fig:stochvol_slices}
\end{figure}

Surprisingly, the function $\partial_x \phi^*(\cdot,y)$ is cubic with a positive linear term for typical parameter values and, as such, contains a region close to zero, which prescribes the investor to take a position of the same sign as $X$ -- that is to bet that the magnitude of the spread will locally \emph{increase}. Once the magnitude of the spread reaches a large enough threshold the cubic term in the numerator of \eqref{eqn:theta*_stochvol} starts to dominate and the strategy takes on a more standard pairs trading form by taking a short position when the spread is positive and vice versa when the spread is negative. This surprising change in sign for the holdings can be attributed to the robustification over the drift of $X$. Although the investor views $Y$ purely as the stochastic volatility, in the worst-case measure $\P^*$, $Y$ also enters in the drift of $X$ playing a dual role. By inspecting the drift of $X$, it is evident that, for our parameter values and typical values of $y$, the drift of $X$ and $X$ itself share the same sign near zero. This suggests that, locally, the magnitude of the spread is likely to continue increasing, despite the fact that its long-run mean is zero. Moreover, the growth-rate invariance property of $\theta^*$ over $\Pcal_0$ guarantees that this strategy continues to perform well in every measure compatible with the three inputs $(c,p,b_Y)$. This surprising behaviour of the strategy highlights the importance of incorporating model uncertainty and illustrates how robust optimization can reveal roles the stochastic factor may play beyond its intended one.

From the scale of the left panel of Figure~\ref{fig:stochvol_slices}, the strategy $\widehat \theta$ appears flat due to it prescribing holdings orders of magnitude smaller than $\theta^*$ does. Here the class $\Pi$ essentially only encodes information about the invariant density $p$, with limited information about the volatility of $X$ known, since $c_X(Z_t) = Y_t$ and the dynamics of $Y$ are not fixed. This uncertainty over the dynamics of the stochastic volatility process leads to the observed conservative behaviour of $\widehat \theta$. The right panel of Figure~\ref{fig:stochvol_slices} provides a zoomed in look at the behaviour of this strategy. We see that near $x = 0$, the strategy prescribes a typical pairs trading position acquired at a near linear rate, qualitatively similar to the Gaussian case of Section~\ref{sec:pairs_CTOU}. However, as the spread continues to grow the holdings plateau, further reflecting the conservative approach it prescribes. Indeed, unlike its counterpart $\theta^*$, the holdings that $\widehat \theta$ prescribes are bounded with explicitly computable limit $\lim_{x \to \pm \infty} \widehat \phi'(x) = \mp \frac{\alpha^2}{\sqrt{2}\beta^{3/2}} \approx \mp 0.006$, which appear as dashed lines in the right panel of Figure~\ref{fig:stochvol_slices}. 

\section{Conclusion}
\label{sec:conclusion}
This paper studied a natural robust asymptotic growth optimization problem, where the quadratic variation and invariant density of $(X,Y)$ were fixed together with the drift of the stochastic factor process $Y$. We derived the robust growth-optimal strategy $\theta^*$, which is of gradient feedback-form type characterized by a function $\phi^*$ satisfying the Euler--Lagrange equation \eqref{eqn:Euler-Lagrange}, as well as the corresponding robust optimal growth rate $\lambda_{\Pcal}$ in a surprisingly explicit form, see Theorem \ref{thm:main}. This setup assumed more information about the dynamics of $Y$, through the input $b_Y$, than the previous study \cite{itkin2025ergodic}. This led to the optimal strategy $\theta^*$ depending, in a feedback form way, on $Y$, which the optimal strategy $\widehat \theta$ of \cite{itkin2025ergodic} did not depend on. In other words: non-traded factors matter if one has good knowledge about them.  
The ensuing examples showed that the full multidimensional problem is explicitly solvable in a Gaussian environment. We then used our robust framework to study a low-dimensional pairs trading application by robustifying the popular CTOU model as well as exploring extensions which accommodate fat tails and stochastic volatility. We showcased that knowledge of the stochastic factor can lead to strict improvement in growth rate of the strategy $\theta^*$ over $\widehat \theta$. However, if the investor is overconfident in the dynamics of the stochastic factor and the true drift does not coincide with what the input $b_Y$ specifies, then the strategy $\theta^*$ may substantially underperform $\widehat \theta$. An important problem for future work is to extend these results to other optimality criteria, such as power or exponential utility from wealth and/or consumption, so that the investor's risk-aversion can be incorporated into this framework. Exploring the finite horizon problem in detail is another interesting and important extension.

\appendix
\section{Proofs} \label{sec:proofs}
In this section we will make use of the following notation. We let $(E_n; n \in \N)$ be a sequence of increasing open sets with $C^1$ boundary, relatively compact in $E$ that exhaust $E$. That is, each $E_n$ is open, $\overline E_n \subset E_{n+1}$ and $E =  \bigcup_n E_n$.  We let $(D_n;n\in \N)$ serve an analogous role for $D$ and set $F_n = E_n \times D_n$ for every $n$. In the sequel we will make use of the $L^p$ and Sobolev spaces $L^p(U;V)$ and $W^{k,p}(U;V)$ respectively, for open domains $U$, vector spaces $V$, $p \geq 1$ and $k \in \N$. $L^p_{\mathrm{loc}}(U;V)$ denotes all $f: U \to V$ such that $f|_{K} \in L^p(K;V)$ for every $K \subset U$ compact and $W^{k,p}_{\mathrm{loc}}(U;V)$ is defined analogously. When $V  = \R$ we drop the range from the notation.

\subsection{Proof of Lemma~\ref{lem:variational}} \label{sec:proofs_variational}
For fixed $y \in D$ and any weakly differentiable $v:E \to \R$ we set
\[\|v\|_{\Wcal(y)} = \left(\int_E \nabla v(x)^\top c_X(x,y)\nabla v(x)p(x,y)dx\right)^{1/2}.\] We also define the equivalence relation $v \sim w \iff v-w$ is constant and denote the corresponding equivalence classes by $[v]$. We now define 
\[\Wcal(y) = \{ [v]: v \in W^{1,2}_{\mathrm{loc}}(E),\|v\|_{\Wcal(y)} < \infty\}.\]
It is easy to see that $\Wcal(y)$ is an inner product space when equipped with the inner product $(v,w)_{\Wcal(y)} = \int_E \nabla v(x)^\top c_X(x,y)\nabla w(x)p(x,y)dx$. In fact, $\Wcal(y)$ is a Hilbert space, which we now demonstrate. For any $n$ and any $v \in W^{1,2}(E_n)$ we have by the Poincar\'e inequality \cite[Theorem~5.8.1]{evanspartial2010} that 
\[\|v-v_{E_n}\|_{L^2(E_n)} \leq C_n\|\nabla v\|_{L^2(E_n)} \leq C_n'\|v\|_{\Wcal(x)},\]
where $v_{E_n} = \frac{1}{|E_n|}\int_{E_n}v(x)dx$, $C_n$ is the Poincar\'e inequality constant depending only on $E_n$ and $C_n' = C_n\epsilon_n^y$, where $\epsilon_n^y = 1/\inf_{x \in E_n} \{\lambda_{\min}(c_X(x,y))p(x,y)\}$.
Similarly we have that 
\[\|\nabla (v-v_{E_n})\|_{L^2(E_n)} = \|\nabla v\|_{L^2(E_n)} \leq \epsilon_n^y \|v\|_{\Wcal(y)}.\] As such it follows that if $(v^m)_{m \in \N}$ is a Cauchy sequence in $\Wcal(y)$ then $(v^m - v^m_{E_n})_{m \in \N}$ is a Cauchy sequence in $W^{1,2}(E_n)$ and hence has a limit $v_n$. Moreover, since $E_n \subset E_{n+1}$ we see that $v^m - v^m_{E_{n+1}} \to v_{n+1}$ in $L^2(E_n)$ as well. By writing $v^m - v^m_{E_{n+1}} = v^m - v^m_{E_{n}} + v^m_{E_{n}} - v^m_{E_{n+1}}$ we deduce that
\[v_{n+1} = v_n + C(n,n+1) \quad \text{on }E_n\]
for some constant $C(n,n+1)$. This now allows us to define for almost every $x \in E,$
\[v(x) = v_n(x) - \sum_{k=1}^{n-1} C(k,k+1); \quad \text{if }x \in E_n \text{ for } n=1,2,\dots\]
It is straightforward to check (in a similar way to the proof of Lemma~A.2 in \cite{kardaras2021ergodic}) that $v$ is well-defined. The above construction shows that $v^m \to v$ in $W^{1,2}_{\mathrm{loc}}(E)$. As such $\nabla v^m$ converges to $\nabla v$ almost everywhere. Hence, by Fatou's lemma we have that 
\[\|v^m - v\|_{\Wcal(y)} \leq \liminf_{k \to \infty}\|v^m-v^k\|_{\Wcal(y)},\]
which shows that $\lim_{m \to \infty} \|v^m- v\|_{\Wcal(y)} = 0$ since $(v^m)_{m \in \N}$ is Cauchy. This establishes completeness of $\Wcal(y)$. 
We now consider the subspace \[\Wcal_0(y) = \overline{\{\phi \in C_c^\infty(E)\}}^{\Wcal(y)}.\]
and establish a preliminary technical lemma, which seeks to solve a variational problem over the space $\Wcal_0(y)$.
\begin{lem} \label{lem:variational_technical}

    For every $y \in D$ there exists a unique solution $\phi^*(\cdot,y) \in \Wcal_0(y)$ to the variational problem   \begin{equation} \label{eqn:variational_y}
     \min_{\phi \in \Wcal_0(y)} J^y(\phi), \quad \text{where} \quad  J^y(\phi) = \int_E (\nabla \phi(x) - \xi(x,y))^\top c_X(x,y)(\nabla \phi(x) - \xi(x,y))p(x,y)dx.
    \end{equation}
    Moreover, one can select a version, which we again label $\phi^*$, such that $(x,y) \mapsto \phi^*(x,y)$ is measurable and $\phi^*(\cdot,y) \in C^2(E)$ for a.e.\ $y\in D$. Additionally, $\phi^*$ satisfies the PDE \eqref{eqn:Euler-Lagrange} and we have that $\nabla_x \phi^* \in L^q_{\mathrm{loc}}(F;\R^d)$ for every $n \in \N$ and $q \in [2,\infty)$. 
\end{lem}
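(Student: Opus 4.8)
The plan is to freeze the factor value $y$ and solve a Hilbert-space projection problem in the $x$-variable, handling the PDE, interior regularity, and measurability afterwards. First I would verify that $\xi(\cdot,y)$ lies in $L^2\big(E;c_X(\cdot,y)p(\cdot,y)\,dx;\R^d\big)$ for a.e.\ $y$: writing $\xi=\ell_X+(c_X)^{-1}\mathbf u\,p^{-1}$ and using the triangle inequality in this weighted space, this reduces to the two finiteness bounds $\int_F\ell_X^\top c_X\ell_X p\,dz<\infty$ and $\int_F\mathbf u^\top(c_X)^{-1}\mathbf u\,p^{-1}\,dz<\infty$ from Assumption~\ref{ass:conditions}\ref{item:finite_growth}--\ref{item:divergence}, together with Fubini. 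Since the already-established Hilbert space $\Wcal_0(y)$ is carried isometrically by $[v]\mapsto\nabla v$ onto a \emph{closed} subspace $\Gcal(y)$ of $L^2\big(E;c_X(\cdot,y)p(\cdot,y)\,dx;\R^d\big)$, and $J^y(\phi)=\|\nabla\phi-\xi(\cdot,y)\|^2$ on that space, the minimizer of \eqref{eqn:variational_y} exists, is unique, and equals the orthogonal projection of $\xi(\cdot,y)$ onto $\Gcal(y)$; this yields $\nabla_x\phi^*(\cdot,y)$ and hence $[\phi^*(\cdot,y)]\in\Wcal_0(y)$.

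Next I would extract the Euler--Lagrange equation as the first-order optimality condition $\int_E(\nabla\phi^*(\cdot,y)-\xi(\cdot,y))^\top c_X(\cdot,y)\nabla\psi\,p(\cdot,y)\,dx=0$ for all $\psi\in C_c^\infty(E)$, which is the weak form of \eqref{eqn:Euler-Lagrange_xi}. Substituting $\xi=\ell_X+(c_X)^{-1}\mathbf u\,p^{-1}$ turns the integrand into $(\nabla\phi^*-\ell_X)^\top c_X\nabla\psi\,p-\mathbf u^\top\nabla\psi$, and the weak-divergence identity for $\mathbf u$ in Assumption~\ref{ass:conditions}\ref{item:divergence} replaces $\int_E\mathbf u^\top\nabla\psi\,dx$ by $-\int_E\mathrm{div}_y(c_Y\ell_Y p)\psi\,dx$, giving \eqref{eqn:Euler-Lagrange} weakly for a.e.\ $y$. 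The decisive observation for regularity is that, once recast in this form, the right-hand side $\mathrm{div}_x(c_X\ell_X p)+\mathrm{div}_y(c_Y\ell_Y p)$ is a genuine function in $C^{0,\gamma}_{\mathrm{loc}}$ that does \emph{not} involve the merely $L^2_{\mathrm{loc}}$ field $\mathbf u$, since $\ell_X,\ell_Y\in C^{1,\gamma}$ by Assumption~\ref{ass:inputs}; together with the $C^{2,\gamma}$ coefficient $c_X p$ this lets interior De~Giorgi--Nash--Moser estimates promote the $W^{1,2}_{\mathrm{loc}}$ weak solution to $C^{0,\alpha}_{\mathrm{loc}}(E)$, after which interior Schauder estimates give $\phi^*(\cdot,y)\in C^{2,\gamma}_{\mathrm{loc}}(E)\subseteq C^2(E)$ for a.e.\ $y$.

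For joint measurability I would realize $\phi^*$ as a limit of minimizers over the \emph{fixed} spaces $W^{1,2}_0(E_n)$ of the truncated functionals $\phi\mapsto\int_{E_n}(\nabla\phi-\xi)^\top c_X(\nabla\phi-\xi)p\,dx$; on each bounded $E_n$ the weight $c_X p$ is bounded above and below and Poincar\'e holds, so the truncated minimizer $\phi^*_n(\cdot,y)$ exists and is unique, and it depends measurably on $y$ (e.g.\ via Galerkin approximations, whose coordinates solve linear systems with coefficients measurable in $y$ and hence are themselves measurable, then passing to the Galerkin limit). Because $\bigcup_n W^{1,2}_0(E_n)\supseteq C_c^\infty(E)$ is dense in $\Wcal_0(y)$, the nested projections $\phi^*_n(\cdot,y)$ converge to $\phi^*(\cdot,y)$ in $\Wcal(y)$, and strong limits of measurable maps are measurable, so $(x,y)\mapsto\phi^*(x,y)$ admits a jointly measurable version, which by the previous paragraph is $C^2$ in $x$ for a.e.\ $y$. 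The bound $\nabla_x\phi^*\in L^2_{\mathrm{loc}}(F)$ then follows from local uniform ellipticity of $c_X p$ and the energy estimate $\int_F\nabla_x\phi^{*\top}c_X\nabla_x\phi^* p\,dz<\infty$ (valid since $\phi^*$ also minimizes the functional of Lemma~\ref{lem:variational}, whose value at $\phi\equiv0$ is the finite quantity $\int_F\xi^\top c_X\xi p\,dz$), and for $q>2$ I would bootstrap: rewriting \eqref{eqn:Euler-Lagrange} in nondivergence form in $x$ and invoking interior $W^{2,q}$-estimates -- whose constants, and whose H\"older right-hand side, are locally bounded in $y$ because $c,p,b_Y$ are continuous -- upgrades $\nabla_x\phi^*\in L^p_{\mathrm{loc}}(F)$ to $\nabla_x^2\phi^*\in L^p_{\mathrm{loc}}(F)$, and Sobolev embedding in the $x$-variable raises the exponent, reaching any $q\in[2,\infty)$ in finitely many steps.

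The hard part, I expect, will be the measurable selection: the Hilbert spaces $\Wcal_0(y)$ genuinely vary with $y$ -- they encode the admissible behaviour of $v$ near $\partial E$ through a $y$-dependent weight -- so no abstract selection theorem applies directly, and the construction has to be routed through the fixed-domain Dirichlet approximations with a careful limiting argument based on uniqueness; obtaining the interior elliptic estimates with constants that are uniform in $y$ over compact subsets of $D$, needed both here and for the $L^q_{\mathrm{loc}}$ bootstrap, is the companion technical point.
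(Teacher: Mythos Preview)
Your proposal is correct and follows essentially the same architecture as the paper: freeze $y$, obtain existence/uniqueness by the direct method in the Hilbert space $\Wcal_0(y)$, read off the Euler--Lagrange equation as the first-order condition, pass from \eqref{eqn:Euler-Lagrange_xi} to \eqref{eqn:Euler-Lagrange} via the weak identity for $\mathbf u$, and then invoke interior elliptic regularity. The paper phrases existence via a minimizing sequence, weak compactness, and weak lower semicontinuity rather than your orthogonal-projection viewpoint, but these are equivalent.

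There are two places where the implementations diverge slightly. For the $L^q_{\mathrm{loc}}$ bound the paper does \emph{not} bootstrap: it covers $E_n$ by balls of fixed radius, applies \cite[Theorem~8.8]{gilbargelliptic2001} to get $W^{2,2}$, and then a single application of Krylov's interior $W^{2,q}$ estimate \cite[Theorem~11.2.3]{krylovsobolev2008} jumps directly to any $q\in[2,\infty)$; the key work is checking that the constants depend continuously on $y$ through the quantities $\inf_{E_{n'}}\lambda_{\min}(c_X)p$, $\|c_Xp\|_{L^\infty(E_{n'})}$, and $\|\mathrm{div}_x(c_Xp)\|_{L^\infty(E_{n'})}$, so that taking a supremum over $y\in D_{n'}$ is harmless. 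Your Sobolev-embedding bootstrap is equally valid but less direct. For the $C^2$ regularity and joint measurability the paper simply cites \cite[Lemma~A.8 and Theorem~B.1]{itkin2025ergodic}, noting that those arguments are local in $x$ and unaffected by the slightly different definition of $\Wcal_0(y)$ here; your Galerkin-on-$W^{1,2}_0(E_n)$ route is a reasonable concrete substitute, and your closing remark correctly identifies the $y$-varying-space issue as the real technical nuisance.
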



\begin{proof} We first fix $y$ and note that $J^y(\phi)$ is well defined for any $\phi \in \Wcal_0(y)$ as the value of the integral is the same for any representative of the equivalence class $[\phi]$. In the course of this proof we well write $|\xi|^2_{\Wcal(y)}$ for $\int_E \xi(x,y)^\top c_X(x,y)\xi(x,y)p(x,y)dx$ to simplify the exposition. Now by the reverse triangle inequality we have that $\sqrt{J^y(\phi)} \geq \|\phi\|_{\Wcal(y)} - |\xi|_{\Wcal(y)}$, or equivalently that
\begin{equation} \label{eqn:phi_wbound}
    \|\phi\|_{\Wcal(y)} \leq \sqrt{J^y(\phi)} + |\xi|_{\Wcal(y)}
\end{equation}
for any $\phi \in \Wcal_0(y)$. Next we set $\hat J^y := \inf_{ \phi \in \Wcal_0(y)} J^y(\phi)$, which is clearly nonnegative. Letting $(\phi_n)_{n \in \N}$ be a sequence in $\Wcal_0(y)$ approaching $\hat J^y$, we see from \eqref{eqn:phi_wbound} that $(\phi_n)_{n \in \N}$ is a norm bounded sequence in $\Wcal_0(y)$. As such there exists a subsequence that converges weakly to some $\phi^*(\cdot,y) \in \Wcal_0(y)$. Since $J^y$ is weakly lower semicontinuous we see that
\[J^y(\phi^*(\cdot,y)) \leq \liminf_{n \to \infty} J^y(\phi_n) = \hat J^y\] so that $\phi^*(\cdot,y)$ is a minimizer. It is clear that it is the unique minimizer by the strict convexity of $ J^y$. 
Next, we will show that $\phi^*(\cdot,y)$ is a weak solution to \eqref{eqn:Euler-Lagrange_xi}. To see this let $\psi \in C_c^\infty(E)$ be arbitrary and note that for any $\epsilon > 0$ we have by optimality of $\phi^*$ that
\begin{align} 
 0 & \geq \frac{1}{\epsilon}\big(J^y(\phi^*(\cdot,y)) - J^y(\phi^*(\cdot,y) \pm \epsilon \psi)\big) \\
 & = \pm 2 \int_E\nabla \psi(x)^\top c_X(x,y)(\nabla_x \phi^*(x,y) - \xi(x,y))p(x,y)dx \pm \epsilon \int_E \nabla \psi(x)^\top c_X(x,y)\nabla \psi(x)p(x,y)dx. 
\end{align}
Sending $\epsilon \to 0$ shows that
\[0 = \int_E \nabla \psi(x)^\top c_X(x,y)(\nabla_x \phi^*(x,y) - \xi(x,y))p(x,y)dx ; \qquad \forall \psi \in C_c^\infty(E),\]
which is precisely the weak formulation of \eqref{eqn:Euler-Lagrange_xi}. From Assumption~\ref{ass:conditions}\ref{item:divergence} we see that the weak formulations of \eqref{eqn:Euler-Lagrange_xi} and \eqref{eqn:Euler-Lagrange} are the same so that $\phi^*$ also solves \eqref{eqn:Euler-Lagrange} weakly.

That $\phi^*(\cdot,y) \in C^2(E)$, that it is a strong solution to \eqref{eqn:Euler-Lagrange} and that a jointly measurable in $(x,y)$ version can be selected now follows in exactly the same way as Lemma~A.8 and Theorem~B.1 in \cite{itkin2025ergodic} respectively. Indeed, the proofs of those results used local arguments working in the interior of the domain, which is entirely unaffected from the slightly different definition of the space $\Wcal_0(y)$ used here. We exclusively use this jointly measurable version $\phi^*$ going forward. 

The proof will be complete once we argue that $\nabla_x \phi^* \in L^q(F_n;\R^d)$ for every $q \geq 2$ and $n \in \N$. This final part of the proof follows in a similar fashion to \cite[Theorem~A.3]{itkin2025ergodic}. For the remainder of the proof we fix $n$ and set 
\[\delta = \mathrm{dist}(E_n,\partial E) \land 1,\]
where by convention the distance is infinity if $\partial E = \emptyset$. Next we pick balls $(B_{\delta/4}(x_i))_{i=1}^N$ of radius $\delta/4$ centered at some $x_1,\dots,x_N \in E_n$ which cover $E_n$. We note that the number of balls $N$ depends only on $\delta$ and $|E_n|$, and by construction $B_{\delta/4}(x_i) \subset  B_{\delta/2}(x_i) \subset E$ for every $i$. Moreover, there exists some $n' > n$ such that
\[E_n \subset \bigcup_{i=1}^N B_{\delta/4}(x_i)  \subset \bigcup_{i=1}^N B_{\delta/2}(x_i) \subset E_{n'}.\]
Next we define $u_n(x,y) = \phi^*(x,y) - \phi^*_{n'}(y)$, where $\phi^*_{n'}(y) = \frac{1}{|E_{n'}|}\int_{E_{n'}}\phi^*(x,y)dx$. Since $u_n(\cdot,y)$ solves \eqref{eqn:Euler-Lagrange_xi} in $E$, it also solves it in $B_{\delta/4}(x_i)$ so we can apply \cite[Theorem~8.8]{gilbargelliptic2001} to obtain that $u_n(\cdot,y) \in W^{2,2}(B_{\delta/4}(x_i))$ and that
\[\mathrm{div}_x(c_X(x,y)\nabla_x u_n(x,y)p(x,y)) = \mathrm{div}_x(c_X(x,y)\xi(x,y)p(x,y)), \quad \text{for a.e.\ } (x,y) \in  B_{\delta/4}(x_i) \times D.\]
Next we can apply \cite[Theorem~11.2.3]{krylovsobolev2008} to obtain that $u_n \in W^{2,q}(B_{\delta/4}(x_i))$ and we have the estimate
 \[\|u_n(\cdot,y)\|_{W^{2,q}(B_{\delta/4}(x_i))} \leq C_y'\big(\|\mathrm{div}_x(c_X(\cdot,y)\xi(\cdot,y)p(\cdot,y))\|_{L^2(B_{\delta/2}(x_i))} + \|u_n(\cdot,y)\|_{L^2(B_{\delta/2}(x_i))}\big), \]
 where we set $a = c_X(\cdot,y)p(\cdot,y), b = c = d =0, f = \mathrm{div}_x(c_X(\cdot,y)\xi(\cdot,y)p(\cdot,y)), r = \delta/4$ and $p=2$ in the notation of \cite{krylovsobolev2008}. Here $C_y'$ is a constant that only depends continuously on $y$ through the positive values
 \[ \inf_{x \in E_{n'}}\lambda_{\min}(c_X(x,y))p(x,y), \qquad \|c_X(\cdot,y)p(\cdot,y)\|_{L^\infty(E_n')}, \qquad \|\mathrm{div}_x(c_X(\cdot,y)p(\cdot,y))\|_{L^\infty(E_n';\R^d)}.\]
 As such, by Assumption~\ref{ass:inputs} on the inputs $c_X$ and $p$, we have that $C' := \sup_{y \in D_{n'}} C'_y < \infty$.
 Now by summing over $i$ we obtain the estimate
 \begin{align} \|u_n(\cdot,y)\|_{W^{2,q}(E_n)} &  \leq \sum_{i=1}^N \|u_n(\cdot,y)\|_{W^{2,q}(B_{\delta/4}(x_i))} \\
 & \leq  NC'_y\big(\|\mathrm{div}_x(c_X(\cdot,y)\xi(\cdot,y)p(\cdot,y))\|_{L^2(E_{n'})} + \|u_n(\cdot,y)\|_{L^2(E_{n'})}\big). 
 \end{align}
 Raising both sides to the power $q$, integrating over $y \in D_n$ and raising to the $1/q$ gives
 \[\|u_n\|_{L^q(F_n)} + \|\nabla_x u_n\|_{L^q(F_n;\R^d)}\leq NC(\|\mathrm{div}_x(c_X\xi p)\|_{L^2(F_n')} + \|u_n\|_{L^2(F_{n'})}), \]
 where $C$ is a constant depending only on $C'$ and $q$. Since $\nabla_x u_n = \nabla_x \phi^*$ the proof will be complete as soon we establish that $\|u_n\|_{L^2(F_{n'})} < \infty$. To this end we use the Poincar\'e inequality \cite[Theorem~5.8.1]{evanspartial2010} to obtain
 \begin{align}
     \|u_n(\cdot,y)\|_{L^2(E_{n'})} & \leq C_{n'}\|\nabla_x u_n(\cdot,y)\|_{L^2(E_{n'})} = C_{n'}\|\nabla_x \phi^* (\cdot,y)\|_{L^2(E_{n'})} \leq C_{n'}\epsilon_{n'}^y\|\phi^*(\cdot,y)\|_{\Wcal(y)}
     \label{eqn:un_bound}
 \end{align}
 where $C_n'$ is the constant coming from the Poincar\'e inequality, which depends only on $E_{n'}$ and we recall that $\epsilon_{n'}^y = \sup_{x \in E_{n'}} 1/\{\lambda_{\min}(c_X(x,y))p(x,y)\}$. Next, note that by optimality of $\phi^*$, we have the bound $J^y(\phi^*(\cdot,y)) \leq J^y(0) = |\xi|^2_{\Wcal(y)}$, which together with \eqref{eqn:phi_wbound} implies that $\|\phi^*(\cdot,y)\|_{\Wcal(y)} \leq 2|\xi|_{\Wcal(y)}$.
 Combining this estimate with \eqref{eqn:un_bound}, squaring both sides, integrating over $y \in D_{n'}$ and taking the square root again leads to the estimate 
 \begin{equation} \label{eqn:un_L2bound} 
 \|u_n\|_{L^2(F_{n'})} \leq \tilde C\left(\int_F\xi(z)^\top c_X(z)\xi(z)p(z)dz\right)^{1/2},
 \end{equation} where $\tilde C = 2C_{n'}\sup_{z \in F_{n'}} 1/\{\lambda_{\min}(c_X(z))p(z)\}$.
 The right hand side of \eqref{eqn:un_L2bound} is finite courtesy of Assumption~\ref{ass:conditions}\ref{item:finite_growth} and \ref{item:divergence}, which completes the proof.
\end{proof}
With this technical lemma proved we can now establish that $\phi^*$ solves the variational problem.
\begin{proof}[Proof of Lemma~\ref{lem:variational}]
    We have the following estimates,
    \begin{align}
        & \inf_{\phi \in \Dcal}\int_F (\nabla_x \phi(z)-\xi(z))^\top c_X(z)(\nabla_x \phi(z) - \xi(z))p(z)dz = \inf_{\phi \in \Dcal} \int_D J^y(\phi)dy 
         \geq \int_D \inf_{\phi \in C^2(E)} J^y(\phi)dy \\
        &\qquad   \geq \int_D \inf_{\phi \in \Wcal_0(y)} J^y(\phi)dy 
         =\int_F (\nabla_x \phi^*(z)-\xi(z))^\top c_X(z)(\nabla_x \phi^*(z) - \xi(z))p(z)dz, \label{eqn:variational_lower_bound}
    \end{align} 
    where $\phi^*$ is the constructed optimum from Lemma~\ref{lem:variational_technical}. The first inequality followed because for any $\phi \in \Dcal$ we have that $\phi(\cdot,y) \in C^2(E)$ for a.e.\ $y \in D$ and the second inequality followed because the space $\Wcal_0(y) \cup C^2(E)$ is larger than $C^2(E)$, but the optimizer cannot be in $C^2(E)\setminus \Wcal_0(y)$ as $J(\phi) = +\infty$ for such functions. To obtain an upper bound recall that $\phi^*$ is jointly measurable in $x$ and $y$ and $\phi^*(\cdot,y) \in C^2(E)$ so it is itself a member of $\Dcal$. Hence
    \begin{align} &
        \inf_{\phi \in \Dcal} \int_F (\nabla_x \phi(z)-\xi(z))^\top c_X(z)(\nabla_x \phi(z) - \xi(z))p(z)dz \\
        & \qquad \qquad \leq \int_F (\nabla_x \phi^*(z)-\xi(z))^\top c_X(z)(\nabla_x \phi^*(z) - \xi(z))p(z)dz \label{eqn:variational_upper_bound}
    \end{align}
The two bounds \eqref{eqn:variational_lower_bound} and \eqref{eqn:variational_upper_bound} together establish optimality of $\phi^*$. The uniqueness statement follows from the unique equivalence class solution $[\phi^*(\cdot,y)]$ to the variational problem guaranteed by Lemma~\ref{lem:variational_technical}. 
\end{proof}

\subsection{Proof of Proposition~\ref{prop:worst_case}}

Next we turn our attention to the construction of the worst-case measure $\P^*$. Since we only have Sobolev regularity for $(x,y) \mapsto \nabla_x \phi^*(x,y)$, rather than say joint Lipschitz continuity or differentiability in $x$ and $y$, classic SDE stability and recurrence results are not available. Instead, the proof follows in a very similar fashion to \cite[Theorem~5.5]{itkin2025ergodic} using very recent results on generalized Dirichlet forms. We refer to the text \cite{leeanalytic2022} for an accessible presentation of the generalized Dirichlet form results and for any notation and terminology below that is not explicitly defined in this paper.

To carry out this program we introduce the symmetric Dirichlet form $(\Ecal^0,D(\Ecal^0))$ as the closure on $L^2(F,\mu)$ of
\[\Ecal^0(u,v) = \int_F \nabla u(z)^\top c(z)\nabla v(z)p(z)dz; \qquad u,v \in C_c^\infty(F),\]
where $L^2(F,\mu)$ is the space of square-integrable functions with respect to the measure $d\mu(z) = p(z)dz$. The generator $(L^0,D(L^0))$ corresponding to this Dirichlet form is readily seen, via integration by parts, to satisfy
\[L^0u  = \frac{1}{2}\mathrm{Tr}(c\nabla^2u) + (\ell^0)^\top c \nabla u; \qquad u \in C_c^\infty(F),\]
where $\ell^0 = \frac{1}{2}(c)^{-1}\mathrm{div} \, c + \frac{1}{2}\nabla \log p$. Next we define the quantity
\[\beta(z) = \begin{pmatrix}c_X(z)\nabla_x \phi^*(z) \\ c_Y(z)b_Y(z) \end{pmatrix} - c(z)\ell^0(z) = \begin{pmatrix}
    c_X(z)(\nabla_x \phi^*(z) - \ell_X(z)) \\ -c_Y(z)\ell_Y(z)
\end{pmatrix},\]
where we recall $\ell_X$ and $\ell_Y$ given by \eqref{eqn:ell_X} and \eqref{eqn:ell_Y} respectively.
The vector field $\beta $ will play the role of a $\mu$-divergence free perturbation to the symmetric Dirichlet form. The construction of the measures $(\P^*_z)_{z \in F}$ crucially relies on the following technical lemma.

\begin{lem} \label{lem:L1_semigroup}
    There exists an operator $(L,D(L))$ on $L^1(F,\mu)$ such that the following hold:
    \begin{enumerate}
        \item $C_c^\infty(F) \subset D(L)$ and
        \[Lu = L^0u + \beta^\top \nabla u; \qquad u \in C_c^\infty(F),\]
    \item For every bounded $u \in D(L)$ and every compactly supported and bounded $v \in W^{1,2}(F)$ we have
    \begin{equation} \label{eqn:IBP}
        \Ecal^0(u,v) - \int_F v(z)\beta(z)^\top \nabla u(z)p(z)dz =- \int_F v(z)Lu(z)p(z)dz,
    \end{equation}
    \item $(L,D(L))$ generates a strongly continuous contraction semigroup $(T_t)_{t \geq 0}$ on $L^1(F,\mu)$. Moreover, $T_tf$ has a continuous version $P_tf$ for every $f \in \Bcal_b(F)$ and $t > 0$. \label{item:semigroup}
        \end{enumerate}
\end{lem}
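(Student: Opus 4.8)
The plan is to identify $(L,D(L))$ as the $L^1(F,\mu)$-generator of the diffusion associated with the symmetric Dirichlet form $(\Ecal^0,D(\Ecal^0))$ perturbed by the $\mu$-divergence free drift $\beta$, mirroring the argument of \cite[Theorem~5.5]{itkin2025ergodic} and using the generalized Dirichlet form results collected in \cite{leeanalytic2022}. The first task, \textbf{Step~1}, is to show that $\beta$ is $\mu$-divergence free, i.e.\ $\int_F \beta(z)^\top \nabla v(z)\,p(z)\,dz = 0$ for every $v\in C_c^\infty(F)$. Writing $\beta=(\beta_X,\beta_Y)$ with $\beta_X = c_X(\nabla_x\phi^* - \ell_X)$ and $\beta_Y = -c_Y\ell_Y$, I would substitute $\ell_X = \xi - c_X^{-1}\mathbf u\,p^{-1}$ and use the weak Euler--Lagrange equation \eqref{eqn:Euler-Lagrange_xi} for $\phi^*$ together with the weak equation defining $\mathbf u$ in Assumption~\ref{ass:conditions}\ref{item:divergence} to rewrite the $\beta_X$-contribution; after one integration by parts in $y$ it cancels the $\beta_Y$-contribution exactly (all identities hold for a.e.\ $y\in D$ and extend to test functions on $F$ by Fubini). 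I would also record that $\beta\in L^q_{\mathrm{loc}}(F;\R^{d+m})$ for every $q\in[2,\infty)$, since $\nabla_x\phi^*\in L^q_{\mathrm{loc}}$ by Lemma~\ref{lem:variational} while $\ell_X,\ell_Y$ are continuous, hence locally bounded, by Assumption~\ref{ass:inputs}; the global integrability of $\beta$ against $c^{-1}$ and $\mu$ needed below is inherited from Assumption~\ref{ass:conditions}\ref{item:finite_growth}--\ref{item:divergence} and the bound $\|\phi^*(\cdot,y)\|_{\Wcal(y)}\le 2|\xi|_{\Wcal(y)}$ from the proof of Lemma~\ref{lem:variational}, exactly as in \cite[Theorem~5.5]{itkin2025ergodic}.

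\textbf{Step~2: construction of the $L^1$-semigroup.} By Assumption~\ref{ass:conditions}\ref{item:test_functions} and the regularity of $c,p$, $(\Ecal^0,D(\Ecal^0))$ is a regular, strongly local, recurrent symmetric Dirichlet form on $L^2(F,\mu)$ with generator acting as $L^0$ on $C_c^\infty(F)$. Since $\beta$ is $\mu$-divergence free and in $L^2_{\mathrm{loc}}(\mu)$, the perturbed bilinear form $\Ecal(u,v) = \Ecal^0(u,v) - \int_F v\,\beta^\top\nabla u\, p\,dz$ fits into the framework of generalized Dirichlet forms, and invoking the construction in \cite{leeanalytic2022} one obtains a sub-Markovian strongly continuous semigroup on $L^2(F,\mu)$ whose generator extends $(L^0+\beta^\top\nabla,\,C_c^\infty(F))$. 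By sub-Markovianity this semigroup is consistently defined as a contraction semigroup on $L^p(F,\mu)$ for all $p\in[1,\infty]$, strongly continuous for $p<\infty$; taking $p=1$ produces $(L,D(L),(T_t))$ and yields item (i). Conceptually, what makes the $L^1$ picture work is that divergence-freeness makes $L$ conservative, $\int_F Lu\,d\mu = 0$ for $u\in C_c^\infty(F)$, hence dissipative on $L^1$; the missing density (range) condition is precisely what the $L^2$-construction supplies.

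\textbf{Step~3: integration by parts formula and regularity of $P_tf$.} Identity \eqref{eqn:IBP} is a direct integration by parts for $u,v\in C_c^\infty(F)$, and extends to bounded $u\in D(L)$ and bounded compactly supported $v\in W^{1,2}(F)$ by approximation in the graph norm of $L$ and the norm of $D(\Ecal^0)$, using that bounded elements of $D(L)$ belong to $D(\Ecal^0)$ with $\Ecal^0$-energy controlled by the graph norm; this is standard for such perturbations and can be imported from \cite{leeanalytic2022}. For item (iii), local parabolic regularity theory (De Giorgi--Nash--Moser; see e.g.\ \cite{krylovsobolev2008}) applied to $\partial_t u = Lu$ gives that $T_tf$ admits a locally H\"older continuous version $P_tf$ for $f\in\Bcal_b(F)$ and $t>0$: here $c$ is continuous and locally uniformly elliptic (Assumption~\ref{ass:inputs}) and the drift $c\ell^0+\beta$ lies in $L^q_{\mathrm{loc}}$ for every $q<\infty$, in particular for $q>d+m$, by Step~1.

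The only genuinely delicate ingredient is the $L^1$-solvability (equivalently $L^1$-uniqueness) behind Step~2 on the possibly unbounded domain $F$ with a merely Sobolev-regular drift — namely that $(\lambda-L)(C_c^\infty(F))$ is dense in $L^1(F,\mu)$. Because the symmetric part is recurrent and $\beta$ is exactly $\mu$-divergence free with the integrability established in Step~1, this falls within the scope of the generalized Dirichlet form theory of \cite{leeanalytic2022} and of \cite[Theorem~5.5]{itkin2025ergodic}; the effort lies in verifying those hypotheses in the present, slightly more general setting (extra input $b_Y$, unbounded $F$) rather than in developing new analysis.
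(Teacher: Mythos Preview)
Your proposal is correct and follows essentially the same approach as the paper: verify that $\beta\in L^q_{\mathrm{loc}}(F;\R^{d+m})$ for $q>d+m$ and that $\beta$ is $\mu$-divergence free, then invoke the Stannat/Lee machinery (the paper cites \cite[Theorem~1.5]{stannatdirichlet1999} for items (i)--(ii) and \cite[Theorem~2.31, Remark~2.40]{leeanalytic2022} for item (iii)). The only difference is that the paper obtains divergence-freeness in one line directly from the Euler--Lagrange equation in the form \eqref{eqn:Euler-Lagrange}, which already reads $\mathrm{div}_x\big(c_X(\nabla_x\phi^*-\ell_X)p\big)=\mathrm{div}_y(c_Y\ell_Y p)$, whereas your detour through $\xi$, $\mathbf u$ and an integration by parts in $y$ reconstructs the same identity; the global integrability and $L^1$-uniqueness considerations you raise are not needed here.
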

\begin{proof}
    The first two items of the lemma will follow from \cite[Theorem~1.5]{stannatdirichlet1999} and the last item from \cite[Theorem~2.31]{leeanalytic2022}(which is applicable courtesy of \cite[Remark~2.40]{leeanalytic2022}) as soon as we verify that $\beta \in L^q_{\mathrm{loc}}(F;\R^{d+m})$ for some $q > d+m$ and that
    \begin{equation} \label{eqn:divergence_free} 
    \int_F (L^0u(z) + \beta(z)^\top \nabla u(z) )p(z)dz = 0; \qquad \forall u \in C_c^\infty(F).
    \end{equation} We deduce from Assumption~\ref{ass:inputs} on $c\,, p$ and $b_Y$ together with the $L^q_{\mathrm{loc}}(F;\R^d)$ result for $\nabla_x \phi^*$ guaranteed by Lemma~\ref{lem:variational} that $\beta \in L^q_{\mathrm{loc}}(F;\R^{d+m})$ for every $q \in [2,\infty)$. In particular this holds for $q > d+m$. Next a direct calculation shows that $L^0 u(z)p(z) = \mathrm{div}(c(z) \nabla u(z) p(z))$ for every $u \in C_c^\infty(F)$. The divergence theorem, in turn, yields $\int_F L^0 u(z) p(z)dz = 0$. To handle the perturbation term we similarly use the divergence theorem,
    \begin{align}
        \int_F \beta(z)^\top \nabla u(z)& p(z)dz = -\int_F \mathrm{div}(\beta(z)p(z))u(z)dz  \\ & =
        -\int_F \Big(\mathrm{div}_x\big(c_X(z)(\nabla_x\phi^*(z)-\ell_X(z))p(z)\big) - \mathrm{div}_y(c_Y(z)\ell_Y(z)p(z))\Big)u(z)dz = 0,
    \end{align}
    where in the last equality we used \eqref{eqn:Euler-Lagrange}. This establishes \eqref{eqn:divergence_free} and completes the proof.
\end{proof} 
A consequence of Lemma~\ref{lem:L1_semigroup} is the existence of a diffusion process with semigroup $(P_t)_{t > 0}$ given in Lemma~\ref{lem:L1_semigroup}\ref{item:semigroup}. Formally we augment the state space $F$ with a cemetery state $\Delta$ by letting $F_\Delta = F \cup \{\Delta\}$ be the one-point compactification of $F$. Next we introduce the measurable space $(\Omega_\Delta,\Fcal_\Delta)$ given by
\[\Omega_\Delta = \{\omega \in C([0,\infty);F_\Delta): \omega_{t+h} = \Delta \text{ if } \omega_t = \Delta \text{ for all } h,t\geq 0\}\] 
and $\Fcal_\Delta$ being the Borel $\sigma$-algebra induced by the topology of local uniform convergence. With a slight abuse of notation we denote by $Z$ the coordinate process on this space. Then we have the following existence result
\begin{lem} \label{lem:diffusion_existence}
    There exists a diffusion  
    \[\mathbb{M} = (\Omega_\Delta,\Fcal_\Delta,(\Fcal_t)_{t \geq 0},(Z_t)_{t \geq 0},(\P^*_z)_{z \in F_\Delta})\]
    with state space $F$, lifetime $\zeta := \inf\{t \geq 0:Z_t = \Delta\}$ and transition semigroup $(P_t)_{t \geq 0}$ of Lemma~\ref{lem:L1_semigroup}\ref{item:semigroup}. That is for every $t \geq 0, \, z \in F$ and $f \in \Bcal_b(F)$ it holds that $P_tf(z) =  \E^*_z[f(Z_t)]$ where $\E^*_z[\cdot]$ denotes expectation under $\P^*_z$.
\end{lem}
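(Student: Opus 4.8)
The plan is to identify the semigroup $(P_t)_{t>0}$ furnished by Lemma~\ref{lem:L1_semigroup}\ref{item:semigroup} as the transition semigroup of a diffusion, arguing exactly as in \cite[Theorem~5.5]{itkin2025ergodic} via the theory of generalized Dirichlet forms as presented in \cite{leeanalytic2022}. First I would verify that $(P_t)_{t>0}$ is a genuine sub-Markovian semigroup of kernels on $(F,\Bcal(F))$ enjoying the strong Feller property for positive times. Sub-Markovianity of the $L^1$-semigroup $(T_t)_{t\geq 0}$, that is $0\leq T_tf\leq 1$ $\mu$-a.e.\ whenever $0\leq f\leq 1$, is built into the generalized Dirichlet form construction (see \cite[Theorem~1.5]{stannatdirichlet1999}), and by Lemma~\ref{lem:L1_semigroup}\ref{item:semigroup} each $T_tf$ with $f\in\Bcal_b(F)$ has a continuous version $P_tf$, so that $P_t(\Bcal_b(F))\subset C_b(F)$ for $t>0$. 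The semigroup identity $T_{t+s}=T_tT_s$ on $L^1(F,\mu)$ then yields $P_{t+s}f=P_t(P_sf)$ $\mu$-a.e.\ for $t,s>0$; since $\mu$ has full support ($p>0$) and both sides are continuous, the identity holds on all of $F$, and one sets $P_0=\mathrm{Id}$. Thus $(P_t)_{t\geq 0}$ is an honest sub-Markovian transition semigroup, strongly Feller at positive times.

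Next I would invoke the process-construction theorem of \cite{leeanalytic2022} (the one employed in \cite[Theorem~5.5]{itkin2025ergodic}). The reference form $(\Ecal^0,D(\Ecal^0))$ is the closure on $L^2(F,\mu)$ of a symmetric form defined on $C_c^\infty(F)$ and associated with the (locally uniformly) elliptic second-order operator $L^0$, whose coefficients are continuous by Assumption~\ref{ass:inputs}; in particular $(\Ecal^0,D(\Ecal^0))$ is a regular, hence quasi-regular, local symmetric Dirichlet form, and the drift $\beta$ satisfies $\beta\in L^q_{\mathrm{loc}}(F;\R^{d+m})$ for some $q>d+m$ by Lemma~\ref{lem:L1_semigroup}. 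These are precisely the structural conditions under which the generalized Dirichlet form obtained by perturbing $(\Ecal^0,D(\Ecal^0))$ by the $\mu$-divergence free vector field $\beta$ admits an associated $\mu$-tight special standard (in fact right) Markov process. Applying that existence theorem produces a quintuple $\mathbb{M}=(\Omega_\Delta,\Fcal_\Delta,(\Fcal_t)_{t\geq 0},(Z_t)_{t\geq 0},(\P^*_z)_{z\in F_\Delta})$ with state space $F$ whose transition semigroup coincides $\mu$-a.e.\ with $(P_t)_{t>0}$, so that $P_tf(z)=\E^*_z[f(Z_t)]$ for $\mu$-a.e.\ $z\in F$. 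The strong Feller property established in the first step then allows one to remove the exceptional set, upgrading this to the identity $P_tf(z)=\E^*_z[f(Z_t)]$ for \emph{every} $z\in F$, every $t\geq 0$ and every $f\in\Bcal_b(F)$ as claimed.

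It remains to argue that $\mathbb{M}$ is a diffusion, i.e.\ that $t\mapsto Z_t$ is $\P^*_z$-a.s.\ continuous on $[0,\zeta)$. This follows from locality: the symmetric part $\Ecal^0$ is a local form and the perturbation $u\mapsto\beta^\top\nabla u$ is a first-order, hence local, operator, so by the standard no-jump criterion for processes associated with local generalized Dirichlet forms the process $Z$ has continuous trajectories up to its lifetime; equivalently, $Z$ lives on the path space $\Omega_\Delta$ of continuous $F_\Delta$-valued paths absorbed at the cemetery $\Delta$. Since conservativeness of $\mathbb{M}$ is not asserted here (it forms part of showing $\P^*_z\in\Pcal_0$ in Proposition~\ref{prop:worst_case}), the cemetery state and lifetime $\zeta=\inf\{t\geq 0:Z_t=\Delta\}$ are retained. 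The main obstacle is the verification of the quasi-regularity and the associated structural ($\mathrm{D3}$-type) conditions needed to apply the abstract construction theorem of \cite{leeanalytic2022}; however, because the symmetric reference form and the integrability of the divergence-free drift are of exactly the same type as in \cite{itkin2025ergodic}, this verification amounts essentially to transcribing the argument given there.
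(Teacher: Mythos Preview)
Your proposal is correct and follows essentially the same route as the paper: the paper simply states that the proof proceeds exactly as in \cite[Lemma~C.3]{itkin2025ergodic}, as a consequence of Lemma~\ref{lem:L1_semigroup}, and omits the details. Your write-up spells out precisely that argument---sub-Markovianity and the strong Feller property of $(P_t)_{t>0}$, the process-construction theorem from \cite{leeanalytic2022} applied to the generalized Dirichlet form, removal of the exceptional set via strong Feller, and path continuity from locality---so there is nothing materially different between the two.
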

The proof of Lemma~\ref{lem:diffusion_existence} follows in exactly the same way as the proof of \cite[Lemma~C.3]{itkin2025ergodic}, as a consequence of Lemma~\ref{lem:L1_semigroup}, and is hence omitted. Next, we show that $\mathbb{M}$ is a global weak solution to \eqref{eqn:worst_case_SDE} and is ergodic with invariant measure $\mu$.

\begin{lem} \label{lem:diffusion_properties}
    The process $\mathbb{M}$ of Lemma~\ref{lem:diffusion_existence} is strictly irreducible, recurrent, nonexplosive and ergodic with invariant measure $\mu$. Moreover, $\mathbb{M}$ is a weak solution to $\eqref{eqn:worst_case_SDE}$ and \eqref{eqn:ergodic} holds for every locally bounded $h \in L^1(F,\mu)$.
\end{lem}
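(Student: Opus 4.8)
The plan is to mirror the generalized Dirichlet form argument of \cite{itkin2025ergodic} essentially verbatim, since Lemma~\ref{lem:L1_semigroup} has placed us in an identical analytic setting: the generator $(L,D(L))$ of $\mathbb{M}$ is the symmetric Dirichlet operator $L^0$ --- whose form $(\Ecal^0,D(\Ecal^0))$ is recurrent, this being exactly the content of Assumption~\ref{ass:conditions}\ref{item:test_functions}, cf.\ \cite[Theorem~1.6.3]{fukushimadirichlet994} --- perturbed by the $\mu$-divergence free vector field $\beta\in L^q_{\mathrm{loc}}(F;\R^{d+m})$ for every $q<\infty$; the reference measure $d\mu=p(z)\,dz$ is a \emph{finite} (probability) measure because $\int_F p(z)\,dz=1$; and the coefficients are nondegenerate, locally uniformly elliptic and locally bounded (ellipticity since $c(z)\in\mathbb{S}^{d+m}_{++}$, continuity of $c$ and $b_Y$ from Assumption~\ref{ass:inputs}, and local $L^q$-boundedness of the $X$-drift $c_X\nabla_x\phi^*$ from Lemma~\ref{lem:variational}). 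I would organize the proof around four claims --- strict irreducibility, conservativeness together with invariance of $\mu$, recurrence and ergodicity, and finally identification as a weak solution of \eqref{eqn:worst_case_SDE}.

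First I would establish strict irreducibility, i.e.\ $P_t\mathbf{1}_U(z)>0$ for every nonempty open $U\subset F$, $z\in F$ and $t>0$; this follows from local uniform ellipticity and local boundedness of the coefficients via the parabolic Harnack inequality (equivalently, the irreducibility criterion for the associated form), exactly as in \cite{itkin2025ergodic}. Next, the divergence-free identity \eqref{eqn:divergence_free} verified inside the proof of Lemma~\ref{lem:L1_semigroup} says that $\mu$ is infinitesimally invariant; combined with the fact that a $\mu$-divergence free $L^q_{\mathrm{loc}}$-perturbation ($q>d+m$) of the conservative symmetric form $\Ecal^0$ remains conservative, one obtains $P_t1\equiv1$, whence $\mathbb{M}$ is nonexplosive ($\zeta=\infty$ $\P^*_z$-a.s.), and $\mu$ is then genuinely $(T_t)$-invariant, $\int_F T_tf\,d\mu=\int_F f\,d\mu$ for all $f\in L^1(F,\mu)$ (sub-invariance being automatic and upgraded to invariance using conservativeness, applied also to the co-perturbation $-\beta$).

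With strict irreducibility, conservativeness and the \emph{finite} invariant measure $\mu$ in hand, the standard recurrent-or-transient dichotomy for irreducible Markov processes rules out transience, so $\mathbb{M}$ is recurrent --- in fact positive recurrent and ergodic with invariant probability measure $\mu$. The ergodic identity \eqref{eqn:ergodic} for every $z\in F$ and every locally bounded $h\in L^1(F,\mu)$ then follows from the ratio (Hopf/Chacon--Ornstein) ergodic theorem for recurrent Markov processes, with the passage from $\mu$-a.e.\ validity to validity from every starting point supplied by strict irreducibility; local boundedness of $h$ makes $h(Z_\cdot)$ well defined pathwise and the $L^1$ condition identifies the limit. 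All of this is carried out precisely as in \cite{itkin2025ergodic}.

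Finally I would identify $\mathbb{M}$ as a weak solution of \eqref{eqn:worst_case_SDE}. Using the definitions of $L^0$, $\ell^0$ and $\beta$, on $C_c^\infty(F)$ the operator rewrites as $Lu=\frac12\mathrm{Tr}(c\nabla^2u)+\langle b^*,\nabla u\rangle$, where $b^*$ is exactly the drift vector of \eqref{eqn:worst_case_SDE} ($c_X\nabla_x\phi^*$ in the $X$-block, $c_Yb_Y$ in the $Y$-block). The Dynkin formula for $\mathbb{M}$ shows that under each $\P^*_z$ the coordinate process solves the martingale problem for $(L,C_c^\infty(F))$; localizing to exhaust $F$, passing to coordinate functions and their products, and using nondegeneracy and continuity of $c$, one constructs the driving $(d+m)$-dimensional Brownian motion in the usual way, producing a global weak solution of \eqref{eqn:worst_case_SDE}. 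The step requiring the most care is the recurrence--ergodicity package --- promoting $\mu$-a.e.\ statements to statements valid from \emph{every} initial point, and handling the interplay of conservativeness, invariance and recurrence under the non-symmetric perturbation --- together with the fact that the $X$-drift is only $L^q_{\mathrm{loc}}$, so the SDE must be accessed through the martingale-problem/generalized Dirichlet form route rather than classical SDE theory; both difficulties are resolved exactly as in \cite{itkin2025ergodic} and \cite{leeanalytic2022}, which is why the proof can be kept brief.
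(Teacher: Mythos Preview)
Your proposal follows the same generalized Dirichlet form architecture as the paper and lands on the same conclusions, but there is one substantive difference in how recurrence is obtained. You argue: the symmetric form $\Ecal^0$ is recurrent (hence conservative) by Assumption~\ref{ass:conditions}\ref{item:test_functions}, a $\mu$-divergence free $L^q_{\mathrm{loc}}$ perturbation preserves conservativeness, and then conservativeness together with strict irreducibility and finiteness of $\mu$ forces recurrence. The paper instead proves recurrence of the \emph{perturbed} semigroup directly from the criterion of \cite{gim2018recurrence}, by showing
\[
\Ecal^0(\chi_n,\chi_n)+\int_F |\beta(z)^\top\nabla\chi_n(z)|\,p(z)\,dz \longrightarrow 0,
\]
where the second term is controlled via Cauchy--Schwarz using the integrability bounds of Assumption~\ref{ass:conditions}\ref{item:finite_growth}--\ref{item:divergence} and the variational estimate $\int_F \nabla_x\phi^{*\top}c_X\nabla_x\phi^*\,p\le \int_F \xi^\top c_X\xi\,p$. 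Nonexplosivity is then deduced \emph{from} recurrence (via \cite[Corollary~3.23]{leeanalytic2022}), not the other way around.

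The practical point is that your ``divergence-free perturbations preserve conservativeness'' step is precisely where the work lies, and in the references you invoke it is typically established through exactly the kind of test-function estimate the paper carries out; without it, the claim is asserted rather than proved. So your high-level route is fine, but to make it rigorous you would end up reproducing the paper's Cauchy--Schwarz computation on $\int_F|\beta^\top\nabla\chi_n|p$ anyway. The remaining steps --- strict irreducibility from \cite[Proposition~2.39]{leeanalytic2022}, ergodicity and \eqref{eqn:ergodic} as in \cite[Lemma~C.5]{itkin2025ergodic}, and the weak-solution identification via the martingale problem as in \cite[Theorem~3.22(i)]{leeanalytic2022} --- match the paper.
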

\begin{proof} The strict irreducibility claim is a consequence of \cite[Proposition~2.39]{leeanalytic2022}. Next, we will prove recurrence using the criteria developed in \cite{gim2018recurrence} together with Assumption~\ref{ass:conditions}\ref{item:test_functions}. Indeed, Assumption~\ref{ass:conditions}\ref{item:test_functions} ensures that there exist functions $\chi_n \in C_c^\infty(F)$ with $0 \leq \chi_n \leq 1$, converging pointwise to one and such that
\begin{align} 
\lim_{n \to \infty} &  \Ecal^0(\chi_n,\chi_n) = \int_F \nabla \chi_n(z)^\top c(z)\nabla \chi_n(z)p(z)dz \\
& \leq \lim_{n \to \infty}2\left(\int_F \nabla_x \chi_n(z)^\top c_X(z)\nabla_x\chi_n(z)p(z)dz + \int_F \nabla_y \chi_n(z)^\top c_Y(z)\nabla_y\chi_n(z)p(z)dz\right)  = 0,
\end{align}
where we used the inequality $v^\top c(z) v \leq 2(v_X^\top c_X(z)v_X + v_Y^\top c_Y(z)v_Y)$ for any $v \in \R^{d+m}$, which holds since $c(z)$ is positive definite.
Next, we use Cauchy--Schwarz to deduce that
\begin{align}
     & \int_F|\beta^\top \nabla \chi_n(z)|p(z)dz  = \int_F|(\nabla_x \phi^*(z) - \ell_X(z))^\top c_X(z)\nabla_x \chi_n(z) - \ell_Y(z)^\top c_Y(z) \nabla_y \chi_n(z)\big|p(z)dz \\
      \leq & \bigg(\!\int_F (\nabla_x \phi^*(z)-\ell_X(z))^\top c_X(z)(\nabla_x \phi^*(z) - \ell_X(z))p(z)dz\!\bigg)^{\!\frac{1}{2}}\!\bigg(\!\int_F \nabla_x \chi_n(z)^\top c_X(z)\nabla_x \chi_n(z)p(z)dz\!\bigg)^\frac{1}{2} \\
     & \quad + \left(\int_F \ell_Y(z)^\top c_Y(z)\ell_Y(z)p(z)dz\right)^\frac{1}{2}\left(\int_F\nabla_y \chi_n(z)^\top c_Y(z)\nabla_y \chi_n(z)p(z)dz\right)^\frac{1}{2}. \label{eqn:chi_n_bound}
\end{align}
Since $\phi^*$ is a minimizer of the variational problem \eqref{eqn:variational} we deduce from the fact that $0 \in \Dcal$, the bound $\int_F\nabla_x \phi^*(z)^\top c_X(z)\nabla_x \phi^*(z)p(z)dz \leq \int_F \xi(z)^\top c_X(z)\xi(z)p(z)dz$. Together with the integrability bounds of Assumption~\ref{ass:conditions}\ref{item:finite_growth} and \ref{item:divergence} we obtain that the integral terms on the right hand side of \eqref{eqn:chi_n_bound} are finite. Assumption~\ref{ass:conditions}\ref{item:test_functions} then yields the convergence to zero of the terms in \eqref{eqn:chi_n_bound} as $n \to \infty$. In summary, we have shown that
\[\lim_{n \to \infty} \left(\Ecal^0(\chi_n,\chi_n) + \int_F|\beta^\top(z)\nabla \chi_n(z)|p(z)dz\right)  = 0.\]
Remark~15 and Corollary~8(b) in \cite{gim2018recurrence} now yield recurrence of the semigroup $(P_t)_{t \geq 0}$, which implies recurrence of $\mathbb{M}$ as $(P_t)_{t \geq 0}$ is its semigroup. A consequence of recurrence is nonexplosivity of $\mathbb{M}$, which in this context can be deduced from \cite[Corollary~3.23]{leeanalytic2022} together with conservativity of $(P_t)_{t \geq 0}$, which is an immediate consequence of its recurrence. That $\mu$ is an ergodic measure for $\mathcal{M}$ and the ergodic property \eqref{eqn:ergodic} holds now follows in exactly the same way as in the proof of \cite[Lemma~C.5]{itkin2025ergodic}. That $\mathbb{M}$ is a weak solution to \eqref{eqn:worst_case_SDE} follows via standard arguments connecting the process $\mathbb{M}$ to the martingale problem for the generator $L$ via \eqref{eqn:IBP} and using the well known equivalence between martingale problems and weak solutions of SDEs. This is precisely the result \cite[Theorem~3.22(i)]{leeanalytic2022}, which we obtain here by following the proof of \cite[Chapter~3]{leeanalytic2022} verbatim from Proposition~3.12 onwards, but in our setting of a general open domain $F$ rather than all of $\R^{d+m}$. This completes the proof.   \end{proof}
We are now ready to show that $\P^*_z$ is in $\Pcal_0$.
\begin{proof}[Proof of Proposition~\ref{prop:worst_case}]
The process $\mathbb{M}$ of Lemma~\ref{lem:diffusion_existence} has been shown to satisfy \eqref{eqn:opt_wealth_easy} and the ergodic property \eqref{eqn:ergodic} in Lemma~\ref{lem:diffusion_properties}. As such, by definition of $\Pcal$,  its law $\P^*_z \in \Pcal$ for every $z \in F$.

It now remains to show the existence of a growth-optimal portfolio with finite asymptotic growth rate to deduce its inclusion in $\Pcal_0$. The discussion at the beginning of Section~\ref{sec:main_results} yields that \eqref{eqn:bP_condition} must hold where we replace $\P$ with $\P^*_z$ and $b_X^\P$ with $\nabla_x \phi^*(Z)$. But using the ergodic property and that $\phi^*$ is a minimizer for the variational problem \eqref{eqn:variational} we see that 
\[\lim_{T \to \infty} \frac{1}{T}\int_0^T \nabla_x \phi^*(Z_t)^\top c_X(Z_t)\nabla_x \phi^*(Z_t)dt = \int_F \nabla_x \phi^*(z)^\top c_X(z)\nabla_x \phi^*(z)p(z)dz < \infty; \qquad \P^*_z\text{-a.s.}\]
This shows that \eqref{eqn:bP_condition} holds and, in fact, that
\begin{equation} \label{eqn:worst_case_growth}
    \sup_{\theta \in \Theta} g(\theta;\P^*_z) = g(\theta^*;\P^*_z) = \frac{1}{2}\int_F \nabla_x \phi^*(z)^\top c_X(z)\nabla_x \phi^*(z)p(z)dz.
\end{equation}
This establishes that $\P^*_z \in \Pcal_0$ and completes the proof.
\end{proof}

\subsection{Proof of Theorem~\ref{thm:main}} \label{sec:proof_main}
It now just remains to prove the main result Theorem~\ref{thm:main}. To accomplish this we need the following lemma.
\begin{lem}  \label{lem:approx}
    There exist functions $ \phi_n \in C_c^\infty(F)$ such that 
    \[\lim_{n \to \infty} \int_F (\nabla_x \phi_n(z)-\nabla_x \phi^*(z))^\top c_X(z)(\nabla_x \phi_n(z) - \nabla_x \phi^*(z))p(z)dz = 0\]
\end{lem}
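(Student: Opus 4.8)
The plan is to show that $\phi^*$ lies in the closure of $C_c^\infty(F)$ with respect to the seminorm $\|v\|_\Wcal := \big(\int_F \nabla_x v(z)^\top c_X(z)\nabla_x v(z)\,p(z)\,dz\big)^{1/2}$, following the classical three-step recipe: truncate the \emph{values} of $\phi^*$, then truncate its \emph{support}, and finally mollify. The only property of $\phi^*$ beyond joint measurability that I would use is the finite-energy bound $\|\phi^*\|_\Wcal < \infty$; this is already contained in the proof of Lemma~\ref{lem:variational_technical}, where $\|\phi^*(\cdot,y)\|_{\Wcal(y)} \le 2\,|\xi|_{\Wcal(y)}$ is established, together with $\int_D |\xi|^2_{\Wcal(y)}\,dy = \int_F \xi(z)^\top c_X(z)\xi(z)p(z)\,dz < \infty$, which follows from the definition \eqref{eqn:xi} of $\xi$ and Assumption~\ref{ass:conditions}\ref{item:finite_growth}--\ref{item:divergence}. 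I would also use that, for a.e.\ $y$, $\phi^*(\cdot,y)\in C^2(E)$ is locally bounded and $\nabla_x\phi^*\in L^q_{\mathrm{loc}}(F;\R^d)$ by Lemma~\ref{lem:variational}. Throughout write $|v|^2_{c_X(z)} := v^\top c_X(z)v$.

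\emph{Step 1 (value truncation).} For $N\in\N$ set $\phi^*_N := (-N)\vee\phi^*\wedge N$, which is bounded and jointly measurable. By the chain rule $\nabla_x\phi^*_N(\cdot,y) = \mathbf{1}_{\{|\phi^*(\cdot,y)|<N\}}\nabla_x\phi^*(\cdot,y)$ for a.e.\ $y$, so the integrand $|\nabla_x\phi^*_N(z) - \nabla_x\phi^*(z)|^2_{c_X(z)}\,p(z) = |\nabla_x\phi^*(z)|^2_{c_X(z)}\,p(z)\,\mathbf{1}_{\{|\phi^*(z)|\ge N\}}$ tends to $0$ pointwise and is dominated by the integrable function $|\nabla_x\phi^*(z)|^2_{c_X(z)}\,p(z)$; dominated convergence gives $\|\phi^*_N - \phi^*\|_\Wcal\to 0$ as $N\to\infty$.

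\emph{Step 2 (support truncation).} Fix $N$ and let $(\chi_n)$ be the cutoffs from Assumption~\ref{ass:conditions}\ref{item:test_functions}. Since $\nabla_x(\chi_n\phi^*_N) - \nabla_x\phi^*_N = (\chi_n - 1)\nabla_x\phi^*_N + \phi^*_N\nabla_x\chi_n$, the first summand tends to $0$ in $\|\cdot\|_\Wcal$ by dominated convergence ($0\le\chi_n\le1$, $\chi_n\to1$), and for the second, using $|\phi^*_N|\le N$,
\[\int_F |\phi^*_N(z)|^2\,\nabla_x\chi_n(z)^\top c_X(z)\nabla_x\chi_n(z)\,p(z)\,dz \;\le\; N^2\int_F \nabla_x\chi_n(z)^\top c_X(z)\nabla_x\chi_n(z)\,p(z)\,dz \;\longrightarrow\; 0\]
by Assumption~\ref{ass:conditions}\ref{item:test_functions}. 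Hence $\|\chi_n\phi^*_N - \phi^*_N\|_\Wcal\to 0$ as $n\to\infty$, and $\chi_n\phi^*_N$ is bounded with support in a fixed compact subset of $F$. This is the one place where the structural assumptions on $(c,p)$ are genuinely used, and I expect it to be the crux of the argument --- although, owing to Step 1, it collapses to the one-line estimate above once the right sequence of truncations has been set up.

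\emph{Step 3 (mollification) and conclusion.} With $N,n$ fixed, $w := \chi_n\phi^*_N$ satisfies $w,\nabla_x w\in L^2(F)$ (the latter because $\nabla_x\phi^*\in L^q_{\mathrm{loc}}$ and $\chi_n$ is smooth) and is supported in a compact subset of $F$. Extending $w$ by zero to $\R^{d+m}$ and convolving with a standard mollifier $\eta_k$ yields, for $k$ large, $w*\eta_k\in C_c^\infty(F)$ with $\nabla_x(w*\eta_k) = (\nabla_x w)*\eta_k\to\nabla_x w$ in $L^2(F;\R^d)$; since all of these are supported in one fixed compact subset of $F$, on which $c_X$ and $p$ are bounded above and below, this implies $\|w*\eta_k - w\|_\Wcal\to 0$ as $k\to\infty$. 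Finally I would combine the three limits by a diagonal argument, choosing $N_n, m_n, k_n\to\infty$ and setting $\phi_n := (\chi_{m_n}\phi^*_{N_n})*\eta_{k_n}\in C_c^\infty(F)$ so that $\|\phi_n - \phi^*\|_\Wcal\to 0$, which is the assertion of the lemma.
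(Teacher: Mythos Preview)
Your proof is correct and follows a genuinely different route from the paper's. The paper first mollifies $\phi^*$ (extended by zero) on all of $\R^{d+m}$ to obtain smooth $\phi^n$, then approximates $\phi^n|_{F_k}$ by $C_c^\infty(F_k)$ functions via a Sobolev density argument, and finally handles the tail $\|\phi^*\|_{\Wcal(F\setminus F_k)}$ by monotone convergence; Assumption~\ref{ass:conditions}\ref{item:test_functions} plays no role in that proof. You instead reverse the order --- truncate values, then truncate support using the explicit cutoffs $\chi_n$ from Assumption~\ref{ass:conditions}\ref{item:test_functions}, and mollify only at the end on a fixed compact set. Your approach makes Assumption~\ref{ass:conditions}\ref{item:test_functions} do real work here: once Step~1 has capped $|\phi^*_N|\le N$, the spatial cutoff in Step~2 collapses to the one-line estimate $\int |\phi^*_N|^2|\nabla_x\chi_n|^2_{c_X}p\le N^2\int|\nabla_x\chi_n|^2_{c_X}p\to 0$, whereas without the value truncation the term $\phi^*\nabla_x\chi_n$ would be uncontrollable. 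Both arguments ultimately rest on the same finite-energy bound $\|\phi^*\|_\Wcal<\infty$ and the local $L^q$ integrability of $\nabla_x\phi^*$ from Lemma~\ref{lem:variational_technical}; yours is arguably more self-contained in that every step is a direct dominated-convergence or mollification computation with no appeal to abstract density statements.
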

\begin{proof}
First we extend $\phi^*$ to all of $\R^{d+m}$ by setting $\phi^*(z) = 0$ if $z \not \in F$. Next let $(\eta_\epsilon)_{\epsilon > 0}$ be a standard mollifier on $\R^{d+m}$ and define $\phi^n = \phi^**\eta_{1/n}$. Clearly $\phi^n \in C_c^\infty(\R^{d+m})$ and so its restriction to $F_k$ belongs to $W^{1,2}(F_k)$ for every $k \in \N$. By density of $C_c^\infty(F_k)$ in $W^{1,2}(F_k)$ we can find functions $\phi^n_{j,k} \in C_c^\infty(F_k)$ such that
\begin{equation}
    \label{eqn:Fk_converge}
    \lim_{j \to \infty} \|\phi^n_{j,k} - \phi^n\|_{W^{1,2}(F_k)} = 0.
\end{equation}
Next, to simplify the exposition for any open set $U \subset F$ and function $\phi: F \to \R$ weakly differentiable in $x$ we will write
\[\|\phi\|_{\Wcal(U)} = \int_U \nabla_x \phi(z)^\top c_X(z)\nabla_x\phi(z)p(z)dz.\] Then it follows from \eqref{eqn:Fk_converge} that
\begin{align} 
\lim_{j \to \infty} \|\phi^n_{k,j} - \phi^n\|_{\Wcal(F_k)} \leq \sup_{z \in F_k}\lambda_{\max}(c_X(z))p(z)\lim_{j \to \infty}\|\nabla_x\phi^n_{j,k}-\nabla_x\phi^n\|_{L^2(F_k;\R^d)} = 0.
\end{align}  Next, note that because $\phi^*$ is itself continuously differentiable in $x$, with $\nabla_x\phi^* \in L^2(F_k;\R^d)$ courtesy of Lemma~\ref{lem:variational_technical}, we have for every $i=1,\dots,d$ that $\partial_{x_i} \phi^n = (\partial_{x_i}\phi^*)* \eta_{1/n}$, which by standard properties of mollification (see e.g. \cite[Theorem~~C.5.7]{evanspartial2010}) converges to $\partial_{x_i}\phi^*$ as $n \to \infty$ in $L^2(F_k)$. It then follows, again using uniform boundedness of $c_Xp$ on $F_k$, that
$\lim_{n \to \infty}\|\phi^n - \phi^*\|_{\Wcal(F_k)} = 0.$
Finally, note by monotone convergence, that $\lim_{k \to \infty} \|\phi^*\|_{\Wcal(F\setminus F_k)} = 0$. As such given a tolerance $\epsilon > 0$ we first choose $k$, then $n = n(k)$ and then $j = j(n,k)$ large enough so that
\[\|\phi^*\|_{\Wcal(F\setminus F_k)} < \frac{\epsilon}{3}, \qquad \|\phi^* - \phi^n\|_{\Wcal(F_k)} < \frac{\epsilon}{3}, \qquad \|\phi^n - \phi^n_{j,k}\|_{\Wcal(F_k)} < \frac{\epsilon}{3}.\]
Then we see by the triangle inequality that
\begin{align}
\|\phi^* - \phi^n_{j,k}\|_{\Wcal(F)} & = \|\phi^*\|_{\Wcal(F\setminus F_k)} + \|\phi^* - \phi^n_{j,k}\|_{\Wcal(F_k)} \\
& \leq \|\phi^*\|_{\Wcal(F\setminus F_k)} + \|\phi^* - \phi^n\|_{\Wcal(F_k)} + \|\phi^n -  \phi^n_{j,k}\|_{\Wcal(F_k)} < \epsilon.
\end{align}
Since each $\phi^n_{j,k} \in C_c^\infty(F)$ this completes the proof.
\end{proof}

\begin{proof}[Proof of Theorem~\ref{thm:main}] We set $\theta^n_t = \nabla_x \phi_n(Z_t)$, where $\phi_n$ is as in Lemma~\ref{lem:approx}. Since $\phi_n \in C_c^\infty(F)$ we can apply It\^o's formula and the steps in Section~\ref{sec:heuristic} to obtain for any $\P \in \Pcal$ that
\begin{align}
    g(\theta^n;\P) & = \frac{1}{2}\int_F \xi(z)^\top c_X(z)\xi(z)p(z)dz - \frac{1}{2}\int_F (\nabla_x \phi_n(z)-\xi(z))^\top c_X(z)(\nabla_x \phi_n(z) - \xi(z))p(z)dz \\
    & = \int_F \nabla_x\phi_n(z)^\top c_X(z)\xi(z)p(z)dz - \frac{1}{2}\int_F \nabla_x\phi_n(z)^\top c_X(z)\nabla_x \phi_n(z)p(z)dz \\
    & = \int_F \nabla_x\phi_n(z)^\top c_X(z)\nabla_x \phi^*(z)p(z)dz - \frac{1}{2}\int_F \nabla_x\phi_n(z)^\top c_X(z)\nabla_x \phi_n(z)p(z)dz,
\end{align} 
where in the final equality we used that $\phi^*$ is a weak solution to the Euler-Lagrange equation \eqref{eqn:Euler-Lagrange_xi}. This leads us to the lower bound
\begin{align} 
\lambda_{\Pcal} \geq \lim_{n \to \infty} \inf_{\P \in \Pcal} g(\theta^n;\P) = \frac{1}{2}\int_F \nabla_x \phi^*(z)^\top c_X(z)\nabla_x \phi^*(z)p(z)dz,
\end{align}
where we used Lemma~\ref{lem:approx} to compute the limit. 

To obtain the upper bound we use the measure $\P^*$ constructed in Proposition~\ref{prop:worst_case} (with any initial value $z \in F$). As shown in the proof of Proposition~\ref{prop:worst_case}, $\theta^*$ is growth-optimal under $\P^*$ with asymptotic growth rate derived in \eqref{eqn:worst_case_growth}. Hence,
\[\lambda_{\Pcal} \leq  \sup_{\theta \in \Theta} g(\theta;\P^*) = g(\theta^*;\P^*) = \frac{1}{2}\int_F \nabla_x \phi^*(z)^\top c_X(z)\nabla_x \phi^*(z)p(z)dz.\]
This establishes the robust growth rate formula so it now just remains to show that $\theta^*$ achieves this same asymptotic growth rate under every $\P \in \Pcal_0$. To this end we fix $\P \in \Pcal_0$ and note that for any $\theta \in \Theta$ we have
\[\log V^\theta_T = \int_0^T \theta_t^\top dX_t - \frac{1}{2}\int_0^T \theta_t^\top c(Z_t)\theta_t dt.\] Now taking $\theta^*$ and $\theta_n$ from the first part of this proof we see that
\begin{align} \log V_T^{\theta^*} = \log V_T^{\theta_n}    - \frac{1}{2}\int_0^T (\nabla_x \phi^*)^\top c_X\nabla_x \phi^*(Z_t)dt & + \frac{1}{2} \int_0^T \nabla_x \phi_n^\top c_X\nabla_x  \phi_n(Z_t)dt \qquad \label{eqn:opt_wealth_easy} \\ & \qquad   + \int_0^T(\theta^*_t - \theta^n_t)^\top dX_t \label{eqn:opt_wealth_hard}.
\end{align}
We now divide by $T$ and send $T \to \infty$ and then $n \to \infty$. The growth rate invariance property of $\theta_n$ together with the ergodic property \eqref{eqn:ergodic} and Lemma~\ref{lem:approx} shows that the terms on the right hand side of \eqref{eqn:opt_wealth_easy} converge $\P$-a.s.\ to $\lambda_{\Pcal}$. To complete the proof it suffices to show that
\begin{equation} \label{eqn:stoch_int_growth}
  \lim_{n \to \infty}\sup\Big\{\gamma \in \R: \liminf_{T \to \infty} \frac{1}{T}\int_0^T(\theta^*_t - \theta_t^n)^\top dX_t\geq \gamma \Big\} = 0.
\end{equation}
To this end we recall the dynamics of $X$ under $\P$, which leads us to the estimates
\begin{equation} \label{eqn:stoch_int_bound}
\left|\frac{1}{T}\int_0^T (\theta^*_t - \theta^n_t)^\top dX_t\right| \leq\left|\frac{1}{T}\int_0^T (\nabla_x\phi^*(Z_t)-\nabla_x \phi_n(Z_t))^\top c_X(Z_t) b^\P_{X,t}dt\right| +\frac{1}{T}|L^n_T|, 
\end{equation}
where $L^n_T := \sum_{i=1}^d\sum_{j=1}^{d+m}\int_0^T (\partial_i\phi^*(Z_t)-\partial_i \phi_n(Z_t)) c^{1/2}_{ij}(Z_t)dW_{j,t}$ is a local martingale. We have that 
\[\lim_{T \to \infty}\frac{1}{T}[L^n]_T = \int_F (\nabla_x \phi^*(z) - \nabla_x \phi_n(z))^\top c_x(z)(\nabla_x \phi^*(z) -\nabla_x \phi_n(z))p(z)dz < \infty,\]
so by \cite[Lemma~1.3.2]{fernholz2002stochastic}, $\lim_{T \to \infty} \frac{1}{T}L^n_T = 0$. For the drift term we use Cauchy--Schwarz to obtain
\begin{align}\bigg|\frac{1}{T}\int_0^T & (\nabla_x\phi^*(Z_t)-\nabla_x \phi_n(Z_t))^\top c_X(Z_t) b^\P_{X,t})dt\bigg| \\&  \leq \left(\frac{1}{T}\int_0^T(\nabla_x \phi^* - \nabla_x \phi_n)^\top c_X(\nabla_x \phi^*-\nabla_x \phi_n)(Z_t)dt\right)^{\frac{1}{2}} \left(\frac{1}{T}\int_0^T (b_{X,t}^\P)^\top c_X(Z_t)b_{X,t}^\P dt\right)^{\frac{1}{2}} \label{eqn:Pcal0_estimate}.
\end{align}
By the ergodic property \eqref{eqn:ergodic} and Lemma~\ref{lem:approx} we have that the first term on the right hand side of \eqref{eqn:Pcal0_estimate} tends $\P$-a.s.\ to zero as $T$ and then $n$ goes to infinity. The condition \eqref{eqn:bP_condition}, which holds here since $\P \in \Pcal_0$, ensures that the second term in \eqref{eqn:Pcal0_estimate} remains finite on a set of strictly positive measure when sending $T \to \infty$. This establishes \eqref{eqn:stoch_int_growth} and completes the proof. 
\end{proof}

\bibliographystyle{plain}
\bibliography{References}

\end{document}